
\documentclass[11pt,twoside]{article}

\usepackage[hmargin=1.25in,vmargin=1in]{geometry}
\usepackage[utf8]{inputenc}
\usepackage{graphicx,amsmath,stmaryrd,cite,enumerate}
\usepackage{jeffe}

\numberwithin{figure}{section}
\numberwithin{equation}{section}
\newtheorem{theorem}{Theorem}[section]
\newtheorem{lemma}[theorem]{Lemma}
\newtheorem{claim}[theorem]{Claim}
\newtheorem{corollary}[theorem]{Corollary}
\newtheorem{conjecture}[theorem]{Conjecture}

\usepackage[charter]{mathdesign}
\usepackage{sourcesanspro,inconsolata,eucal,microtype}

\parskip 0.25ex plus 0.25ex minus 0.25ex
\widowpenalty5000
\clubpenalty5000

\usepackage{caption}
\captionsetup{font={footnotesize,sf},labelfont=bf,labelsep=period}

\newif\ifneedsubqed
\global\needsubqedfalse
\def\subQED{\ensuremath{{\lozenge}}}
\def\subqed{\ifneedsubqed\markatright{\subQED}\global\needsubqedfalse\else\null\fi}
\newproof{subproof}{Proof\global\needsubqedtrue}[\subqed]


\def\EMPH#1{\textbf{\textcolor{BrickRed}{\emph{\boldmath #1}}}}
\def\coloneq{\mathrel{\mathop:}=}	

\def\Map{\Gamma}
\def\FaceUnion{U}
\def\Sleeve{S}
\def\Sbdry{\sigma}

\def\South#1{{#1}'}
\def\NewSouth#1{{#1}''}

\def\InitMap{\Map}
\def\SouthMap{\Map'}
\def\NewSouthMap{\Map''}

\def\InitZ{z}
\def\SouthZ{z'}
\def\NewSouthZ{z''}

\def\Bar{\beta}
\def\PreBar{B}

\def\NorthF{f_{\textsf{north}}}
\def\AboveF#1{f_{\textsf{above}}(#1)}
\def\BelowF#1{f_{\textsf{below}}(#1)}
\def\SouthF{f_{\textsf{south}}}

\def\Dn#1{{#1}^{\shortdownarrow}}
\def\Rt#1{{#1}^{\shortrightarrow}}
\def\DnVee#1{{#1}^{\scriptscriptstyle\boldsymbol{\ssearrow\!\sswarrow}}}

\def\BDn#1{{#1}^{\pmb{\shortdownarrow}}}
\def\BRt#1{{#1}^{\pmb{\shortrightarrow}}}
\def\BDnVee#1{{#1}^{\scriptscriptstyle\pmb{\ssearrow\!\sswarrow}}}

\pagestyle{myheadings}
\markboth{Shelling and Sinking Graphs on the Sphere}{Christian Howard and Jeff Erickson}

\title{Shelling and Sinking Graphs on the Sphere%
\thanks{An extended abstract of this paper was presented at the 41st International Symposium on Computational Geometry~\cite{eh-ssgs-25}.}}

\author{%
\href{http://jeffe.cs.illinois.edu}{Jeff Erickson}
\qquad\qquad
\href{https://christianjhoward.me/}{Christian Howard}
\\[1ex]
University of Illinois Urbana-Champaign}

\date{August 1, 2025}  

\begin{document}
\maketitle

\begin{abstract}\noindent
We describe a promising approach to efficiently morph shortest-path embeddings of planar graphs on the sphere, extending earlier approaches of Awartani and Henderson [\emph{Trans.~AMS} 1987] and Kobourov and Landis [\emph{JGAA} 2006]. Specifically, we describe two methods to morph spherical triangulations by moving their vertices along longitudes into the southern hemisphere; we call a triangulation \emph{sinkable} if such a morph exists.  Our first method generalizes a longitudinal shelling construction of Awartani and Henderson; a triangulation is sinkable if a specific orientation of its dual graph is acyclic.  We describe a simple polynomial-time algorithm to find a longitudinally shellable rotation of a given spherical triangulation, if one exists; we also construct a spherical triangulation that has no longitudinally shellable rotation.  Our second method is based on a linear-programming characterization of sinkability.  By identifying its optimal basis, we show that this linear program can be solved in $O(n^{\omega/2})$ time, where $\omega$ is the matrix-multiplication exponent, assuming the underlying linear system is non-singular.  In addition to these main results, we describe a reduction from morphing shortest-path embeddings of 3-connected planar graphs on the sphere to morphing triangulations, and we describe an efficient algorithm that constructs morphs where each intermediate edge has at most one bend.  Finally, we pose several conjectures and describe experimental results that support them.
\end{abstract}

	
\section{Introduction}

A \emph{morph} between two planar straight-line graphs is a continuous deformation from one to the other, such that all edges in all intermediate graphs are interior-disjoint line segments.  Planar graph morphing has many applications in graphics, animation, visualization, and geometric modeling, as well as connections to fundamental questions in low-dimensional topology.

There are two state-of-the-art approaches for morphing planar graphs.  The first is the barycentric interpolation method of Floater, Gotsman, and Surazhsky \cite{fg-mti-99,gs-gipm-01,sg-cmcpt-01,sg-msfuo-01,sg-imct-03,ekp-ifmpg-03}, which is based on an extension by Floater \cite{f-psast-97,f-ptsda-98} of Tutte's classical spring-embedding theorem \cite{t-hdg-63}.  These algorithms compute an implicit representation of a smooth morph, any intermediate stage of which can be constructed in $O(n^{\omega/2}) = O(n^{1.1864})$ time, where $\omega < 2.3728$ is the matrix-multiplication exponent.  The second method combines an edge-collapsing strategy originally proposed by Cairns \cite{c-dprc-44,c-idgc2-44,h-dphcp-74,t-dpg-83} with more recent algorithms for convex hierarchical planar graph drawing \cite{cgt-cdg23-96,hn-cdhpg-10,c-abcht-19,kklss-cimpg-19,k-cdhgl-21}.  This method has led to several efficient algorithms for constructing explicit representations of piecewise-linear morphs \cite{aabcd-hmpgd-17,kklss-cimpg-19,k-cdhgl-21,el-ptmme-23}. %
The fastest of these algorithms, due to Klemz \cite{k-cdhgl-21}, computes a morph consisting of $O(n)$ stages in $O(n^2)$ time, where in each stage, all vertices move along parallel lines at constant speeds.  A recent algorithm of Erickson and Lin computes explicit piecewise-linear morphs using barycentric interpolation in $O(n^{1+\omega/2})$ time \cite{el-ptmme-23}.
Algorithms are also known for several variants of planar morphing \cite{lp-mpgdb-11, bhl-msdpt-19, bls-mpgdo-24, ckkrw-mrd-23, gv-ompod-18, befklow-mpgdt-23}.

Both planar morphing approaches have recently been generalized to geodesic graphs on the flat torus \cite{celp-hmgt-21,el-ptmme-23,lwz-dsgtf-21}; even more recently, Luo, Wu, and Zhu generalized the barycentric interpolation method to arbitrary negative-curvature surfaces of higher genus \cite{lwz-dsgtg-23,lwz-sgtc-24}. 

\subsection{What About the Sphere?}

In light of the long history of results for morphing planar graphs and their recent generalizations to higher-genus surfaces, it is surprising how little is known about morphing graphs on the sphere.  Although embeddings on the sphere are \emph{topologically} equivalent to embeddings in the plane, the different geometries of the two surfaces induce significantly different behavior.

In 1944, Cairns \cite{c-idgc2-44} proved that any two isomorphic shortest-path triangulations of the sphere are connected by a continuous family of shortest-path triangulations, using essentially the same edge-contraction strategy that he used for planar triangulations (but with more complex case analysis).  A direct translation of Cairns's spherical morphing proof leads to an exponential-time algorithm; unlike his planar result, this is still the fastest algorithm known.  Neither of the primary tools that underlie more efficient planar morphing algorithms---barycentric embeddings \cite{f-psast-97,f-ptsda-98,t-hdg-63} and convex hierarchical graph drawing \cite{cgt-cdg23-96,hn-cdhpg-10,c-abcht-19,kklss-cimpg-19,k-cdhgl-21}---have appropriate generalizations to spherical graphs. In fact, as far as we are aware, this is the \emph{only} algorithm known for morphing arbitrary spherical triangulations.  

More efficient morphing algorithms are known for a few special cases of spherical triangulations.  Almost all of these algorithms operate by moving vertices along longitudes into the southern hemisphere, projecting the resulting “southern” triangulation into the plane, and applying a planar morphing algorithm.  We call a sphere triangulation \EMPH{sinkable} if it can be morphed along longitudes into the southern hemisphere.

Awartani and Henderson \cite{ah-sgts-87} describe an algorithm to morph spherical triangulations that satisfy a certain longitudinal shelling condition using this strategy.  A spherical triangulation is \EMPH{longitudinally shellable} if its faces can be ordered so that the union of any prefix of faces has connected intersection with every longitude.   In short, Awartani and Henderson prove that every longitudinally shellable triangulation is sinkable.  They also prove that any triangulation with a \emph{longitudinal seam}, meaning there is a longitude that does not cross the interior of any edge, is longitudinally shellable.  (We study Awartani and Henderson's longitudinal shelling condition in more detail in Section~\ref{S:shelling}.)

Kobourov and Landis \cite{kl-mpgss-06} describe an algorithm to morph between \emph{Delaunay} triangulations via longitudinal morphs; their algorithm easily generalizes to \EMPH{coherent} triangulations, which are central projections of arbitrary convex polyhedra.  Every coherent triangulation is longitudinally shellable and therefore sinkable.  The same method is also implicit in Richter-Gebert's proof \cite[Theorem 13.3.3]{r-rsp-96} of classical theorems of Eberhard \cite{e-mp-1891} and Steinitz \cite{sr-vtp-34} describing morphs between isomorphic convex polyhedra.

Awartani and Henderson \cite[Figure 3.1]{ah-sgts-87} also give an example of a triangulation that is not sinkable; it is not possible to move the vertices of their triangulation into the southern hemisphere along longitudes without inverting at least one face.

\subsection{Simple Examples}
\label{SS:examples}

  
The simplest longitudinally unshellable or unsinkable triangulations are central projections of sufficiently twisted \emph{Schönhardt polyhedra} \cite{s-uzdt-28}.  Let $S_0$ be an equilateral triangular prism, centered at the origin, whose triangular faces are parallel to the $xy$-plane, and whose quadrilateral faces are triangulated so that every vertex has degree $4$.  For any angle $\theta$, the Schönhardt polyhedron~$S_\theta$ is obtained by rotating the top triangular face of $S_0$ by $\theta$ and retaining the same facial structure.  Let~$\bar{S}_\theta$ denote the central projection of $S_\theta$ onto the unit sphere, as shown in Figure~\ref{F:schonhardt}.

When the twisting angle $\theta$ is negative, the Schönhardt polyhedron $S_\theta$ is convex, so its projection~$\bar{S}_\theta$ is coherent, and therefore longitudinally shellable, and therefore sinkable.  When $\theta$ is positive, the diagonals of the rectangular faces buckle inward, making the octahedron $S_\theta$ non-convex; Supnick proved that the resulting spherical triangulations $\bar{S}_\theta$ are the simplest non-coherent triangulations \cite{s-pdp-48, s-pdp2-51}. (See also Connelly and Henderson \cite{ch-c3sis-80} and De Loera, Rambau, and Santos \cite[Chapter 7.1]{lrs-tsaa-10}.)  Our Lemma~\ref{L:shellable} implies that $\bar{S}_\theta$ is not longitudinally shellable if $\theta>0$.  Finally, our algorithms in Section~\ref{S:sink} imply that $\bar{S}_\theta$ is sinkable if $0 < \theta < \pi/6$ but  unsinkable if $\theta \ge \pi/6$.

\begin{figure}[htb]
\centering\footnotesize\sffamily
\begin{tabular}{c@{\qquad}c@{\qquad}c}
\includegraphics[scale=0.4]{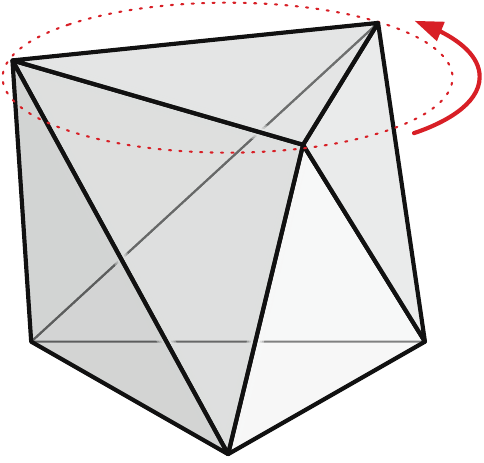} & 
\includegraphics[scale=0.4]{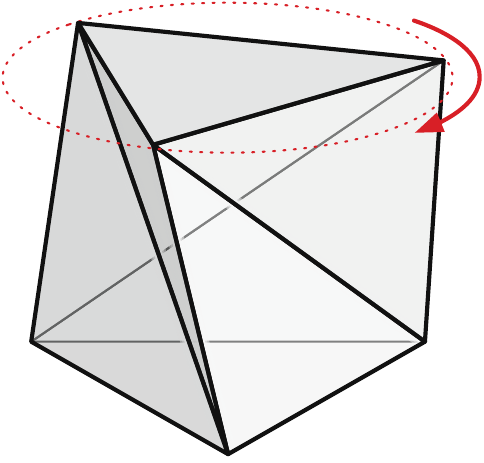} & 
\includegraphics[scale=0.4]{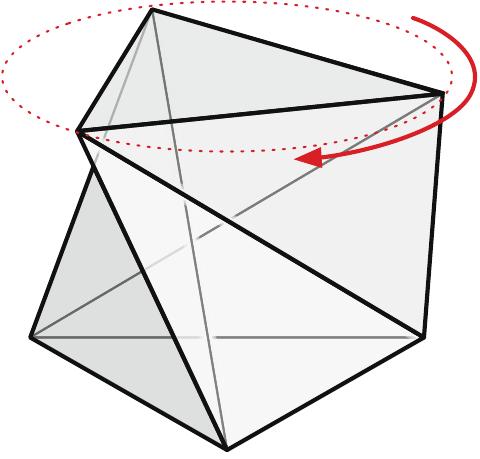} \\[1ex]
\includegraphics[scale=0.275,page=1]{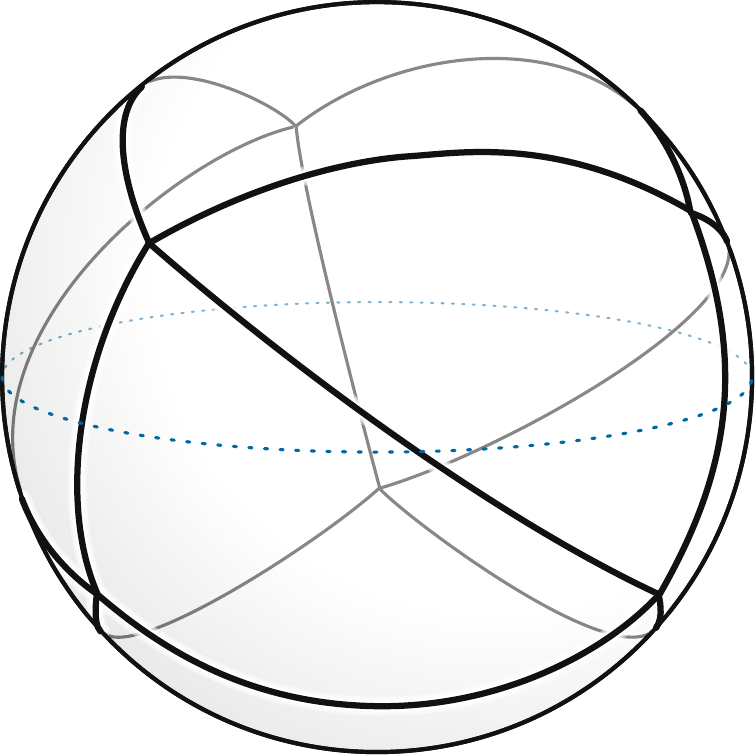} & 
\includegraphics[scale=0.275,page=2]{Fig/schonhardt-spheres} & 
\includegraphics[scale=0.275,page=3]{Fig/schonhardt-spheres} \\[1ex]
\includegraphics[scale=0.35]{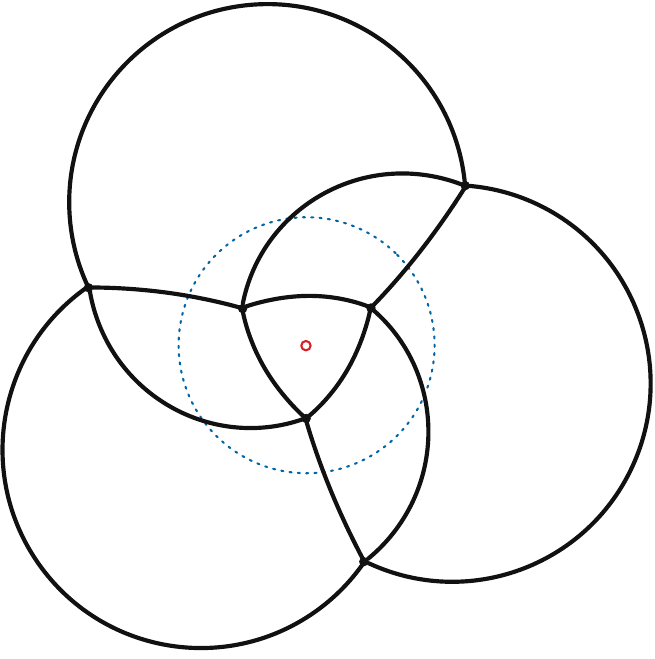} &
\includegraphics[scale=0.35]{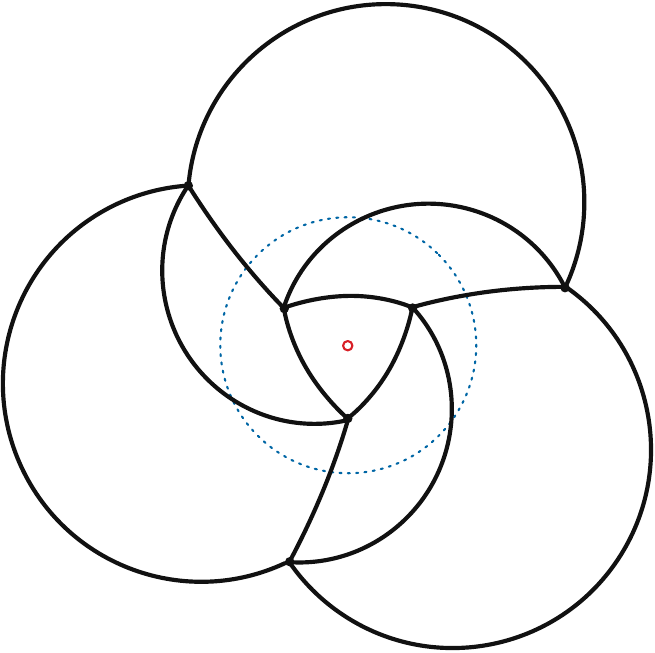} &
\includegraphics[scale=0.35]{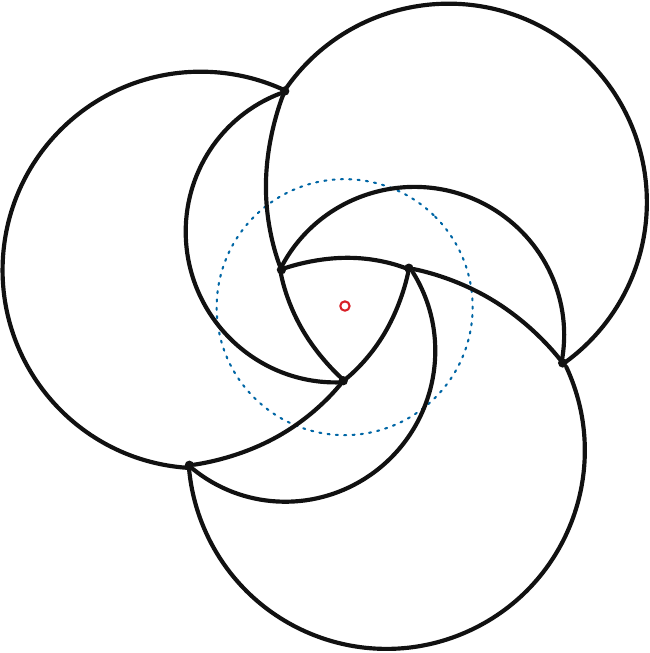} \\
 (a) & (b) & (c)
\end{tabular}
\caption{Three Schönhardt polyhedra, the corresponding spherical triangulations, and their stereographic projections into the plane. Depending on the twisting angle, these triangulations are (a) coherent and therefore longitudinally shellable, (b)~sinkable but not longitudinally shellable, or (c) not sinkable.}
\label{F:schonhardt}
\end{figure}

Another natural one-parameter family of twisted triangulations are projections of non-convex icosahedra that Douady dubbed \emph{six-beaked shaddocks} \cite{d-sb-71}; this family includes Jessen's orthogonal icosahedron \cite{j-oi-67} and is closely related to  Fuller's \emph{jitterbug} mechanism \cite{f-segt-75,e-j-87}.  Following Fuller, these icosahedra can be constructed by triangulating the square facets of a cuboctahedron so that every vertex has degree $5$, and then simultaneously twisting every equilateral triangular facet; the triangulated square facets buckle inward when the twisting angle is positive.  Every positively twisted shaddock triangulation is non-coherent \cite{d-sb-71}.  However, if one pair of equilateral triangular facets is normal to the $z$-axis, non-coherent shaddock triangulations can be longitudinally shellable, sinkable but longitudinally unshellable, or unsinkable, depending on the twisting angle.  See Figure \ref{F:morph-example}.%
\footnote{Schönhardt polyhedra and six-beaked shaddocks are also classical examples of polyhedra that cannot be triangulated without additional vertices \cite{s-uzdt-28,bc-np-16}.  The Schönhardt polyhedron $S_{\pi/6}$ and Jessen's orthogonal icosahedron are \emph{also} classical examples of infinitesimally flexible or “shaky” polyhedra \cite{w-skwba-65,gk-mfjoi-16,m-lbrcp-94,g-ups-78}.}  

\begin{figure}[htb]
\centering\footnotesize\sffamily
\begin{tabular}{cccc}
\includegraphics[scale=0.3]{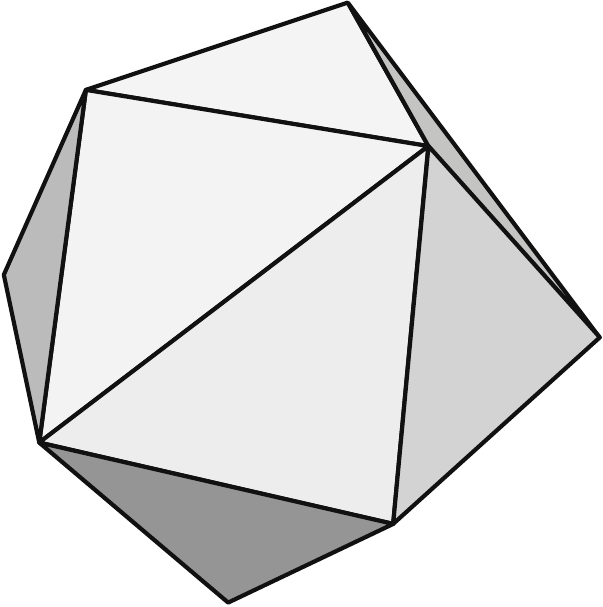} & 
\includegraphics[scale=0.3]{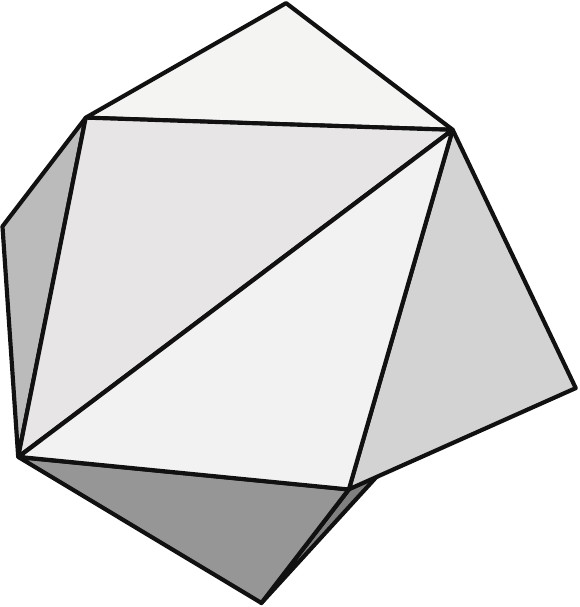} & 
\includegraphics[scale=0.3]{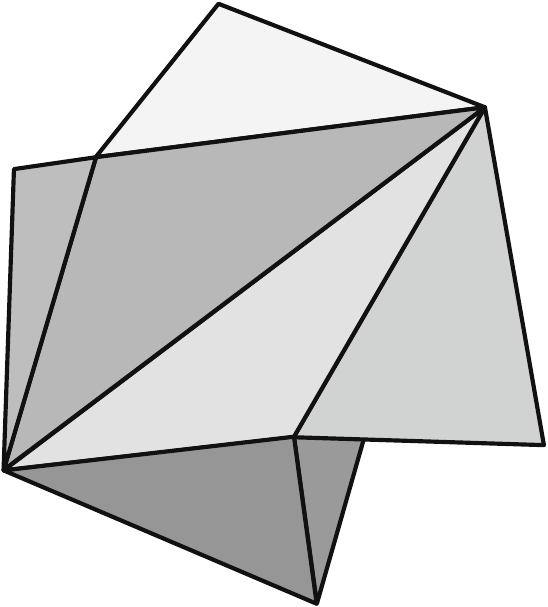} &
\includegraphics[scale=0.3]{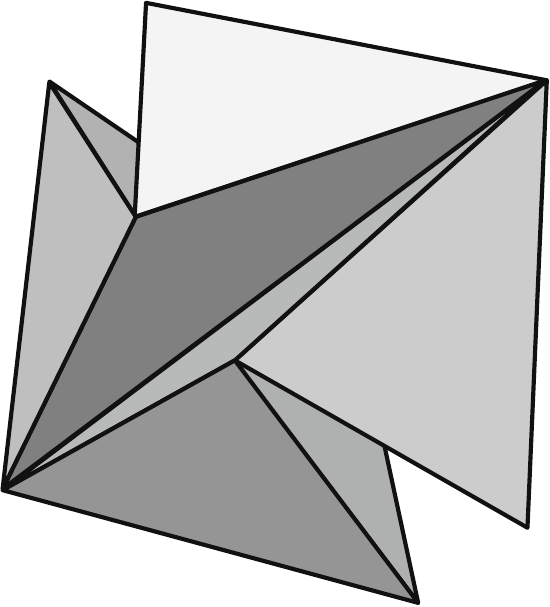} \\
\includegraphics[scale=0.3]{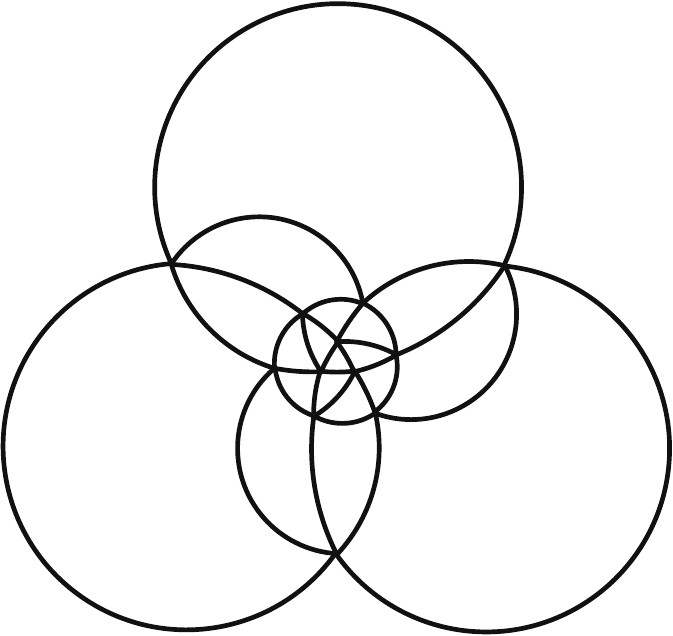} &
\includegraphics[scale=0.3]{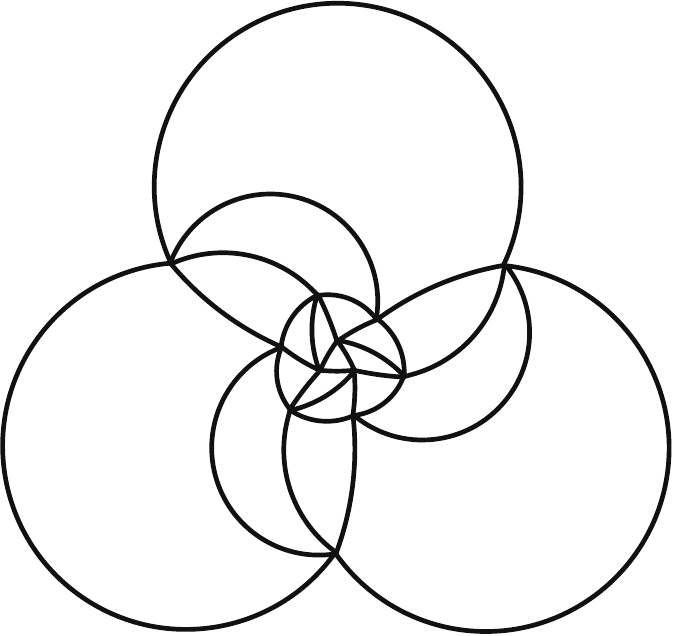} &
\includegraphics[scale=0.3]{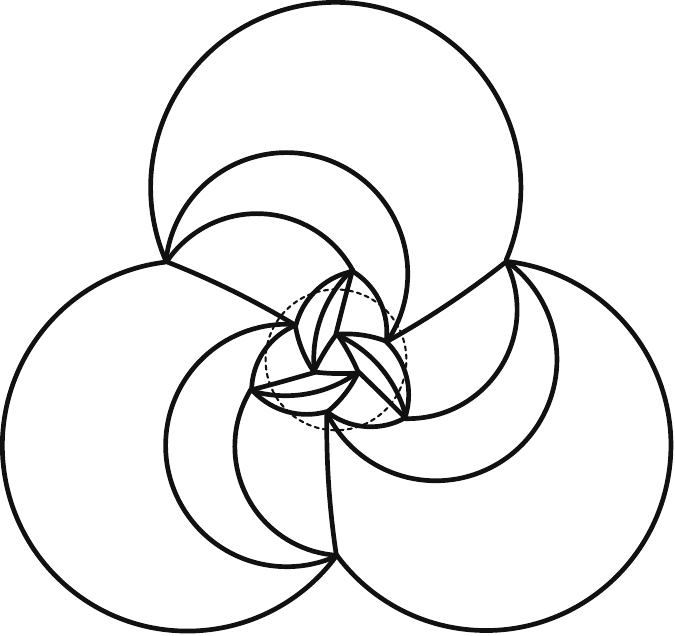} &
\includegraphics[scale=0.3]{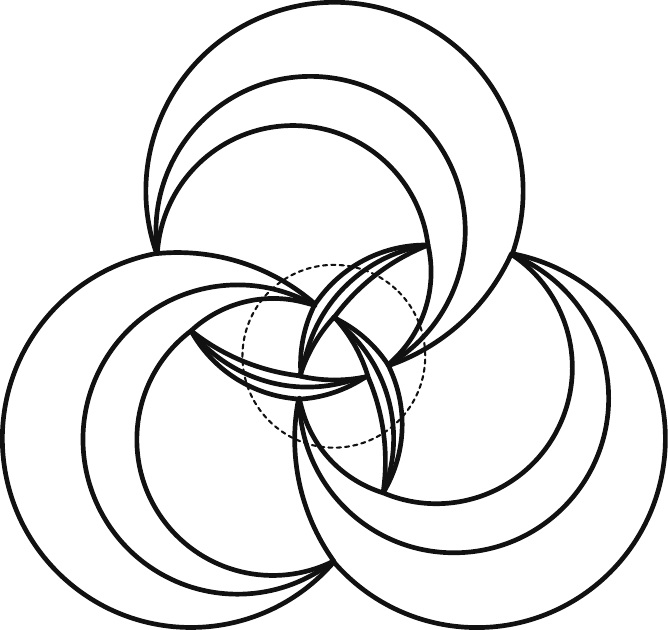} \\
 (a) & (b) & (c) & (d)
\end{tabular}
\caption{Four six-beaked shaddocks and stereographic projections of the corresponding spherical triangulations.  Depending on the twisting angle, these triangulations are (a) coherent, (b)~longitudinally shellable but not coherent, (c)~sinkable but not longitudinally shellable (Jessen's icosahedron), or (d) not sinkable.}
\label{F:shaddock}
\end{figure}

Both longitudinal shellability and sinkability depend on a specific choice of antipodal points as the north and south poles to define longitudes; neither property is invariant under rotations of the sphere.  For example, rotating any Schönhardt triangulation by $90^\circ$ around the $x$-axis results in a longitudinally shellable and therefore sinkable triangulation.

\subsection{Our Results}

We describe and study a promising approach to efficiently morph shortest-path triangulations on the sphere, extending the previous results of Awartani and Henderson \cite{ah-sgts-87} and Kobourov and Landis \cite{kl-mpgss-06}.  At a high level, we propose finding suitable rotations of the sphere that make the source and target triangulations both sinkable, moving vertices of the rotated triangulations into the southern hemisphere along longitudes, and finally reducing to a planar morphing problem.

We emphasize that we do \emph{not} develop this strategy into a complete algorithm; the existence of an efficient spherical morphing algorithm remains an open problem.  We also note that sinking is not \emph{required} to construct spherical morphs; for example, we can directly morph any Schönhardt or six-beaked-shaddock triangulation into any other by twisting the faces.

\medskip
We begin in Section \ref{S:background} by reviewing relevant definitions, describing our proposed morphing strategy in detail, and proving some preliminary results.  In particular, we prove that a triangulation $\Map$ is sinkable if and only if there is an isomorphic \emph{weak} triangulation $\SouthMap$ (intuitively, a triangulation in which some faces may be degenerate) whose vertices are on the same longitudes but below the equator.  We also show that any efficient algorithm for morphing spherical triangulations implies an efficient algorithm to morph shortest-path embeddings of 3-connected planar graphs on the sphere.

In Section~\ref{S:shelling}, we formally define the \emph{longitudinal shelling} condition introduced by Awartani and Henderson.  We provide several combinatorial characterizations, in terms of different directed versions of the triangulation or its dual graph, each of which can be tested in $O(n)$ time.  We also reiterate Awartani and Henderson's proof that every longitudinally shellable triangulations is sinkable.  Finally, by combining Awartani and Henderson's results with more recent planar morphing algorithms \cite{aabcd-hmpgd-17,k-cdhgl-21}, we describe an efficient algorithm to construct morphs in which each edge of each intermediate embedding has at most one \emph{bend}.

Rotating a longitudinally unshellable triangulation can make it longitudinally shellable.  In Section \ref{S:shellrot}, we present an algorithm that either finds a rotation of a given triangulation $\Map$ that is longitudinally shellable, or reports correctly that no such rotation exists, in $O(n^{5/2}\log^3 n)$ time. Our algorithm searches each of the $O(n^2)$ cells in the arrangement of great circles determined by the edges of $\Map$, using a dynamic reachability data structure of Diks and Sankowski~\cite{ds-tptc-07}.  We also construct a spherical triangulation that has no longitudinally shellable rotation.

In Section \ref{S:sink}, we describe an efficient algorithm to determine whether a given sphere triangulation is \emph{sinkable}.  First we prove that sinkability is equivalent to the feasibility of a linear program with $O(n)$ variables and $O(n)$ constraints.  By identifying the optimal basis for this linear program, we then show that it can be solved in $O(n^{\omega/2})$ time, assuming the underlying linear system is non-singular.

As a tool for building intuition, we implemented a suite of algorithms to construct  sphere triangulations, test for shellability and sinkability, construct Awartani--Henderson embeddings, and visualize longitudinal morphs.  In Section \ref{S:experiment}, we present results of some lightweight experiments with “ugly” spherical triangulations, which suggest that “in practice”, one can find a longitudinally shellable or sinkable rotation of a triangulation by testing only a small constant number of random rotations.

Finally, we conclude in Section \ref{S:outro} by presenting several directions for further research, including several conjectures suggested by our experimental results.

\section{Background and Preliminary Results}
\label{S:background}

\subsection{Cartography}

We consider drawings of graphs on the unit sphere $S^2 = \set{(x,y,z)\mid x^2+y^2+z^2=1}$.
The \EMPH{north pole} is the point $(0,0,1)$; the \EMPH{south pole} is the point $(0,0,-1)$; the \EMPH{equator} is the intersection of the unit sphere with the plane $z=0$.  A \EMPH{longitude} is an open great-circular arc whose endpoints are the north and south poles.  Every point except the poles lies on a unique longitude.

Throughout the paper, we fix an arbitrary simple maximal planar graph $G$ with $n\ge 4$ vertices, arbitrarily indexed from $1$ to $n$.  An \EMPH{embedding} of $G$ on the sphere is an injective map from $G$ to~$S^2$, or less formally, a drawing of $G$ where edges are arbitrary simple curves that intersect only at their shared endpoints.  Because $G$ is maximal planar, every embedding of $G$ on the sphere is a \EMPH{triangulation}, meaning every face is bounded by a cycle of three distinct edges.

Unless explicitly indicated otherwise, we consider only \EMPH{generic shortest-path} triangulations, meaning (1)~every edge is a great-circular arc with length less than $\pi$; (2)~no great circle contains more than two vertices, and (3) no great circle through the poles contains more than one vertex.  Condition (1) is the natural analogue on the sphere of straight-line embeddings in the plane.  Condition~(3) implies that vertices lie on distinct longitudes.

Classical theorems of Steinitz \cite{s-pr-1916,sr-vtp-34,g-gppg-07} and Whitney \cite{w-cgcg-32,b-sepwu-21} imply that~$G$ can be embedded as a generic shortest-path triangulation, and that this embedding is unique up to homeomorphism.  A \EMPH{face of $G$} is a face of this essentially unique embedding.  We sometimes identify each face as a triple $(i,j,k)$ of vertices (or their indices) in counterclockwise order around its interior.

Every generic shortest-path triangulation has a unique \EMPH{north face} that contains the north pole in its interior and a unique \EMPH{south face} that contains the south pole in its interior.  The remaining \EMPH{non-polar} faces are of two types: An \EMPH{up-face} has one vertex (called its \EMPH{apex}) that is directly north of its opposite edge (called its \EMPH{base}); symmetrically, a \EMPH{down-face} has one vertex (its apex) directly south of its opposite edge (its base).  The \EMPH{legs} of a non-polar face $f$ are the edges of $f$ incident to the apex of $f$.  The faces immediately north and south of any vertex~$i$ are respectively denoted \EMPH{$\AboveF{i}$} and \EMPH{$\BelowF{i}$}.  For each vertex $i$, $\AboveF{i}$ is either the north face or a down-face, and $\BelowF{i}$ is either the south face or an up-face.  See Figure \ref{F:updown}.

\begin{figure}[htb]
\centering
\includegraphics[scale=0.4]{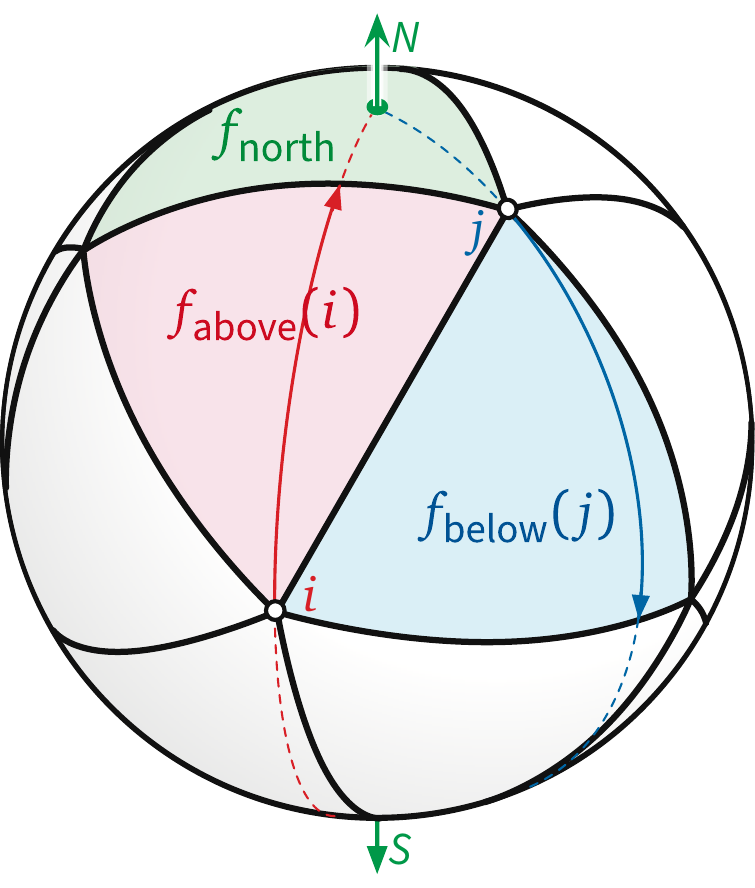}
\caption{An up-face $\AboveF{i}$, a down-face $\BelowF{j}$, and the north face $\NorthF = \AboveF{j}$ of a spherical triangulation.  Edge $ij$ is a leg of both $\AboveF{i}$ and $\BelowF{j}$.}
\label{F:updown}
\end{figure}

A face of a (generic) triangulation is \EMPH{everted} if it has area greater than $2\pi$, or equivalently, if it contains an open hemisphere, or equivalently, if it contains a pair of antipodal points.  Every triangulation contains at most one everted face.  We call a triangulation \EMPH{full} if no face is everted, and \EMPH{southern} if every vertex lies below the equator (which implies that the north face is everted).

A connected region $R$ on the sphere is \EMPH{$\theta$-monotone} if every longitude is either disjoint from~$R$ or has connected intersection with~$R$.

\subsection{Coordinates}
\label{SS:coords}

Throughout the paper, we represent spherical triangulations \emph{implicitly} as  projections of certain polyhedra onto the unit sphere.  Specifically, we represent each vertex using signed homogeneous coordinates \cite{s-opg-87,s-pcg-88,s-opgfg-91}---for any scalar $\lambda > 0$, the coordinate vectors $(x,y,z)$ and $(\lambda x, \lambda y, \lambda z)$ represent the same point on the  sphere.  We can freely rescale the vertices, each independently, without changing the underlying spherical triangulation; although it is never actually necessary, this rescaling is sometimes convenient for purposes of discussion and intuition.   For example, we can implicitly interpret any southern triangulation as a \emph{planar} triangulation by scaling vertices onto the tangent plane $z = -1$; this scaling is equivalent to the \emph{gnomonic} or \emph{central} projection $(x,y,z)\mapsto (-x/z, -y/z, -1)$.\footnote{Gnomonic projection should not be confused with \emph{stereographic} projection $(x,y,z) \mapsto (\frac{x}{1-z}, \frac{y}{1-z})$, which maps shortest paths on the sphere to circular arcs in the plane.  We use stereographic projections to visualize graphs on the sphere in Figures \ref{F:schonhardt}, \ref{F:shaddock}, and \ref{F:directed-graphs}, but they play no other role in our results.}  Awartani and Henderson \cite{ah-sgts-87} analyze longitudinal morphing (defined below) by implicitly scaling vertices onto the unit cylinder $x^2+y^2 = 1$.

Every full triangulation is the projection of a \emph{star-shaped} polyhedron whose visibility kernel contains the origin.  Every non-full triangulation is the projection of a polyhedron that is star-shaped except for one facet. 

\subsection{Longitudinal Morphing}

Two embeddings $\Map\colon G\to S^2$ and $\Map'\colon G\to S^2$ are \EMPH{isomorphic} if they define the same rotation system, or equivalently, if there is an orientation-preserving homeomorphism $H\colon S^2\to S^2$ such that $\Map' = h\circ \Map$.  Two isomorphic embeddings are \EMPH{$\theta$-equivalent} if each vertex of $G$ lies on the same longitude in both embeddings.

A \EMPH{homotopy} between two embeddings $\Map\colon G\to S^2$ and $\Map'\colon G\to S^2$ is a continuous function $H\colon [0,1]\times G\to S^2$ such that $H(0,\cdot) = \Map$ and $H(1,\cdot) = \Map'$.  Any pair of isomorphic embeddings~$\Map$ and $\Map'$ are connected by an \EMPH{isotopy}, which is a homotopy $H$ such that every intermediate drawing $\Map_t = H(t, \cdot)$ is an embedding.  (The edges of these intermediate embeddings can be arbitrary simple curves, not necessarily shortest paths.)  A~\EMPH{morph} is an isotopy in which the edges of every intermediate embedding~$\Map_t$ are shortest paths.  Finally, a \EMPH{longitudinal morph} is a morph in which vertices move only along their longitudes.

\begin{lemma}
\label{L:shell-sink}
Any two $\theta$-equivalent triangulations are connected by a longitudinal morph.
\end{lemma}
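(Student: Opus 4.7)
The plan is to realize the morph by linearly interpolating vertex heights in the cylindrical parameterization of Section~\ref{SS:coords}, and to argue that every intermediate drawing is a valid (possibly weak) embedding.

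First, using the homogeneous-coordinate freedom described in Section~\ref{SS:coords}, I would rescale each vertex~$i$ onto the cylinder $x^2+y^2=1$, writing it as $(\cos\theta_i,\sin\theta_i,h_i)$ with $h_i := \tan\phi_i$. Because $\Map_0$ and $\Map_1$ are $\theta$-equivalent, the longitudes $\theta_i$ are common to both, and only the scalars $h_i^0$ and $h_i^1$ differ. The key algebraic observation is that for every face $(i,j,k)$,
\[
\det\!\begin{pmatrix}\cos\theta_i & \sin\theta_i & h_i\\ \cos\theta_j & \sin\theta_j & h_j\\ \cos\theta_k & \sin\theta_k & h_k\end{pmatrix}
= h_i\sin(\theta_k-\theta_j)+h_j\sin(\theta_i-\theta_k)+h_k\sin(\theta_j-\theta_i),
\]
which is \emph{linear} in $(h_i,h_j,h_k)$ with coefficients depending only on the shared longitudes. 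Hence the signed volume of the tetrahedron $(0,v_i,v_j,v_k)$ is a fixed linear functional $L_f$ of the height vector~$h$.

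I would then define the candidate morph by linear interpolation $h_i(t) = (1-t)\,h_i^0 + t\,h_i^1$, letting $\Map_t$ place vertex~$i$ at the spherical normalization of $(\cos\theta_i,\sin\theta_i,h_i(t))$. Each vertex clearly traces its own longitude, so the morph is longitudinal by construction. The genericity condition that no great circle through the poles contains two vertices forces no pair of distinct vertices to lie on antipodal longitudes, so every edge of every $\Map_t$ is a uniquely defined shortest great-circular arc of length less than~$\pi$. For each face~$f$, linearity of $L_f$ in~$t$ implies that $L_f(h(t))$ either keeps a constant nonzero sign on $[0,1]$ or vanishes at a single interior time; so for all but finitely many~$t$, every face orientation is correct, and the polyhedron picture described in Section~\ref{SS:coords} certifies that $\Map_t$ is a generic triangulation with the same rotation system.

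The main obstacle I expect is handling the isolated times at which some $L_f$ vanishes. At such an instant, the three vertices of face~$f$ momentarily lie on a common great circle, and $f$ degenerates to a lune or hemisphere; however, because distinct longitudes are preserved, the three edges of $f$ remain pairwise interior-disjoint arcs on that great circle, so $\Map_t$ is still an injective map---a weak triangulation---and the isotopy and shortest-path requirements of a morph are still met. If needed, I would perturb the interpolation schedule slightly, decoupling the individual $h_i(t)$, to separate these degenerate instants across distinct times and faces, so that at every~$t\in[0,1]$ at most one face is degenerate and $\Map_t$ remains a valid embedding of~$G$ into~$S^2$ throughout.
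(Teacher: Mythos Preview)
Your overall setup---linear interpolation of heights with the observation that each face's signed volume is an affine function of $t$---is exactly the paper's approach.  But you miss the one-line observation that makes the argument clean, and the workaround you propose in its place does not hold up.

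Because $\Map_0$ and $\Map_1$ are \emph{isomorphic} shortest-path triangulations, every \emph{non-polar} face $(i,j,k)$ has the same (positive) orientation in both: $L_f(h^0)>0$ and $L_f(h^1)>0$.  An affine function that is positive at both endpoints of $[0,1]$ is positive on the entire interval, so $L_f(h(t))>0$ for all $t$.  No non-polar face ever degenerates; the ``isolated degenerate instants'' you worry about simply do not occur for these faces, and the perturbation scheme is unnecessary.  Your fallback claim that when three vertices become collinear on a great circle ``the three edges of $f$ remain pairwise interior-disjoint arcs'' is false in general: if $B$ lies on the shorter arc from $A$ to $C$, the shortest arc $AC$ contains $B$ and overlaps both $AB$ and $BC$.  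Distinct longitudes only guarantee the three points are distinct, not that none lies between the other two.

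What \emph{does} require a separate argument is the two polar faces, which you do not single out.  The north face can legitimately change orientation (become everted or uneverted) during the morph, so $L_{\NorthF}$ may cross zero.  The correct observation is that for any edge $ij$, the quantity $\det\bigl(\begin{smallmatrix}x_i&y_i\\x_j&y_j\end{smallmatrix}\bigr)$ is independent of $t$ and nonzero by genericity, so no edge ever passes through a pole; hence each pole stays in the interior of its polar face throughout, and those faces never collapse to arcs even when their signed volume vanishes.
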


\begin{proof}
Let $\Map$ and $\Map'$ be two $\theta$-equivalent triangulations.  By scaling coordinate vectors as described in Section \ref{SS:coords} if necessary, we can assume that each vertex has the same $x$- and $y$-coordinates in both embeddings.  Let $(x_i, y_i, z_i)$ and $(x_i, y_i, z'_i)$ denote the coordinate vectors of each vertex $i$ in~$\Map$ and $\Map'$, respectively.

We define a longitudinal morph by linearly interpolating the $z$-coordinate of each vertex from~$z_i$ to $z'_i$.  For each $t\in [0,1]$, let $\Map_t$ denote the shortest-path \emph{drawing} where each vertex~$i$ has coordinates $(x_i, y_i, (1-t)z_i + t\cdot z'_i)$.  To complete the proof, it remains only to show that each drawing $\Map_t$ is an embedding, or equivalently, that no face collapses to an arc during the morph.

For any three vertices $i,j,k$ and any $t\in[0,1]$, let $\vol_t(i,j,k)$ denote the determinant
\[
	\det\begin{pmatrix}
		x_i & y_i & (1-t)\cdot z_i + t\cdot z'_i \\[0.5ex]
		x_j & y_j & (1-t)\cdot z_j + t\cdot z'_j \\[0.5ex]
		x_k & y_k & (1-t)\cdot z_k + t\cdot z'_k
	\end{pmatrix}
\]
This volume determinant is a linear function of $t$; specifically,
\[
	\vol_t(i,j,k) ~=~ (1-t)\cdot \vol_0(i,j,k) ~+~ t\cdot \vol_1(i,j,k).
\]
Suppose $i,j,k$ are the vertices of a \emph{non-polar} face of $\Map$ in counterclockwise order, so that $\vol_0(i,j,k) > 0$.  Because $\Map$ and $\Map'$ are isomorphic, we have $\vol_1(i,j,k) > 0$; linearity immediately implies that $\vol_t(i,j,k) > 0$ for all $t\in[0,1]$.

Finally, for all $t$, the north pole lies in the interior of the north face of $\Map_t$, and the south pole lies in the interior of the south face of $\Map_t$.  Thus, these faces may become (un)everted, but they always have non-empty interiors.
\end{proof}

\begin{corollary}
\label{C:sink-south}
A triangulation $\Map$ is sinkable if and only if there is a southern triangulation that is $\theta$-equivalent to $\Map$.
\end{corollary}

We extend this argument slightly by considering \EMPH{weak triangulations}, which are drawings of~$G$ that may include degenerate faces, but no inverted faces or crossing edges.  (Weak triangulations are special cases of \emph{weak embeddings} studied by Akitaya, Fulek, and Tóth \cite{aft-rweg-18} and Fulek and Kyncl \cite{fk-htamg-18}.)  A~weak triangulation~$\Map'$ is $\theta$-equivalent to a triangulation $\Map$ if vertices lie on the same longitudes in~$\Map$ and $\Map'$ and every non-degenerate non-polar face of $\Map'$ has the same orientation as the corresponding face of $\Map$.

%
\begin{corollary}
\label{C:weak-sink}
A triangulation $\Map$ is sinkable if any only if there is a southern weak triangulation that is $\theta$-equivalent to~$\Map$.
\end{corollary}

\begin{proof}
Fix a triangulation $\Map$ and a $\theta$-equivalent southern weak triangulation $\Map'$.  For each $t\in [0,1]$, let $\Map_t$ denote the shortest-path drawing where each vertex~$i$ has coordinates $(x_i, y_i, z_i(t))$, where $z_i(t) = (1-t)z_i + t\cdot z'_i$.  The proof of Lemma \ref{L:shell-sink} implies that $\Map_t$ is an embedding for all $0\le t < 1$.

Let $\tau$ be the smallest value such that $z_i(\tau) \le 0$ for all $i$.  Because $z_i(1) = z'_i < 0$ for all $i$, we must have $\tau<1$, and therefore $z_i((\tau+1)/2) < 0$ for all $i$.  We conclude that $\Map_{(\tau+1)/2}$ is a southern triangulation (with no degenerate faces) that is $\theta$-equivalent to $\Map_0$, and the retriction of $H$ to $[0,(\tau+1)/2]$ is a longitudinal morph from $\Map$ to $\Map_{(\tau+1)/2}$.
\end{proof}

\subsection{Morphing Strategy}
\label{SS:strategy}

In this section, we propose a strategy for morphing between isomorphic shortest-path triangulations on the sphere that slightly generalizes strategies previously used by Awartani and Henderson \cite{ah-sgts-87} and Kobourov and Landis \cite{kl-mpgss-06}.  Recall that a triangulation is \EMPH{sinkable} if it can be longitudinally morphed to a $\theta$-equivalent triangulation with all vertices below the equator. A \EMPH{rotation} of a spherical triangulation $\Gamma$ is the triangulation $\rho\circ\Gamma$ for some rigid motion $\rho\colon S^2\to S^2$ of the sphere.\footnote{Rotations should not be confused with \emph{rotation systems}, which encode the cyclic order of edges around each vertex in a planar or spherical embedding.}  Our strategy rests on the following (admittedly optimistic) conjecture:

\begin{conjecture}
\label{C:sinkrot}
Every shortest-path triangulation $\Gamma$ of the sphere can be rotated to obtain a sinkable triangulation.
\end{conjecture}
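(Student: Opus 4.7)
The plan is to show that the set $\Sigma_\Gamma = \{\rho \in SO(3) : \rho\circ\Gamma \text{ is sinkable}\}$ is always non-empty. By the linear-programming characterization of sinkability developed in Section \ref{S:sink}, the coefficients of the sinkability LP for $\rho\circ\Gamma$ depend continuously and piecewise-polynomially on $\rho$, so $\Sigma_\Gamma$ is a semi-algebraic subset of $SO(3)$. The first step would be to establish that $\Sigma_\Gamma$ is open wherever it is non-empty, at which point the conjecture reduces to showing that the infeasible rotations cannot cover all of $SO(3)$.

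The natural concrete strategy is to exhibit, for every $\Gamma$, a specific rotation that is provably sinkable. A promising candidate is to choose the north pole so that it lies in the interior of a carefully selected face $f$ of $\Gamma$; specifically, I would try picking $f$ to be a face of the convex hull of the vertex set (viewed as points in $\mathbb{R}^3$) whose outward normal is extremal — for instance, the normal that maximizes some generic affine function on the vertices. With such a choice, every vertex starts in the closed northern hemisphere, the chosen face becomes the new north face, and every non-polar face has its apex strictly above its base; intuitively, each vertex then has the maximum available room to sink along its longitude. One would then verify feasibility of the sinkability LP for this specific rotation, either by exhibiting an explicit feasible solution (perhaps built from a small perturbation of the coherent Kobourov--Landis construction \cite{kl-mpgss-06}) or by identifying a favourable basis.

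The main obstacle is the absence of any clean combinatorial certificate coupling the choice of $\rho$ to LP feasibility. Because Section \ref{S:shellrot} constructs a triangulation with no longitudinally shellable rotation, any proof of the conjecture must genuinely exploit the gap between shellability and sinkability — that is, it must use the flexibility afforded by weak triangulations (Lemma \ref{L:weak-sink}). In particular, the extremal-normal heuristic above is likely to fail for tightly twisted, near-degenerate examples analogous to the Schönhardt triangulation with $\theta$ close to $\pi/6$, where feasibility of the LP depends on delicate cancellations among the volume determinants $\mathrm{vol}_t(i,j,k)$. A non-constructive fallback would be a degree-theoretic argument: define a continuous obstruction map from $SO(3)$ to a suitable configuration bundle whose zeros characterize sinkability, then use topological invariants of $SO(3)$ to force a zero. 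The experimental evidence in Section \ref{S:experiment}, which indicates that $\Sigma_\Gamma$ occupies a substantial fraction of $SO(3)$ for every triangulation tested, offers some hope that even a measure-theoretic argument bounding the bad rotations strictly below full Lebesgue measure on $SO(3)$ would suffice.
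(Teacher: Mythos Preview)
The statement you are addressing is a \emph{conjecture}, not a theorem: the paper explicitly introduces it as ``admittedly optimistic'' and lists it among the open problems in Section~\ref{S:outro}. There is no proof in the paper to compare against, and your proposal is not a proof either --- it is a research outline that candidly names its own obstacles (``The main obstacle is the absence of any clean combinatorial certificate\dots''; ``the extremal-normal heuristic above is likely to fail\dots''). As a proof attempt it is therefore vacuous: none of the suggested strategies is carried to a conclusion.

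Beyond that structural issue, the concrete heuristic you propose has a gap even at the level of setup. You suggest placing the north pole at the outward normal of a facet of the convex hull of the vertex set, but a convex-hull facet need not be a face of $\Gamma$; the combinatorics of $\Gamma$ are fixed in advance and are generally \emph{not} those of the convex hull (indeed, that mismatch is exactly what makes $\Gamma$ incoherent). So your claim that ``the chosen face becomes the new north face'' is unjustified, and the subsequent assertion that ``every vertex starts in the closed northern hemisphere'' does not follow either --- choosing $p$ normal to a supporting hyperplane puts the facet vertices at the \emph{top}, with all other vertices strictly below them, not above the equator. The degree-theoretic and measure-theoretic fallbacks are stated only as hopes, with no candidate obstruction map or measure bound. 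In short, nothing here closes the gap the paper leaves open; the honest status of Conjecture~\ref{C:sinkrot} remains exactly as the paper states it.
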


Conjecture \ref{C:sinkrot} implies an efficient algorithm to morph any triangulation $\Map_0$ into any isomorphic target triangulation $\Map_1$ through an intermediate coherent triangulation $\Map_{1/2}$.  Steinitz's theorem \cite{s-pr-1916,sr-vtp-34,g-gppg-07} implies that an isomorphic coherent triangulation $\Map_{1/2}$ exists; moreover, we can compute such a triangulation in $O(n)$ time using an algorithm of Das and Goodrich \cite{dg-cop3d-97}.  Given any coherent triangulation, we can project its underlying convex polyhedron from a point on the positive $z$-axis just outside the polyhedron to the plane $z=-1$, and then centrally project the resulting planar triangulation back to the southern hemisphere, to obtain a $\theta$-equivalent \emph{southern} triangulation.  It follows that \emph{every} rotation of $\Gamma_{1/2}$ is sinkable.

We construct a morph from $\Map_0$ to $\Map_{1/2}$ in several stages as follows.  First, we rotate $\Map_0$ so that the resulting triangulation $\Map'_0$ is sinkable (as guaranteed by Conjecture \ref{C:sinkrot}) and then longitudinally morph $\Map'_0$ to a southern triangulation $\Map''_0$.\footnote{If the triangulation $\Map_0$ is not full, we can rotate it directly to a southern triangulation $\Map'_0 = \Map''_0$.}  Similarly, we rotate $\Map_{1/2}$ so that the resulting triangulation $\Map'_{1/2}$ has the same north face as $\Map'_0$ and $\Map''_0$, and then longitudinally morph $\Map'_{1/2}$ to a southern triangulation $\Map''_{1/2}$.  Projection from the center of the sphere maps~$\Map''_0$ and $\Map''_{1/2}$ to isomorphic straight-line triangulations in the plane $z=-1$, each with the same triangular outer face.  We can construct a morph between these planar triangulations using any of several algorithms \cite{c-dprc-44, h-chps1-73, bs-li-78, fg-mti-99, aabcd-hmpgd-17, el-ptmme-23, k-cdhgl-21}; lifting this planar morph back to the sphere yields a spherical morph from~$\Map''_0$ to $\Map''_{1/2}$.  Our final morph is the concatenation of the rotation $\Map_0 \leadsto \Map'_0$, the longitudinal morph $\Map'_0 \leadsto \Map''_0$, the lifted planar morph $\Map''_0 \leadsto \Map''_{1/2}$, the reverse of the longitudinal morph $\Map'_{1/2} \leadsto \Map''_{1/2}$, and the reverse of the rotation $\Map_{1/2} \leadsto \Map'_{1/2}$.  See Figure~\ref{F:morph-example} for an example (without the initial and final rotations).

\begin{figure}[htb]
\centering\sffamily
\begin{tabular}{c@{}c@{}c@{}c@{}c}
	\raisebox{-0.5\height}{\includegraphics[scale=0.3]{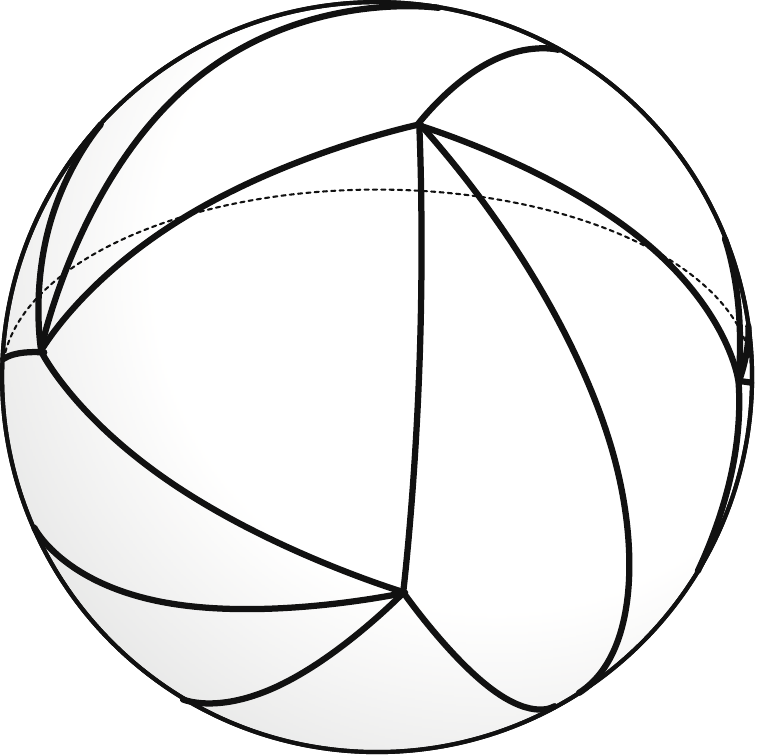}} &
	$\xrightarrow{\text{~sink~}}$ &
	\raisebox{-0.5\height}{\includegraphics[scale=0.3]{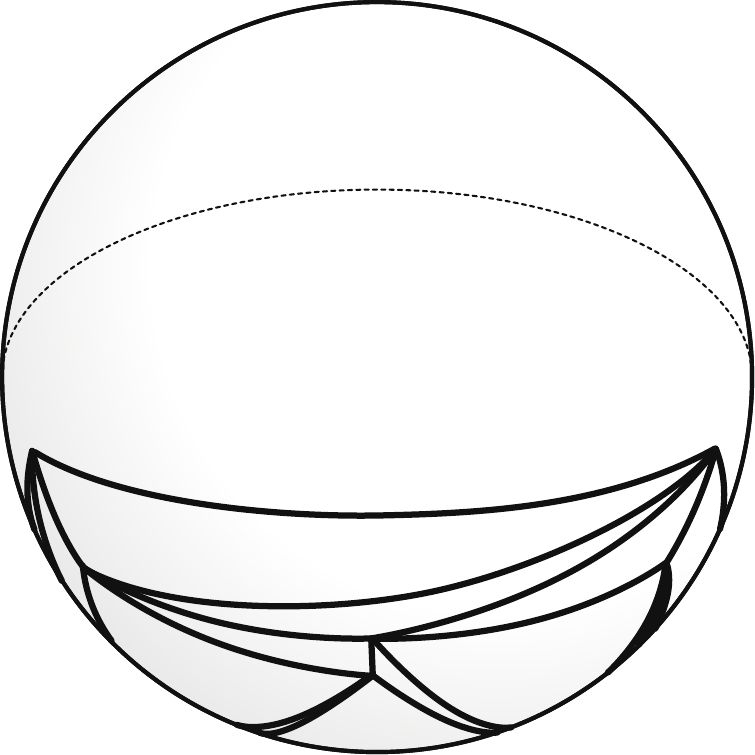}} &
	$\xrightarrow{\text{project}}$ &
	\hspace{-2em}
	\raisebox{-0.5\height}{\includegraphics[scale=0.3]{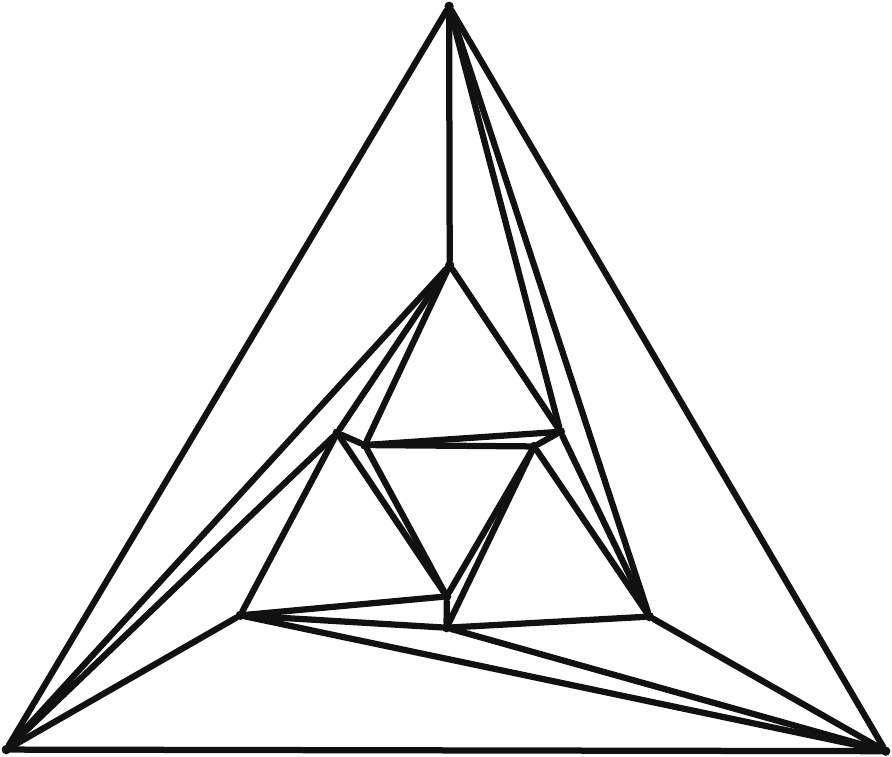}}
	\\
	&&
	$\Big\downarrow \rlap{$\scriptstyle \text{lifted planar morph}$}$
	&&
	\hspace{-2em}
	$\Big\downarrow \rlap{$\scriptstyle \text{planar morph}$}$
	\\
	\raisebox{-0.5\height}{\includegraphics[scale=0.3]{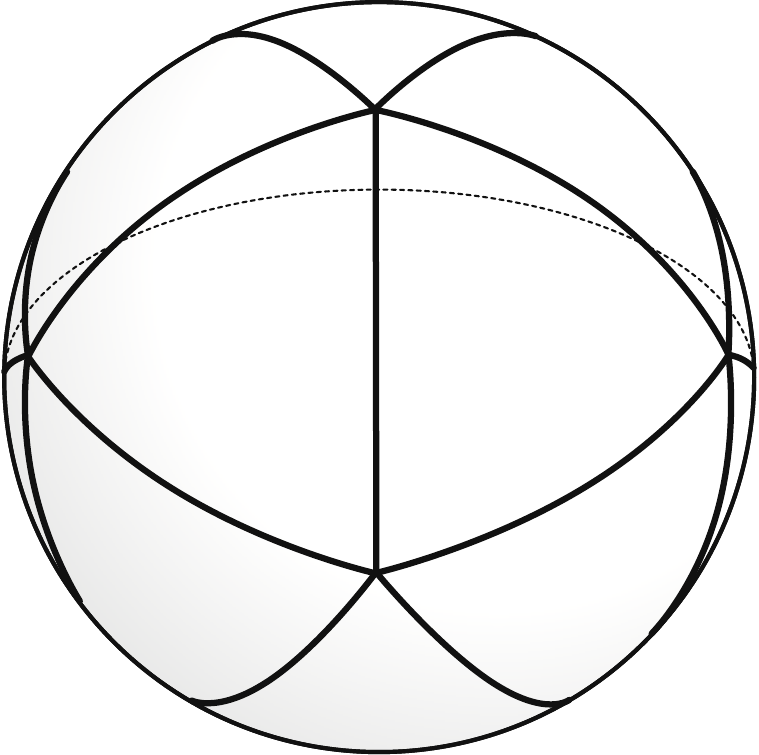}} &
	$\xleftarrow{\text{unsink}}$ &
	\raisebox{-0.5\height}{\includegraphics[scale=0.3]{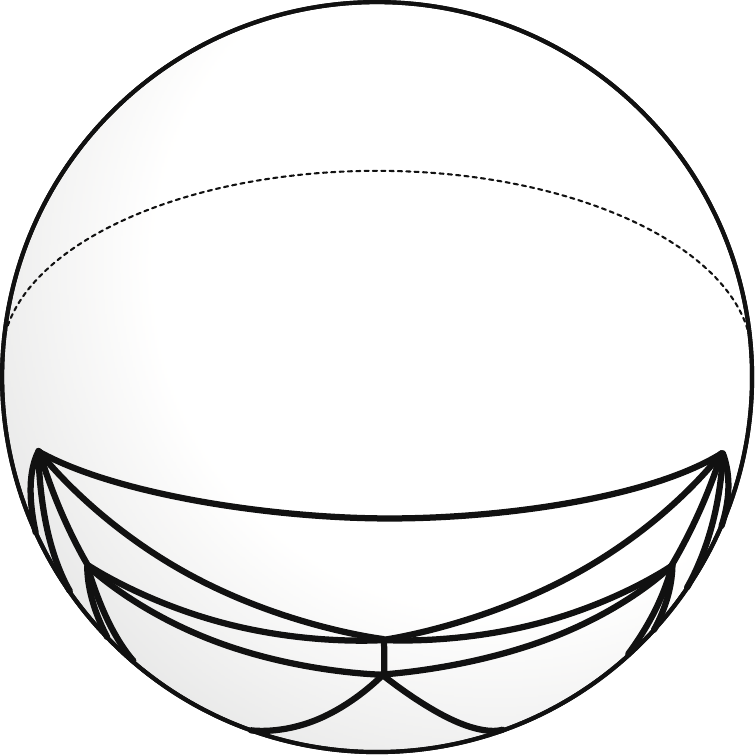}} &
	$\xleftarrow{\text{~~lift~~}}$ &
	\hspace{-2em}
	\raisebox{-0.5\height}{\includegraphics[scale=0.3]{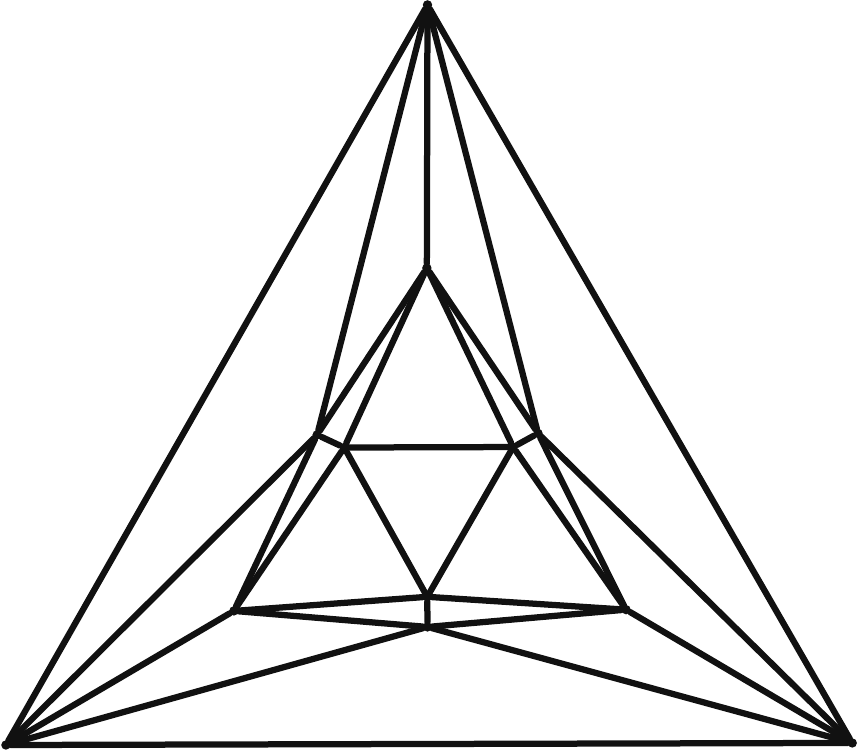}}
\end{tabular}
\caption{Morphing a six-beaked shaddock triangulation \cite{j-oi-67, d-sb-71} into a regular icosahedral triangulation; compare with Kobourov and Landis \cite[Figure 1]{kl-mpgss-06}.}
\label{F:morph-example}
\end{figure}

We construct a morph $\Map_1 \leadsto \Map_{1/2}$ using the same strategy.  The final morph  $\Map_0 \leadsto \Map_1$ is the concatenation of $\Map_0 \leadsto \Map_{1/2}$ and the reverse of $\Map_1 \leadsto \Map_{1/2}$.

\subsection{Morphing 3-Connected Embeddings}

To close this section, we describe how to extend any efficient morphing algorithm for  spherical triangulations to shortest-path embeddings of 3-connected graphs.

Our reduction relies on an algorithm to triangulate the faces of a 3-connected embedding with additional shortest paths.  We can triangulate any face that fits in a hemisphere by projecting it to a simple polygon to the plane~\cite{bde-ttp-96}, but triangulating faces that do not fit in a hemisphere requires more specialized algorithms \cite{o-cgc51-08,bk-sapsr-82}, some of which have never been published~\cite{h-esptu-08}.  See O'Rourke \cite{o-cgc51-08} for a thorough discussion of subtleties in triangulating spherical polygons.

\begin{lemma}
\label{L:triangulate}
Let $\Map$ be a shortest-path embedding of a 3-connected planar graph on the sphere.  We can triangulate the faces of $\Map$, using at most one Steiner point, in $O(n\log n)$ time.
\end{lemma}

\begin{proof}[sketch]
We adapt an algorithm of Fournier and Montuno \cite{fm-tspep-84} for triangulating planar polygons, which is also described by O'Rourke~\cite[Chapter~2]{o-cg-98}, using techniques developed by Halperin~\cite{h-esptu-08}.  We provide only a sketch of our algorithm here, omitting several straightforward proof and implementation details.

\def\Trap{\mathcal{L}(\Map)}

If necessary, we rotate~$\Map$ so that the north and south poles are in different faces and, if any face of $\Map$ contains a hemisphere, that face also contains the north pole.  We define the \emph{longitudinal decomposition} $\Trap$ of~$\Map$ by extending longitudinal segments north and south from each vertex to the next intersection with~$\Map$, or to one of the poles if there is no such intersection.  Each face of $\Trap$ is a (possibly degenerate) spherical trapezoid that is contained in a hemisphere.  We can construct $\Trap$ in $O(n\log n)$ time, either by sweeping the sphere with a moving longitude \cite{sh-gip-76, bo-arcgi-79} or using a randomized incremental algorithm \cite{cs-arscg-89,m-fppa-90}.

The rest of the triangulation algorithm requires only $O(n)$ time.  We call a face $f$ of $\Trap$ \emph{boring} if two vertices of $\Map$ lie on the boundary of that face but not on the same edge of $\Trap$; we call the shortest path between those two vertices a \emph{boring diagonal}.  Inserting all boring diagonals into the original embedding $\Map$ partitions its faces into $\theta$-monotone pieces.  Two of these $\theta$-monotone pieces contain the poles; each of the others is a \emph{monotone mountain}, whose boundary consists of a $\theta$-monotone polygonal chain and a single shortest path.  Each monotone mountain can be triangulated in $O(n)$ time by repeatedly removing convex ears~\cite{fm-tspep-84}.  We can similarly remove convex ears from each polar piece until either the piece is fully triangulated or there is one remaining fragment that has no convex vertices.  We can triangulate the final “anti-convex” fragment by extending shortest paths from the north pole to each vertex.
\end{proof}

\begin{theorem}
\label{Th:3con}
Suppose we can construct a morph between two isomorphic shortest-path triangulations of the sphere in $O(T(n))$ time.  Then we can construct a morph between two isomorphic embeddings of the same 3-connected planar graph in $O(T(n) + n\log n)$ time.
\end{theorem}

\begin{proof}
Let $\Map_0$ and $\Map_1$ be any two isomorphic shortest-path embeddings of the same 3-connected planar graph on the sphere.  As in the previous section, we compute an isomorphic coherent triangulation $\Map_{1/2}$ in $O(n)$ time~\cite{dg-cop3d-97} and then construct a morph from $\Map_0$ to $\Map_1$ through this intermediate coherent triangulation.

We construct a morph from $\Map_0$ to $\Map_{1/2}$ as follows.  First, we triangulate the faces of $\Map_0$ in $O(n\log n)$ time, as described by Lemma \ref{L:triangulate}.  Then we triangulate each face of $\Map_{1/2}$ isomorphically to the corresponding face of $\Map_0$; this is always possible because the faces of $\Map_{1/2}$ are convex.  In particular, if the north face of $\Map_0$ contains a hemisphere, we introduce one new vertex inside the north face of $\Map_{1/2}$.  Finally, we construct a morph from the triangulation of $\Map_0$ to the triangulation of $\Map_{1/2}$, in at most $T(n+1) = O(T(n))$ time, and simply ignore the vertices and edges that are not part of the original embeddings.

We similarly construct a morph $\Map_1 \leadsto \Map_{1/2}$, and finally assemble a morph from $\Map_0$ to $\Map_1$ by concatenating $\Map_0 \leadsto \Map_{1/2}$ and the reverse of $\Map_1 \leadsto \Map_{1/2}$.
\end{proof}

We conjecture that similar reductions can be developed for spherical embeddings that are not 3-connected, mirroring reductions from morphing arbitrary planar straight-line graphs to morphing planar triangulations~\cite{aabcd-hmpgd-17,t-dpg-83,sg-cmcpt-01,sg-imct-03,gs-gipm-01}.  We leave the development of these more general reductions to future work.

\section{Longitudinal Shelling}
\label{S:shelling}

A \EMPH{shelling} of a spherical triangulation $\Map$ is an ordering $f_1, f_2,\allowbreak \dots,\allowbreak f_{2n-4}$  of the faces of $\Map$ such that the union of any prefix $\FaceUnion_k = \bigcup_{i\le k} f_i$ is either empty ($k=0$), the entire sphere ($k=2n-4$), or a topological disk.  We call a shelling \EMPH{longitudinal} if every disk $\FaceUnion_k$ is $\theta$-\nobreak monotone.  The triangulation $\Map$ is \EMPH{longitudinally shellable} if it has a longitudinal shelling.

Bruggesser and Mani's line shelling \cite{bm-sdcs-71} implies that every \emph{coherent} triangulation is longitudinally shellable.  Awartani and Henderson \cite{ah-sgts-87} proved that a triangulation is longitudinally shellable if it has a \emph{longitudinal seam}, that is, a longitude $\ell$ such that for any edge $e$, the intersection $\ell\cap e$ is either empty, an endpoint of $e$, or the entire edge $e$.

\subsection{Characterization}

We characterize the longitudinal shellability of $\Gamma$ in terms of three directed graphs, which are illustrated in Figure \ref{F:directed-graphs}.

\begin{itemize}
\item
The directed dual graph \EMPH{$\BDn{\Map}$} contains a directed edge from face $f_i$ to face $f_j$ if and only if~$f_i$ and $f_j$ share an edge in $\Gamma$, and some point in $f_i$ is due north (on the same longitude) of some point in $f_j$. %
The north face of $\Map$ has out-degree $3$ in $\Dn\Map$; each down-face has in-degree~$1$ and out-degree $2$; each up-face has in-degree~$2$ and out-degree~$1$, and the south face has in-degree $3$.

\item
The oriented primal graph \EMPH{$\BRt{\Gamma}$} is the directed dual of $\Dn{\Map}$.  For every undirected edge $ij$ of~$\Map$, the graph $\Rt{\Gamma}$ contains the directed edge $i\arcto j$ if and only if vertex $j$ is east of vertex~$i$, or more formally, if the determinant
\[
	\det\begin{pmatrix}
		0 & 0 & 1 \\
		x_i & y_i & z_i \\
		x_j & y_j & z_j \\
	\end{pmatrix}
	=
	\det\begin{pmatrix}
		x_i & y_i \\
		x_j & y_j \\
	\end{pmatrix}
\]
is positive, where $(x_i, y_i, z_i)$ is the coordinate vector for vertex $i$.  

\item
Finally, \EMPH{$\BDnVee{\Map}$} is a directed proper subgraph of $\Map$ whose edges are the legs of each down-face, each directed toward its apex.
\end{itemize}\unskip

\begin{figure}[htb]
\centering
\begin{tabular}{c@{\qquad}c}
	\includegraphics[scale=0.4]{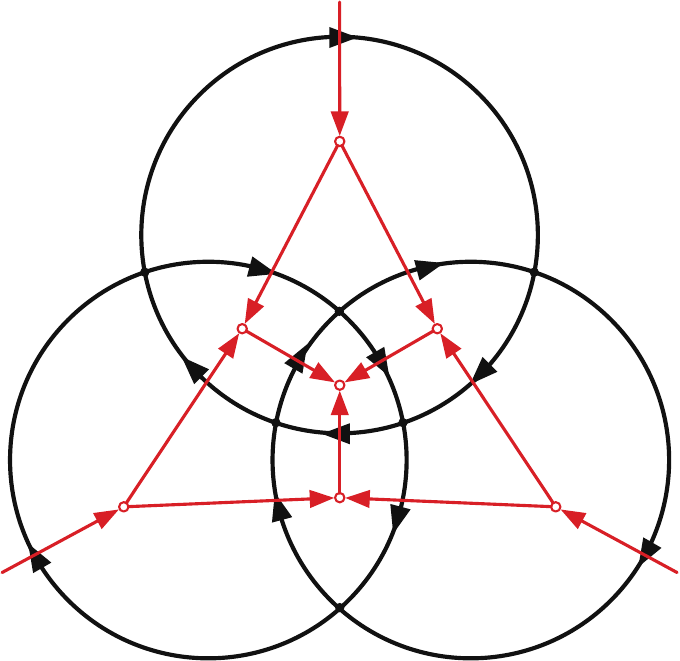} &
	\includegraphics[scale=0.4]{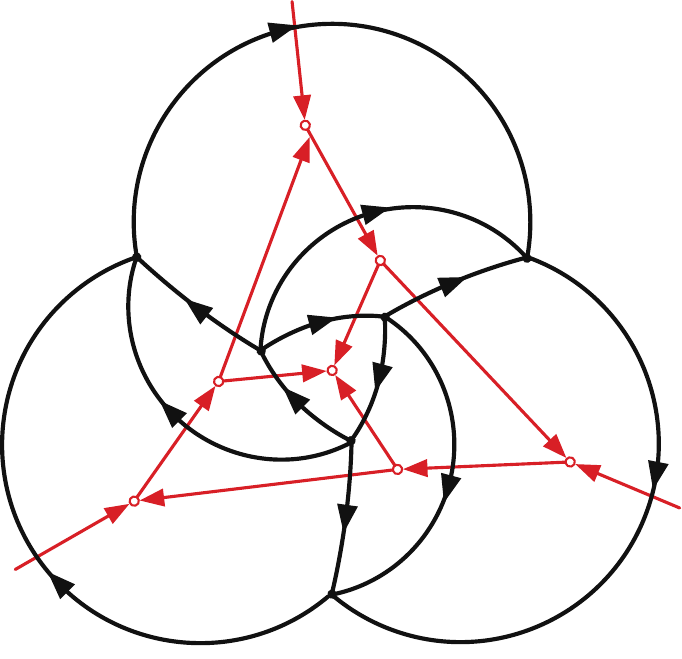}
	\\
	\includegraphics[scale=0.4]{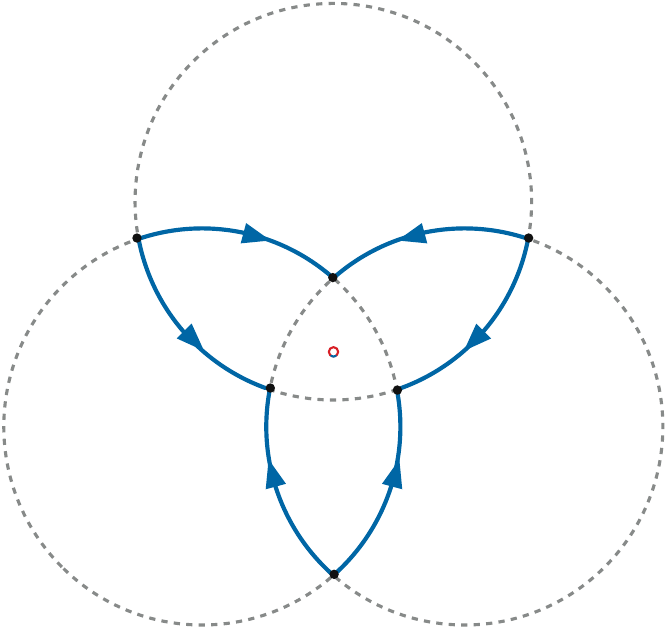} &
	\includegraphics[scale=0.4]{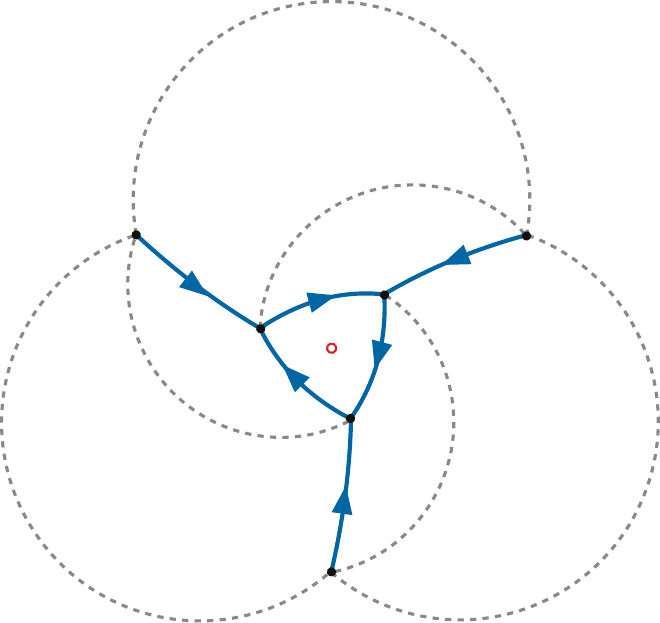}
\end{tabular}
\caption{Stereographic projections of directed graphs $\Dn\Map$ (red straight edges), $\Rt\Map$ (black curved edges), and $\DnVee\Map$ (blue, second row) for the Platonic octahedral triangulation (which is longitudinally shellable) and a Schönhardt triangulation  (which is not).  Compare with Figure~\ref{F:schonhardt}.}
\label{F:directed-graphs}
\end{figure}

\begin{lemma}
\label{L:shellable}
For every generic shortest-path triangulation $\Map$, the following conditions are equivalent:
\begin{enumerate}[(a)]\cramped
\item
$\Map$ is longitudinally shellable.
\item
$\Dn{\Map}$ is acyclic.
\item
$\Rt{\Map}$ is strongly connected.
\item
$\Rt{\Map}$ contains directed paths from $\NorthF$ to $\SouthF$ and from $\SouthF$ to $\NorthF$.
\item
$\DnVee{\Map}$ is acyclic.
\end{enumerate}
\end{lemma}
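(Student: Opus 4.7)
The plan is to prove the cyclic chain (a) $\Rightarrow$ (b) $\Rightarrow$ (c) $\Rightarrow$ (d) $\Rightarrow$ (b) $\Rightarrow$ (a), together with (b) $\Leftrightarrow$ (e) via a direct cycle correspondence, using acyclicity of $\Dn\Map$ as the pivotal condition. For (a) $\Leftrightarrow$ (b), I would identify longitudinal shellings with topological orders of $\Dn\Map$. An edge $f_i \to f_j$ in $\Dn\Map$ means that in a longitudinal neighborhood of their shared edge, $f_i$ lies immediately north of $f_j$; adding $f_j$ before $f_i$ would place an arc of $\FaceUnion_k$ inside $f_j$ while leaving the longitudinally adjacent slab inside $f_i$ uncovered, so once $f_i$ is added later the intersection of some longitude with $\FaceUnion$ becomes disconnected, violating $\theta$-monotonicity. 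Hence any longitudinal shelling respects the orientation of $\Dn\Map$, and a directed cycle precludes shellability. Conversely, given $\Dn\Map$ acyclic, I would process faces in topological order starting from the north face, maintaining the invariant that $\FaceUnion_k$ is a $\theta$-monotone topological disk; at each step the new face $f$ has all its $\Dn\Map$-predecessors already in $\FaceUnion_{k-1}$, so the shared portion of $\partial f$ is precisely the northern part of $\partial f$, and attaching $f$ extends $\FaceUnion$ southward along its longitudinal span.

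The equivalence (b) $\Leftrightarrow$ (c) is the classical planar orientation-duality principle: an oriented connected planar graph is strongly connected if and only if its directed dual is acyclic, since directed cycles in the dual correspond bijectively to directed edge-cuts in the primal. Because $\Rt\Map$ is defined as the directed dual of $\Dn\Map$, this is immediate. Implication (c) $\Rightarrow$ (d) is trivial, under the natural reading of ``path from $\NorthF$ to $\SouthF$'' in the vertex graph $\Rt\Map$ as a directed path between boundary vertices of the two faces. For (d) $\Rightarrow$ (b) I would argue the contrapositive: a directed cycle $C$ in $\Dn\Map$, regarded as a closed dual curve on the sphere, must wind around the polar axis because each of its edges locally descends in latitude while the cycle closes up globally, so $C$ separates the sphere into two disks with $\NorthF$ in one and $\SouthF$ in the other. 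The primal edges dual to $C$ then form a directed cut in $\Rt\Map$, uniformly oriented across the cut by the primal–dual orientation correspondence, which blocks any directed $\Rt\Map$-path in one of the two directions between $\NorthF$-vertices and $\SouthF$-vertices.

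For (b) $\Leftrightarrow$ (e), I would establish a direct correspondence between cycles in the two graphs. A cycle $v_1 \to v_2 \to \cdots \to v_k \to v_1$ in $\DnVee\Map$ consists of legs of down-faces $f_1, \ldots, f_k$, where $f_i$ has apex $v_{i+1}$ and base containing $v_i$; each such leg is a $\Dn\Map$-out-edge of $f_i$ to its neighbor across the leg, and chaining these across the $v_{i+1}$-shared faces recovers a cycle in $\Dn\Map$. Conversely, a $\Dn\Map$-cycle must pass through down-faces (only down-faces have out-degree exceeding one), and concatenating the apex-bound legs of consecutive down-faces in the cycle traces out a cycle in $\DnVee\Map$. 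The main obstacle I anticipate is the inductive step in (b) $\Rightarrow$ (a): a priori the set of in-neighbors of the next face $f$ could be distributed so that $\partial f \cap \partial \FaceUnion_{k-1}$ is disconnected, creating a hole or breaking $\theta$-monotonicity. The likely remedy is either to choose the topological order carefully---always admitting a face whose entire ``northern boundary'' is already exposed---or to prove contiguity of the shared boundary directly from the local up/down-face structure. A parallel delicacy arises in (b) $\Leftrightarrow$ (e), where one must verify that consecutive down-faces in a $\Dn\Map$-cycle really are connected through a shared leg rather than through long intervening chains of up-faces.
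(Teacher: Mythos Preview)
Your overall plan matches the paper's skeleton, and your arguments for (a)$\Leftrightarrow$(b), (b)$\Leftrightarrow$(c), and (b)$\Rightarrow$(e) are essentially what the paper does. Two places diverge, and one is a genuine gap.

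For (b)$\Rightarrow$(a) the paper sidesteps your boundary-connectedness worry: taking \emph{any} topological order of $\Dn\Map$, it argues directly that each prefix $U_k$ is $\theta$-monotone, because whenever a longitude meets both $f_i$ with $i\le k$ and $f_j$ with $j>k$, the portion in $f_i$ must lie north of the portion in $f_j$ (else $\Dn\Map$ contains a path from $f_j$ to $f_i$, contradicting the order). No inductive tracking of $\partial U_k$ is needed; $\theta$-monotonicity plus containing the north face already forces $U_k$ to be a disk. Likewise, for the (c)/(d) loop the paper proves (d)$\Rightarrow$(c) directly rather than your contrapositive (d)$\Rightarrow$(b): from any vertex, repeatedly following $\Rt\Map$-edges moves strictly eastward in longitude and so must eventually meet the given path~$\pi_\downarrow$. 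Your pole-separation claim for a $\Dn\Map$-cycle is true, but ``each edge locally descends so the cycle winds around the axis'' is not a proof; establishing it is essentially the sleeve argument below.

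The real gap is the hard direction of (b)$\Leftrightarrow$(e): showing that a cycle in $\Dn\Map$ forces a cycle in $\DnVee\Map$. Your plan---locate down-faces on a $\Dn\Map$-cycle and concatenate their apex-bound legs---hits exactly the obstacle you flag: consecutive down-faces on the cycle need not share a leg, and intervening up-faces contribute nothing to $\DnVee\Map$. The paper's argument is structurally different and is the one nontrivial idea in the lemma. Take a nontrivial strongly connected component $C$ of $\Dn\Map$ and let $S$ be the union of its faces. Two geometric claims are proved: $S$ meets every longitude (a Palazzi--Snoeyink minimality argument on the relation ``$f$ lies north of $g$ along some longitude''), and $S$ is $\theta$-monotone (faces along any longitudinal arc between two points of $S$ form a $\Dn\Map$-path, hence lie in $C$). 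Thus $S$ is an annulus separating the poles. Now look at its \emph{southern} boundary cycle $\sigma$: each face of $S$ incident to $\sigma$ has one out-neighbour outside $C$ (across $\sigma$) and, being in an SCC, another out-neighbour inside $C$, so it has out-degree $2$ in $\Dn\Map$ and is therefore a down-face; moreover each edge of $\sigma$ is a leg of exactly one such down-face, so no two consecutive $\DnVee\Map$-edges along $\sigma$ point toward each other. Orienting the edges of $\sigma$ toward those apices gives a directed cycle in $\DnVee\Map$. This SCC/sleeve construction is the missing ingredient.
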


\begin{proof}
To prove (a)$\Rightarrow$(b), let $f_1, f_2, \dots, f_{2n-4}$ be a longitudinal shelling of $\Gamma$, and consider any two faces $f_k$ and $f_l$ that share a common edge, where $k<l$.  The disk $\FaceUnion_k$ is $\theta$-monotone, and therefore includes every point due north of $f_k$, but does not include $f_l$.  It follows that $f_l$ is south of~$f_k$, and thus $\Dn\Map$ contains the directed edge $f_k\arcto f_l$.  We conclude that $\Dn\Map$ is acyclic.

On the other hand, to prove (b)$\Rightarrow$(a), suppose $\Dn\Map$ is acyclic.  Let $f_1, f_2, \dots, f_{2n-4}$ be  any topological ordering of $\Dn\Map$.  
The north face is the only vertex of $\Dn\Map$ with in-degree $0$, so it must be $f_1$.
Consider the prefix union $\FaceUnion_k = \bigcup_{i\le k} f_i$ for some index $k$, and suppose some longitude~$\ell$ intersects two faces $f_i$ and $f_j$ such that $i\le k<j$.  The intersection $\ell\cap f_i$ must be north of $\ell\cap f_j$, since otherwise  $\Dn\Map$ would contain a path from $f_j$ to $f_i$, contradicting the topological order.  Thus, $\FaceUnion_k$ is a $\theta$-monotone disk.  It follows that $f_1, f_2, \dots, f_{2n-4}$ is a longitudinal shelling of $\Map$.

The equivalence (b)$\Leftrightarrow$(c) follows from the duality between directed cycles and directed edge-cuts in directed planar graphs \cite{k-lpapd1-93,n-atcop-01}.

The implication (c)$\Rightarrow$(d) is trivial.  To prove the converse (c)$\Leftarrow$(d), suppose $\Rt{\Map}$ contains directed paths $\pi_\downarrow$ from $\NorthF$ to $\SouthF$ and $\pi_\uparrow$ from  from $\SouthF$ to $\NorthF$.  Consider an arbitrary vertex $i$.  Because every edge in $\Rt{\Map}$ is oriented from west to east, following edges of $\Rt{\Map}$ starting at $i$ must eventually reach some vertex of $\pi_\downarrow$.  Thus, $\Rt{\Map}$ contains a directed walk from $i$ to $\SouthF$.  Symmetrically, $\Rt{\Gamma}$ contains a directed walk from $\SouthF$ to $i$, starting with some prefix of $\pi_\uparrow$.  We conclude that~$\Rt{\Map}$ is strongly connected.

The implication (b)$\Rightarrow$(e) is straightforward.  If $i\arcto j$ is an edge of $\DnVee{\Map}$, then $\AboveF{i}$ is north of $\AboveF{j}$, so there is a directed path from $\AboveF{i}$ to $\AboveF{j}$ in $\Dn\Map$.  Thus, if $\DnVee{\Map}$ contains a directed cycle, then $\Dn\Map$ also contains a directed cycle.

\medskip
It remains only to prove the implication (e)$\Rightarrow$(b).  Suppose $\Dn{\Map}$ contains a directed cycle.  Let~$C$ be any strongly connected component of~$\Dn{\Map}$ with more than one vertex, and let~$\Sleeve$ denote the union (or “sleeve”) of faces of $\Map$ whose dual vertices are in~$C$.  The north and south faces are respectively the unique source and sink vertices of $\Dn{\Map}$, so they cannot be contained in $C$ or in $\Sleeve$.  The following claims imply that $\Sleeve$ is a $\theta$-monotone annulus that separates the polar faces of $\Gamma$.

\begin{claim}
$\Sleeve$ intersects every longitude.
\end{claim}

\begin{subproof}
We prove this claim by adapting an argument of Palazzi and Snoeyink~\cite{ps-crrbs-94}.  For any two faces $f$ and $f'$ in $C$, let $f\Downarrow f'$ indicate that some longitude $\ell$ intersects both $f$ and $f'$, and $f\cap \ell$ is north of $f'\cap \ell$.  Let $f_1 \Downarrow f_2 \Downarrow \cdots \Downarrow f_r\Downarrow f_1$ be the minimum-length cycle through the faces in $C$ with respect to this relation.  We trivially have $r\ge 3$, since no face can be both north and south of another face.  If any longitude does not intersect $\Sleeve$, then  some face $f_i$ in this cycle is furthest east, which implies $f_{i-1} \Downarrow f_{i+1}$, contradicting the minimality of the cycle.
\end{subproof}

\begin{claim}
$\Sleeve$ is $\theta$-monotone.
\end{claim}

\begin{subproof}
Consider any longitude $\ell$, and let $a$ and $b$ respectively denote the northernmost and southernmost point in $\ell\cap\Sleeve$.  Let $f'_a, f'_{a+1}, \dots, f'_b$ denote the sequence of faces of~$\Map$ that intersect the longitudinal arc from $a$ to $b$, in order from north to south; in particular, $f'_a$ is the face of $\Map$ just south of~$a$, and $f'_b$ is the face of $\Map$ just north of~$b$.  We easily observe that $f'_a\in C$ and~$f'_b\in C$, and that $\Dn{\Map}$ contains the edge $f'_i \arcto f'_{i+1}$ for each index $a\le i<b$.  It follows that $C$ contains all faces $f'_i$, which implies that $\Sleeve \cap \ell$ is connected.  
\end{subproof}

It follows that $\Sleeve$ has two boundary curves, each of which is a simple $\theta$-monotone cycle in~$\Map$.  Let $\Sbdry$ denote the southern boundary of $\Sleeve$.  Any face $f$ in $\Sleeve$ that is incident to $\Sbdry$ has at least one out-neighbor in $\Dn\Map$ that is is not in $C$.  On the other hand, because $C$ is a strongly connected component of $\Dn\Map$, every face in $\Sleeve$ has at least one out-neighbor  in $\Dn\Map$ that is in $C$.  It follows that every face in $\Sleeve$ that is incident to the southern boundary of $\Sleeve$ has out-degree $2$ in $\Dn\Map$, and thus must be a down-face.  Moreover, every edge of $\Sbdry$ is a leg of a \emph{unique} down-face; in particular, every edge of $\Sbdry$ has a corresponding directed edge in $\DnVee\Map$.

Finally, consider the set $\DnVee{\Sbdry}$ of all such directed edges.  If any two edges in $\DnVee{\Sbdry}$ were directed toward each other, they would be the legs of a single down-face, which we just argued is impossible.  Thus, $\DnVee{\Sbdry}$ is a directed cycle in $\DnVee{\Map}$. This completes the proof of Lemma~\ref{L:shellable}.
\end{proof}

Each of the conditions (b), (c), (d), and (e) in Lemma~\ref{L:shellable} can be tested in $O(n)$ time using textbook graph algorithms.

Awartani and Henderson \cite{ah-sgts-87} explicitly consider spherical triangulations with  edges that lie along longitudes.  It is straightforward to extend Lemma~\ref{L:shellable} to such triangulations by \emph{including} both orientations of longitudinal edges in~$\Rt{\Map}$, \emph{excluding} both orientations of their dual edges from~$\Dn\Map$, and \emph{excluding} both orientations of vertical edges from~$\DnVee\Map$.  With this extension, Lemma~\ref{L:shellable} immediately implies a key result of Awartani and Henderson:

\begin{corollary}[Awartani and Henderson \cite{ah-sgts-87}]
\label{Co:AH-SeamCondition}
If $\Map$ has a longitudinal seam, then $\Map$ is longitudinally shellable.
\end{corollary}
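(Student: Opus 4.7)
The plan is to prove the corollary by verifying characterization~(d) of the extended Lemma~\ref{L:shellable}: given a seam~$\ell$, I will exhibit directed paths in $\Rt{\Map}$ both from $\NorthF$ to $\SouthF$ and from $\SouthF$ to $\NorthF$, using the chain of vertical edges of $\Map$ that lie along $\ell$ itself.

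The only substantive step is a structural observation: the seam $\ell$ cannot enter the interior of any non-polar face.  For if $\ell$ entered the interior of such a face $f$, it would have to enter and exit at boundary points of $f$, and the seam condition forbids $\ell$ from meeting the interior of any edge (vertical edges on $\ell$ lie entirely on $\ell$, including their endpoints).  So entry and exit must occur at two vertices $v, v'$ of $f$, both of which then lie on the great circle $\ell$; the unique geodesic of length less than $\pi$ between $v$ and $v'$---which is the edge $vv'$ of the triangle $f$---therefore lies on $\ell$ and hence on $\partial f$, contradicting that $\ell$ traversed $f$'s interior.  Consequently, $\ell$ decomposes as an arc in the interior of $\NorthF$ from the north pole to a first vertex $v_1 \in \partial\NorthF$ on $\ell$, followed by a (possibly trivial) chain of vertical edges $v_1 v_2, v_2 v_3, \dots, v_{m-1} v_m$ of $\Map$ lying on $\ell$, followed by an arc in the interior of $\SouthF$ from $v_m \in \partial\SouthF$ to the south pole.

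In the extension described just above the corollary, both orientations of each vertical edge appear in $\Rt{\Map}$, so $v_1 \to v_2 \to \cdots \to v_m$ is a directed walk in $\Rt{\Map}$ from a vertex of $\NorthF$ to a vertex of $\SouthF$, and its reverse is a walk in the opposite direction.  This verifies condition~(d) of the extended Lemma~\ref{L:shellable}, which then gives shellability.  The structural observation above is the only mildly delicate step---and it goes through uniformly whether or not the seam contains any vertex on it---while the rest of the argument is just reading off paths along the seam, which is why the corollary is advertised as ``immediately'' implied by the extended lemma.
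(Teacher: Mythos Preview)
Your proposal is correct and follows essentially the same approach as the paper: both proofs verify condition~(d) of the extended Lemma~\ref{L:shellable} by observing that the seam~$\ell$ is covered by a chain of vertical edges between the polar faces, whose two orientations in $\Rt{\Map}$ supply the required directed paths. Your version simply spells out in more detail the structural claim---that $\ell$ cannot enter the interior of a non-polar face and hence decomposes into arcs in the polar faces plus a chain of vertical edges---which the paper asserts in a single sentence.
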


\begin{proof}
Suppose $\Map$ has a seam along some longitude $\ell$.  Then the edges of $\Map$ that are contained in $\ell$ must comprise a single undirected path the north and south faces of $\Map$.  It follows that $\Rt\Map$ contains \emph{directed} paths along~$\ell$ from each polar face to the other.  We conclude that $\Map$ meets condition~(d) of Lemma \ref{L:shellable}.
\end{proof}

\subsection{The Awartani--Henderson Embedding}
\label{SS:AH-embedding}

\begin{theorem}[Awartani and Henderson \cite{ah-sgts-87}]
\label{Th:AH-embedding}
Every longitudinally shellable triangulation of the sphere is sinkable.  Moreover, given any longitudinally shellable triangulation $\Map$, we can compute a longitudinal morph from $\Map$ to a southern triangulation in $O(n)$ time.
\end{theorem}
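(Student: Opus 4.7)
The plan is to construct a $\theta$-equivalent southern triangulation $\SouthMap$ explicitly from a longitudinal shelling of $\Map$, and then invoke Lemma~\ref{L:shell-sink} to obtain the longitudinal morph by linear interpolation of $z$-coordinates.  This reduces the problem to an $O(n)$-time construction of $\SouthMap$.

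First, I would compute a topological order $f_1, f_2, \ldots, f_{2n-4}$ of the acyclic dual graph $\Dn{\Map}$ in $O(n)$ time (Lemma~\ref{L:shellable}(b)), yielding a longitudinal shelling with $f_1 = \NorthF$ and $f_{2n-4} = \SouthF$.  Next, using the cylindrical coordinates of Section~\ref{SS:coords}, I would build $\SouthMap$ incrementally along the shelling, keeping each vertex on its original longitude and modifying only its cylinder height $h'_v$.  The three vertices of $\NorthF$ are initialized at a common large height $H_0$.  For each subsequent shelling step that introduces a new boundary vertex---the apex of a down-face, or the off-arc base vertex of an up-face attached along a single leg---assign that vertex a height strictly below every previously chosen height.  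Up-faces attached along both legs introduce no new vertex and simply bury the apex.  A final uniform downward translation sends every $h'_v$ below zero, producing the southern placement.

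The heart of the proof is verifying that $\SouthMap$ is a correctly oriented triangulation.  Steps that introduce a new vertex are immediate: the new vertex sits strictly below every existing vertex, so every face it belongs to is oriented correctly.  The delicate case is an up-face $(a,b,c)$ attached at both legs, in which $a, b, c$ are all already placed; one must verify that $a$ lies above the chord $bc$ on the cylinder at longitude~$\phi_a$.  I would establish this via a geometric lemma: in any longitudinal shelling, if an up-face $(a,b,c)$ is shelled with both legs shared, then the apex $a$ joined the south boundary strictly before both base vertices $b$ and $c$, so $h'_a > h'_b$ and $h'_a > h'_c$.  The lemma should follow from the $\theta$-monotonicity of each intermediate disk $\FaceUnion_k$ together with a directed-path argument in $\Dn{\Map}$: the face $\AboveF{a}$ is an ancestor of both $\AboveF{b}$ and $\AboveF{c}$ along directed paths in $\Dn{\Map}$ that pass through the faces incident to $a$ on the $bc$-side.

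Lemma~\ref{L:shell-sink} then converts $(\Map, \SouthMap)$ into the explicit longitudinal morph; the total running time is $O(n)$ since topological sort, incremental height assignment, and writing the linear interpolation each take $O(n)$ time.  The main obstacle is proving the geometric lemma about the apex joining the south boundary first: establishing it requires a careful analysis of the rotation system around $a$ to construct the needed directed paths in $\Dn{\Map}$, and ensuring that the apex-before-base ordering holds simultaneously for every up-face eventually shelled with both legs shared, not merely for an isolated face chosen in advance.
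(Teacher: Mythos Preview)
Your high-level plan---compute a longitudinal shelling, then build a $\theta$-equivalent southern drawing by placing vertices incrementally along the shelling---matches the paper.  But the specific placement rule you propose (give each newly appearing vertex a cylinder height strictly below every previously chosen height) does not work, because the geometric lemma you rely on is false.

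You claim that whenever an up-face $(a,b,c)$ is shelled with both legs shared, the apex $a$ joined the south boundary before both base vertices.  First, note that \emph{every} up-face is shelled with both legs shared, since up-faces have in-degree~$2$ in $\Dn\Map$; your case ``up-face attached along a single leg'' never arises.  Second, and fatally, a base vertex can appear arbitrarily early.  If $b$ is a vertex of the north face then $\AboveF{b}=\NorthF$ is shelled first and $b$ receives height $H_0$; for any up-face $\BelowF{a}$ with $b$ on its base and $a\notin\NorthF$, the apex $a$ is introduced only when the later down-face $\AboveF{a}$ is shelled, so $h'_a<h'_b$.  More generally, nothing forces $\AboveF{a}$ to precede $\AboveF{b}$ or $\AboveF{c}$ in topological order, and when $h'_a<\min(h'_b,h'_c)$ the apex lies below the chord $bc$, inverting the face in~$\SouthMap$.  (A smaller slip: a uniform downward translation on the cylinder does \emph{not} preserve the signs of the volume determinants; you would need a positive scaling of the $z$-column instead.)

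The paper fixes this with a different invariant.  Working in the plane $z=-1$, it maintains that the already-placed south-boundary vertices form a \emph{weakly convex} polygon~$W$.  When a down-face is shelled, the new apex is placed \emph{on} the base segment---so every down-face of $\SouthMap$ is degenerate---which is exactly what keeps $W$ weakly convex.  When an up-face is then shelled, its three vertices lie on $\partial W$, and weak convexity guarantees the face is not inverted.  Because $\SouthMap$ is only a weak triangulation, the paper invokes Lemma~\ref{L:weak-sink} rather than Lemma~\ref{L:shell-sink}.
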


\begin{proof}
Let $\Map$ be any longitudinally shellable triangulation.  Corollary \ref{C:weak-sink} implies that to prove~$\Map$ is sinkable, it suffices to construct a southern weak triangulation $\SouthMap$ that is longitudinally equivalent to $\Map$, in $O(n)$ time.  By projecting or scaling to the plane $z = -1$, we can think of $\SouthMap$ as a weak \emph{planar straight-line} triangulation that is $\theta$-equivalent to~$\Map$, meaning each vertex of $\SouthMap$ must lie on the ray from the origin determined by its longitude in $\Map$.  Because the edges of $\SouthMap$ are straight line segments, our algorithm only needs to specify the location $p'_i$ of each vertex $i$ in~$\SouthMap$.

Following Awartani and Henderson \cite[Theorem 3.3]{ah-sgts-87}, we construct $\SouthMap$ by embedding the \emph{faces} of $\Gamma$ one at a time, following an arbitrary longitudinal shelling order $f_1, f_2, \dots, f_{2n-4}$.  Throughout the construction, we maintain a weakly convex polygon $W$---that is, a polygon with convex interior whose vertices have interior angle \emph{at most} $\pi$---whose interior contains the origin. This polygon satisfies the following invariants for each vertex $i$, after faces $f_1, \dots, f_k$ have been embedded:

\begin{itemize}
\item
If $\AboveF{i}$ has been embedded, then $p'_i$ lies outside the interior of $W$.

\item
If in addition $\BelowF{i}$ has \emph{not} been embedded, then $p'_i$ is a vertex of $W$.
\end{itemize}

To start the construction, we embed the north face $f_1$ by fixing its vertices \emph{arbitrarily} on their respective rays, for example on the unit circle.  We also initialize $W$ to be the convex hull of these three points.  Then for each index $k$ from $2$ to $2n-5$, we proceed as follows:
\begin{itemize}
\item
If $f_k$ is an up-face, then all three vertices of $f_k$ have already been placed on the boundary of $W$, so we embed $f_k$ as the convex hull of its vertices.  Because $W$ is weakly convex, it contains the base segment of $f_k$.  After embedding $f_k$, the apex of~$f_k$ is no longer a vertex of $W$, and the base of $f_k$ becomes an edge of $W$.  (The vertices of $f_k$ may be collinear, in which case the interior of $W$ does not change.)

\item
On the other hand, if $f_k$ is a down-face, the base vertices of $f_k$ have already been placed on the boundary of $W$, but not the apex.  We place the apex of $f_k$ at the intersection of its ray and the base of $f_k$; thus, $f_k$ is embedded as a degenerate triangle.  The apex of $f_k$ becomes a new vertex of $W$, and the interior of $W$ does not change.
\end{itemize}
In both cases, it is easy to verify that the invariants on $W$ hold after $f_k$ is embedded.  Finally, when we consider the south face $f_{2n-4}$, all of its vertices have already been placed.
\end{proof}

Awartani and Henderson's construction can be described more concisely as follows.  We construct a southern weak triangulation $\SouthMap$ that is longitudinally equivalent to $\Map$, by considering vertices in any topological order of $\DnVee\Map$ and assigning a new coordinate $z'_i$ to each vertex.  (This topological order is an example of a \emph{canonical order} for $\Map$~\cite{fpp-hdpgg-90, cdf-cotca-18}.)  The first three vertices lie on the north face and thus can be placed anywhere below the equator on their respective longitudes.  For each later vertex $i\ge 4$, we make $z'_i$ as large as possible, such that no face induced by vertices~$1$ through $i$ (except the north face) is inverted.  Awartani and Henderson's argument implies that every down-face in the resulting triangulation $\SouthMap$ is degenerate.


\subsection{One-Bend Morphing}

Several authors have developed morphing algorithms for planar graphs that allow (or require) \emph{bends} in intermediate edges~\cite{lp-mpgdb-11,blps-mopgd-13,bls-mpgdo-24,ekp-ifmpg-03}.  To close this section, we describe an efficient morphing algorithm that introduces at most one bend into each edge; that is, we can efficiently construct an isotopy between any two isomorphic spherical triangulations, such that each edge of every intermediate triangulation is either a shortest path or the concatenation of two shortest paths.  We call such an isotopy a \emph{one-bend morph}.

Our algorithm uses a mild generalization of longitudinal morphing.  A \emph{rotated longitudinal morph} moves all vertices move along great circular arcs through some pair of antipodal points, called the \emph{poles} of the morph.  Rotated longitudinal morph are the natural spherical analogue of \emph{unidirectional morphs} in the plane, which move the vertices of a planar straight-line graph along parallel lines at constant speeds \cite{bhl-mpgdu-13,aabcd-hmpgd-17}.  Any rotated longitudinal morph can be implemented by linearly interpolating the homogeneous coordinate vectors of vertices along parallel lines in $\Real^3$, as described in the proof of Lemma~\ref{L:shell-sink}.

\begin{corollary}
Every pair of isomorphic $n$-vertex spherical triangulations is connected by a one-bend morph consisting of $O(1)$ rotations and $O(n)$ rotated longitudinal morphs.  Moreover, this one-bend morph can be computed in $O(n^2)$ time.
\end{corollary}

\begin{proof}
Let $\Map_0$ and $\Map_1$ be arbitrary isomorphic triangulations with the same underlying  planar graph $G$.  If necessary, we rotate both triangulations so that they have the same north face, and so that every great circle through the poles contains at most one vertex of each.  We then construct a one-bend morph from $\Map_0$ and $\Map_1$ through an intermediate \emph{southern} coherent triangulation $\Map_{1/2}$, similarly to our strategy in Section \ref{SS:strategy}.  Again, the intermediate triangulation $\Map_{1/2}$ can be constructed in $O(n)$ time \cite{dg-cop3d-97}.

To construct a one-bend morph from $\Map_0$ to $\Map_{1/2}$, we start by refining $\Map_0$ so that it contains a longitudinal seam, as shown in Figure \ref{F:seam}.  Let $\ell$ be the longitude through an arbitrary vertex of the north face of~$\Map_0$.  We refine $\Map_0$ by introducing bend vertices at the intersection of $\ell$ with each edge of $\Map_0$, and then adding edges that split the south face into two smaller triangles and each non-polar face intersecting $\ell$ into three smaller triangles.  Call the resulting refined triangulation $\tilde\Map_0$.  Awartani and Henderson's results (Corollary~\ref{Co:AH-SeamCondition} and Theorem~\ref{Th:AH-embedding}) imply that we can longitudinally morph~$\tilde\Map_0$ to a southern triangulation $\tilde\Map'_0$ in $O(n)$ time.

\begin{figure}[htb]
\centering
\includegraphics[scale=0.2]{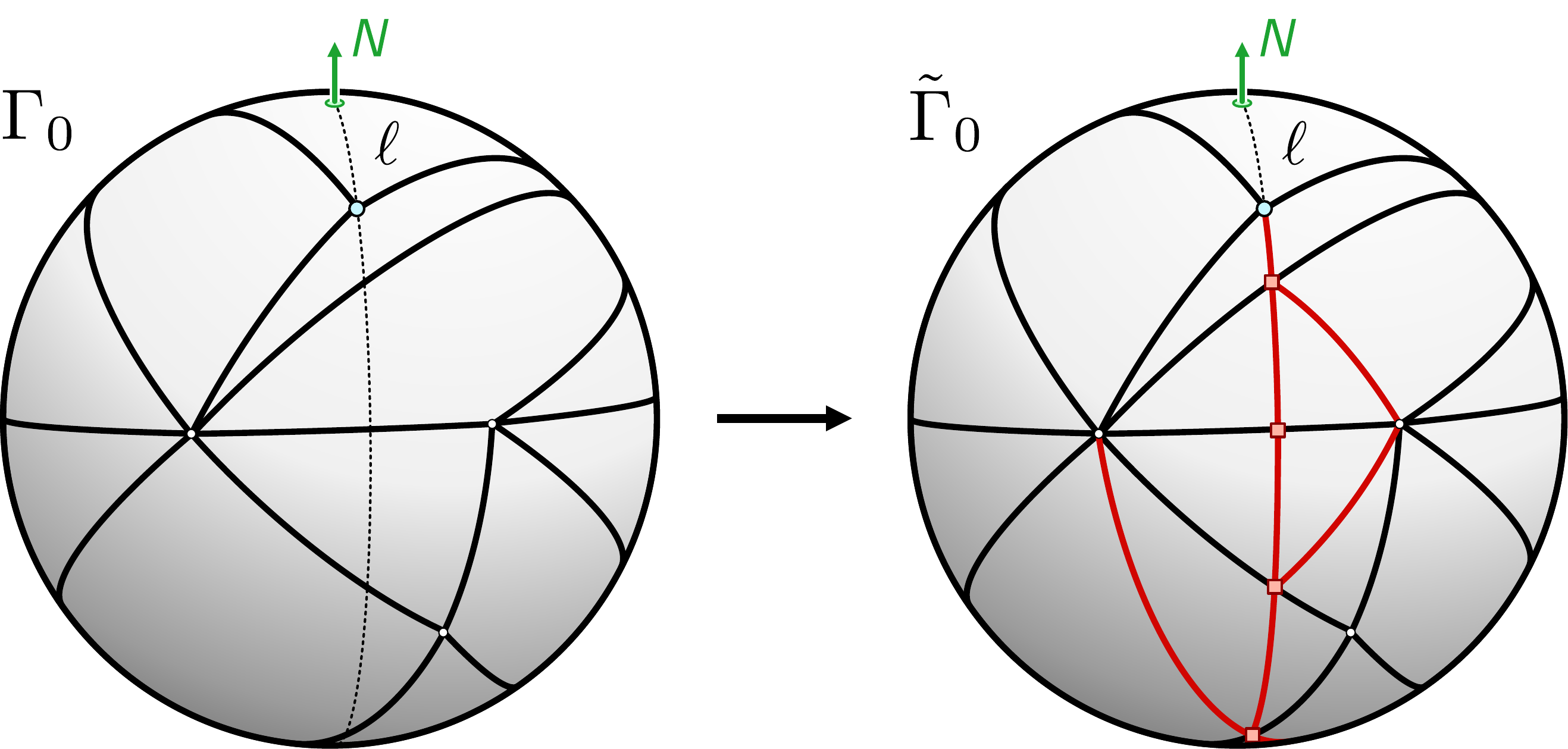}
\caption{Refining a triangulation along a longitude.}
\label{F:seam}
\end{figure}

We then refine $\Map_{1/2}$ into a southern triangulation $\tilde\Map_{1/2}$ isomorphic to $\tilde\Map'_0$, by bisecting the same edges and subdividing the same faces as we did to refine $\Map_0$.  Finally, we can construct a morph between the southern triangulations $\tilde\Map'_0$ and $\tilde\Map_{1/2}$ using any planar morphing algorithm.  In particular, an algorithm of Alamdari \etal~\cite{aabcd-hmpgd-17} computes a planar morph consisting of a sequence of $O(n)$ unidirectional morphs; an improvement by Klemz \cite{k-cdhgl-21} computes this morph in $O(n^2)$ time.  Lifting a unidirectional morph in the plane $z=-1$ to the sphere yields a rotated longitudinal morph whose poles lie on the equatorial plane $z=0$.

Putting the pieces together, we obtain a morph from the refined triangulation $\tilde\Map_0$ to the refined coherent triangulation $\tilde\Map_{1/2}$ that consists of $O(n)$ rotated longitudinal morphs.  Ignoring the refinement edges yields a morph from $\Map_0$ to $\Map_{1/2}$ where each intermediate edge has at most one bend.  We can extend this morph to $\Map_1$ by concatenating the reversal of a similar morph from $\Map_1$ to $\Map_{1/2}$.  The overall running time is dominated by two invocations of Klemz's planar morphing algorithm \cite{k-cdhgl-21}.
\end{proof}

Theorem \ref{Th:3con} immediately generalizes this result to 3-connected embeddings:

\begin{corollary}
Every pair of isomorphic shortest-path embeddings of the same 3-connected planar graph on the sphere is connected by a one-bend morph consisting of $O(1)$ rotations and $O(n)$ rotated longitudinal morphs.  Moreover, this one-bend morph can be computed in $O(n^2)$ time.
\end{corollary}

\section{Longitudinally Shellable Rotations}
\label{S:shellrot}

When a given triangulation $\Map$ is not longitudinally shellable, we can still apply our morphing strategy if we can find a rotation of the sphere that transforms $\Map$ into a longitudinally shellable triangulation.  In this section, we describe an efficient algorithm that either finds such a rotation or correctly reports that no such rotation exists.  Experimental evidence (described in Section~\ref{S:experiment}) suggests that “in practice”, even for triangulations with many long skinny triangles, a significant fraction of rotations of $\Gamma$ are longitudinally shellable, so it suffices to consider a small number of random rotations.  In fact, without exception, every one of the thousands of random adversarial triangulations we generated had shellable rotations.  Nevertheless, in Section~\ref{SS:no-rot}, we construct a triangulation with no longitudinally shellable rotation. 

\subsection{Finding a Shelling Direction}

Instead of considering different rotations of $\Map$, it is convenient to keep $\Map$ fixed and consider different locations for the “north pole”.  We call a unit vector $p$ a \EMPH{shelling direction} for~$\Map$ if any (and therefore every) rotation of the sphere that takes $p$ to the standard north pole $(0,0,1)$ also takes $\Map$ to a longitudinally shellable triangulation.  We similarly define the directed graphs $\Dn{\Map}(p)$ and $\Rt{\Map}(p)$ and $\DnVee\Map(p)$.  For example, for any unit vector $p = (x,y,z)$, the graph $\Rt{\Map}(p)$ contains the edge $\arc{i}{j}$ if and only if 
\[
	\vol(p,i,j) \coloneq
	\det\begin{pmatrix}
		x & y & z \\
		x_i & y_i & z_i \\
		x_j & y_j & z_j \\
	\end{pmatrix}
	> 0,
\]
and $p$ is a shelling direction for $\Map$ if and only if the graphs $\Dn{\Map}(p)$ and $\DnVee{\Map}(p)$ are acyclic.

\begin{theorem}
\label{Th:shellrot}
Given any shortest-path triangulation $\Map$ on the sphere, we can either compute a shelling direction for $\Map$ or report correctly that no such direction exists, in $O(n^{2.5}\log^3 n)$ time.
\end{theorem}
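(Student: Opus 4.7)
The plan exploits the fact that shellability is a combinatorial property that changes only at discrete events in the space of candidate north poles $p\in S^2$. By Lemma~\ref{L:shellable}(d), $p$ is a shelling direction iff $\Rt{\Map}(p)$ contains directed paths from some vertex of $\NorthF(p)$ to some vertex of $\SouthF(p)$ and vice versa. Each edge $ij$ in $\Rt{\Map}(p)$ depends on $p$ only through the sign of $\vol(p,i,j)$, which flips exactly when $p$ crosses the great circle $\gamma_{ij}$ through vertices $i$ and $j$. The $3n-6$ great circles $\set{\gamma_{ij}}$ form a spherical arrangement $\mathcal{A}$ with $O(n^2)$ cells, and the shellability status of $p$ is constant on each cell, so it suffices to test one representative point per cell.

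First I would construct $\mathcal{A}$ and its dual adjacency graph in $O(n^2)$ time via standard incremental construction, pick a representative $p_0$ in a starting cell $C_0$, and compute $\Rt{\Map}(p_0)$ in $O(n)$ time. To reduce the per-cell check to $O(1)$ reachability queries, I augment $\Rt{\Map}(p)$ with two super-vertices $N$ and $S$, placed inside $\NorthF(p)$ and $\SouthF(p)$ and connected by outgoing edges to the three vertices of $\NorthF(p)$ and by incoming edges from the three vertices of $\SouthF(p)$, respectively; Lemma~\ref{L:shellable}(d) is then equivalent to the existence of directed paths $N\leadsto S$ and $S\leadsto N$. I initialize the dynamic reachability data structure of Diks and Sankowski~\cite{ds-tptc-07} on this augmented plane digraph, which has $O(n)$ vertices and edges and supports edge insertions, deletions, and reachability queries in $O(n^{1/2}\log^3 n)$ worst-case time. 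I then traverse all cells of $\mathcal{A}$ by DFS on its dual graph: each transition crosses a single great circle $\gamma_{ij}$, reversing the orientation of exactly one primal edge, and, when $\gamma_{ij}$ bounds the current north or south face, shifting that polar face by one adjacent triangle. Each transition therefore triggers $O(1)$ edge updates in the data structure, followed by two reachability queries to test condition~(d). If some cell passes the test, I return any interior point as the shelling direction; otherwise, Lemma~\ref{L:shellable} guarantees that no shelling direction exists and the algorithm reports this correctly.

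The hard part will be verifying that each cell transition really costs only $O(n^{1/2}\log^3 n)$ amortized: in particular, carefully maintaining the attachments of $N$ and $S$ whenever the crossing lies on the boundary of the current polar face, and preserving planarity of the augmented digraph so that the Diks--Sankowski bounds apply (placing $N$ inside $\NorthF(p)$ and $S$ inside $\SouthF(p)$ keeps the graph planar but requires a local redraw whenever a polar face changes). Degenerate directions where $p$ lies on a great circle or at a vertex of $\mathcal{A}$ are handled by standard symbolic perturbation of $p$. Summing $O(n^{1/2}\log^3 n)$ over the $O(n^2)$ cells of $\mathcal{A}$ then yields the claimed $O(n^{5/2}\log^3 n)$ total running time.
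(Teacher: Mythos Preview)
Your approach is essentially identical to the paper's: build the arrangement of the $3n-6$ great circles supporting the edges of $\Map$, walk its $O(n^2)$ two-dimensional cells via the dual graph, maintain the directed primal graph under single-edge reversals with the Diks--Sankowski dynamic planar reachability structure, and test condition~(d) of Lemma~\ref{L:shellable} with $O(1)$ queries per cell.

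One small slip: as wired, $N$ has only outgoing edges and $S$ only incoming edges, so the query $S\leadsto N$ is vacuously false; you need edges in both directions at each super-vertex (or a second pair of super-vertices) to perform the reverse test. The fixed-embedding concern you flag is real for a super-vertex that must migrate between faces, but it disappears if you drop the super-vertices entirely and query reachability directly between the $O(1)$ ordered pairs of vertices of the current north and south faces; then the plane graph handed to Diks--Sankowski is just $\Rt\Map$ with its fixed embedding, a polar-face change costs nothing beyond recording which two faces are polar, and each cell still requires only $O(1)$ queries.
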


\def\Cell{C}

\begin{proof}
Each edge $ij$ in $\Map$ lies on a unique great circle; let $\mathcal{A}(\Map)$ denote the arrangement of all $3n-6$ such great circles.  For any two unit vectors $p$ and $q$, the graphs $\Rt\Map(p)$ and~$\Rt\Map(q)$ are identical if and only if $p$ and~$q$ lie in the same cell of $\mathcal{A}(\Gamma)$.  Thus, for any cell~$\Cell$, we can write $\Rt\Map(\Cell)$ to denote the graph $\Rt\Map(p)$ for any $p\in \Cell$.  If $\Cell$ and $\Cell'$ are adjacent two-dimensional cells of $\mathcal{A}(\Gamma)$, then $\Rt\Map(\Cell)$ and $\Rt\Map(\Cell')$ differ in the direction of exactly one edge. 

We can find a shelling direction for $\Map$ in $O(n^3)$ time by constructing the great-circle arrangement $\mathcal{A}(\Map)$, and then for each two-dimensional cell $C$, check whether $\Rt\Map(C)$ is strongly connected. %
(This arrangement is centrally symmetric, and the gnomonic projection of either hemisphere to any tangent plane is an arrangement of \emph{lines}.  Thus, we can use any classical algorithm to construct line arrangements \cite{cgl-pgd-85,eg-tsa-89,eos-calha-86} instead of more sophisticated algorithms to construct arrangements of more general circles on spheres \cite{cl-cacsa-09}.) 

We can speed up this naive algorithm using a data structure of Diks and Sankowski for reachability queries in dynamic directed plane graphs~\cite{ds-tptc-07}.  Diks and Sankowski's data structure maintains a directed planar graph with a fixed embedding, supports edge insertions and deletions that do not change the embedding in $O(\sqrt{n}\log^3 n)$ time, and supports queries of the form “Is there a directed path from vertex $i$ to vertex $j$?” in $O(\sqrt{n}\log^2 n)$ time.

To use this data structure, we traverse the dual graph of the arrangement $\mathcal{A}(\Map)$.  At each two-dimensional cell $\Cell$, we perform two reachability queries in $\Rt\Map(\Cell)$ between the two polar faces.  If both reachability queries succeed, then by Lemma \ref{L:shellable}(d), cell $\Cell$ contains a shelling direction.  When we move from a cell $\Cell$ to a neighboring cell $\Cell'$, we delete one edge of $\Rt\Map(C)$ and insert its reversal.  Altogether, we perform $O(n^2)$ queries and $O(n^2)$ updates.
\end{proof}

\subsection{Unshellable From Every Direction}
\label{SS:no-rot}

Regard each edge of the undirected dual graph $\Map^*$ as a pair of opposing directed edges, which we call \EMPH{darts}.  For any directed cycle $\gamma^*$ of darts in $\Map^*$, we define the \EMPH{polar region $P(\gamma^*)$} to be the set of all north poles $p$ such that the directed dual graph $\Dn{\Map}(p)$ contains every dart in $\gamma^*$.  A unit vector $p$ is a shelling direction for $\Map$ if and only if $p \not\in P(\gamma^*)$ for \emph{every} directed dual cycle $\gamma^*$.

\begin{lemma}
For every directed cycle $\gamma^*$ of darts in the dual graph $\Map^*$, the corresponding polar region $P(\gamma^*)$ is the interior of a convex spherical polygon, that is, the intersection of~$S^2$ with a (possibly empty) open convex polyhedral cone.
\end{lemma}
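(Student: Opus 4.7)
The plan is to reduce, for each dart $d$ of $\Map^*$, the condition ``$d$ appears in $\Dn{\Map}(p)$'' to a single strict linear inequality on $p$. Once this is done, $P(\gamma^*)$ is automatically the intersection of $S^2$ with an open convex polyhedral cone, namely the cone cut out by the inequalities corresponding to the darts of $\gamma^*$.

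First I would fix a dart $d = f \arcto f'$ of $\Map^*$, let $ij$ be its shared primal edge, and let $k$ and $k'$ be the apices of $f$ and $f'$ opposite $ij$, respectively. Because $\Map$ is generic, $\vol(k,i,j)$ and $\vol(k',i,j)$ are nonzero and of opposite signs. The key geometric step is to show that $d$ belongs to $\Dn{\Map}(p)$ if and only if $p$ and $k$ lie strictly on the same side of the great circle $G_{ij}$ through $i$ and $j$, equivalently
\[
    \vol(p,i,j)\cdot\vol(k,i,j) \,>\, 0,
\]
which is a strict linear inequality on $p$ of the form $p \cdot n_d > 0$ with $n_d := \operatorname{sign}(\vol(k,i,j))\cdot(i \times j)$. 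The main obstacle is precisely this geometric verification, and I would settle it by picking an interior point $m$ of the arc $ij$ and examining the tangent vector to the $p$-longitude through $m$ in the direction of $p$, which is proportional to $p - (p\cdot m)\,m$. Moving infinitesimally from $m$ along this tangent crosses into the open hemisphere of $G_{ij}$ containing $p$, and hence into the face of $\Map$ whose apex lies on that side of $G_{ij}$; this face is therefore immediately north of $ij$ and so must be the tail $f$ of $d$, which pins down the sign.

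With this reduction in place, the conclusion is immediate. By the definition of $P(\gamma^*)$,
\[
    P(\gamma^*) \;=\; S^2 \,\cap\, \bigcap_{d \in \gamma^*} \bigl\{\,p \in \mathbb{R}^3 : p \cdot n_d > 0\,\bigr\}.
\]
The intersection on the right is an intersection of finitely many open linear half-spaces through the origin, hence a (possibly empty) open convex polyhedral cone $K$. Because an intersection of open half-spaces through the origin is either empty or full-dimensional, $S^2 \cap K$ is either empty or the interior of a convex spherical polygon, as claimed.
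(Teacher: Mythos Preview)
Your proposal is correct and follows essentially the same approach as the paper: reduce each dart's membership in $\Dn{\Map}(p)$ to a single open-hemisphere condition on $p$ determined by the primal edge $ij$, and then intersect these hemispheres over the darts of $\gamma^*$. The paper's version is shorter only because this sign condition is already built into the earlier definition of $\Rt{\Map}(p)$ (and hence its dual $\Dn{\Map}(p)$) via $\vol(p,i,j)>0$, so your tangent-vector verification, while correct, is more work than the paper needs at this point.
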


\begin{proof}
Let $\arc{i}{j}$ be any dart in $\Map$, and let $f$ and $g$ be the faces incident to $\arc{i}{j}$ on the left and right, respectively.  The directed graph $\Dn\Map(p)$ contains the edge $f\arcto g$ if and only if $\vol(p,i,j) > 0$.  The set of all points $p$ satisfying this inequality is an open hemisphere $H(\arc{f}{g})$ bounded by the great circle through $\arc{i}{j}$.  Finally, the polar region $P(\gamma^*)$ of any dual cycle $\gamma^*$ is the intersection of the hemispheres $H(\arc{f}{g})$ for all $\arc{f}{g}\in\gamma^*$.
\end{proof}

A \EMPH{rotor} is the subset of faces of $\Map$ dual to any directed cycle $\gamma^*$ in the dual graph~$\Map^*$.  We sketch the construction of a triangulation $\Map$ containing a small number of rotors, whose polar regions cover the entire sphere.  We describe our construction in terms of an arbitrary sufficiently small angular parameter $\e>0$; in practice, it suffices to set $\e \approx 0.01 \approx 0.5^\circ$.  

First we define a simple \emph{equatorial rotor}, which triangulates a belt of width $O(\e^2)$ around a great circle using $O(1/\e)$ isosceles triangles with aspect ratio $O(\e)$ and with all edges at angle $O(\e)$ from the great circle.  The triangulation edges are consistently oriented so that for any north pole $p$ sufficiently far from the rotor, the rotor defines a cycle in the directed dual graph $\Dn\Map(p)$.  The polar region of (one orientation of) the rotor is a convex spherical polygon whose boundary has Hausdorff distance $O(\e)$ from the rotor itself.  In particular, in the limit as $\e$ approaches zero, this polar region approaches an open  hemisphere.

\begin{figure}[htb]
\centering
\includegraphics[scale=0.4]{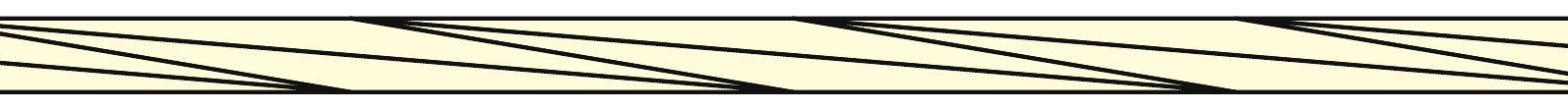}
\caption{Local structure of an equatorial rotor.}
\end{figure}

Our unshellable triangulation contains four modified equatorial rotors, whose central great circles are parallel to the faces of a regular tetrahedron.  Let~$Q$ denote the arrangement of these four great circles; $Q$ is the central projection of an inscribed semi-regular cuboctahedron; see Figure \ref{F:qbok}.  Each pair of great circles meets at an angle of $\arccos(1/3) \approx 70.529^\circ$.

\begin{figure}[htb]
\centering
\includegraphics[scale=0.4]{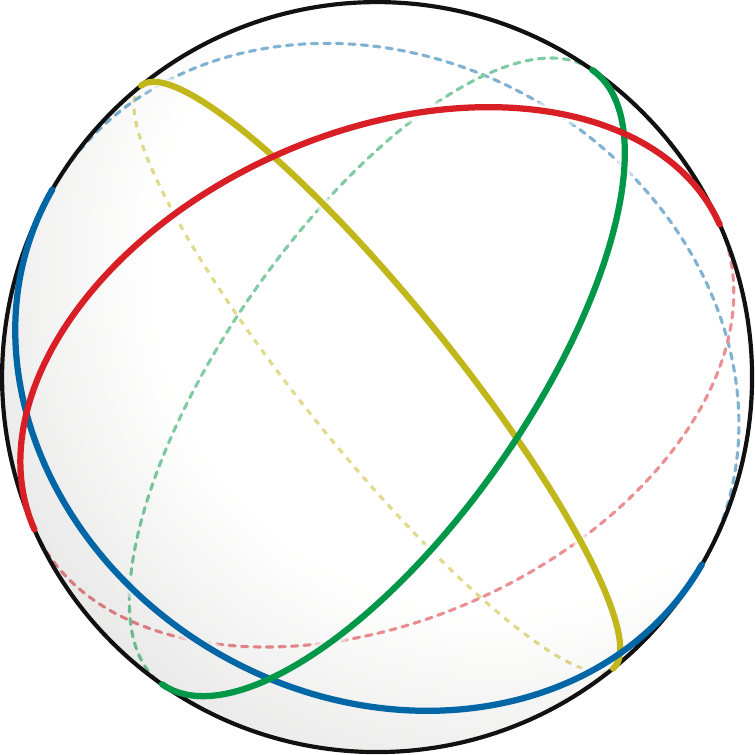}
\caption{Four great circles defining a spherical cuboctahedron.}
\label{F:qbok}
\end{figure}


At each vertex of $Q$, we modify the two equatorial rotors by introducing a \emph{crossing gadget}, illustrated in Figure \ref{F:xgadget}(a).  Overall the gadget resembles a rhombus with diameter $O(\e)$ whose edges are parallel to the great circles defining the rotors.  Each of the equatorial rotors is broken and offset by $O(\e)$ to cover two opposite edges of this rhombus.  Then three new edges are added to reconnect both rotors; these edges have distance and angle $O(\e)$ from the long diagonal of the rhombus and the bisector of the smaller angle between the two great circles.  The two fat triangles inside the rhombus are incorporated into one of the two rotors (the vertical red rotor in Figure \ref{F:xgadget}); the two thin triangles near the diagonal are incorporated into both.  Figure \ref{F:no-shell-plan}(b) shows a schematic of our overall construction.

\begin{figure}[htb]
\centering\scriptsize\sffamily
\begin{tabular}{c@{\quad}c}
\raisebox{-0.5\height}{\includegraphics[scale=0.29]{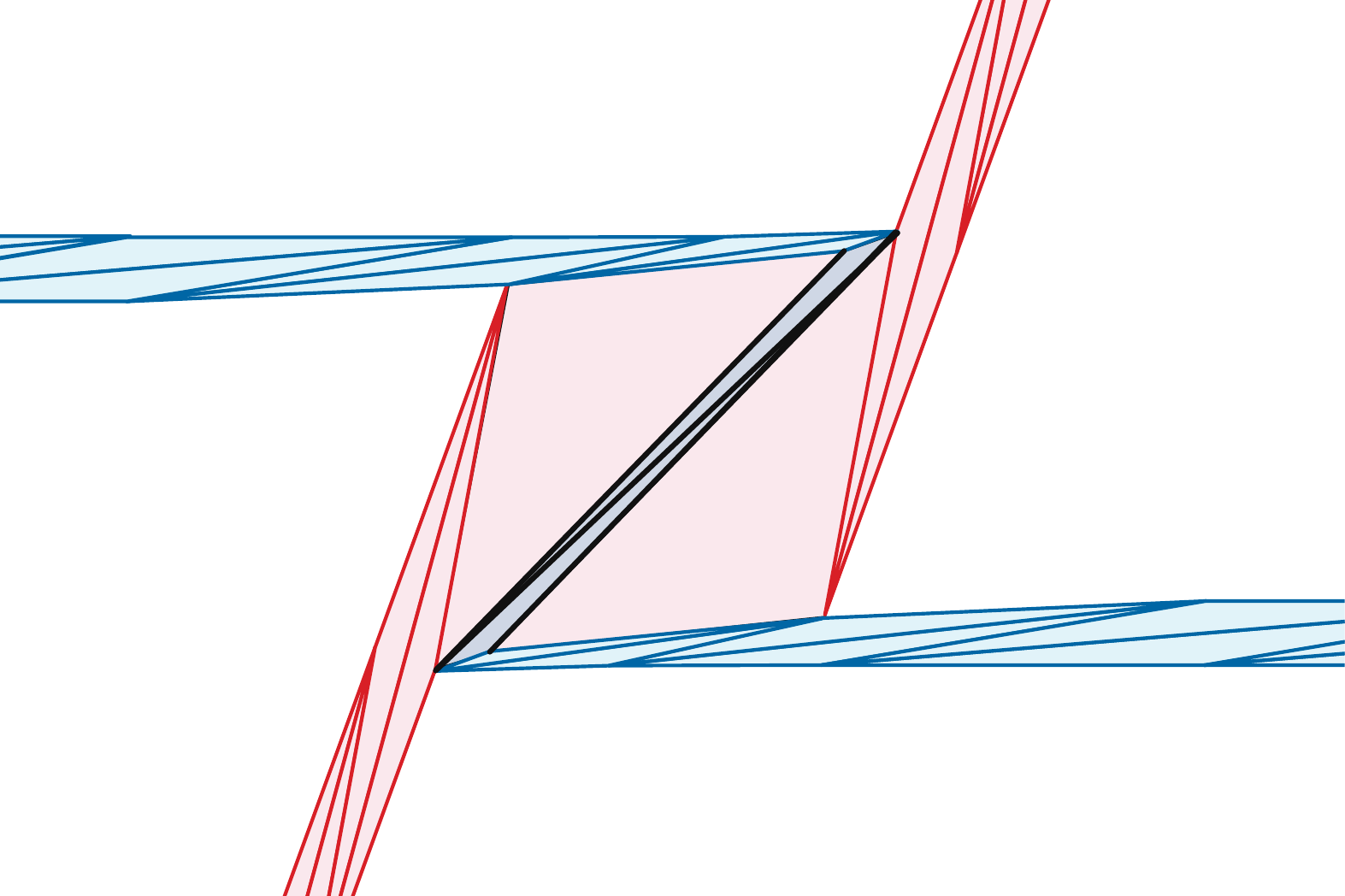}} &
\raisebox{-0.5\height}{\includegraphics[scale=0.4]{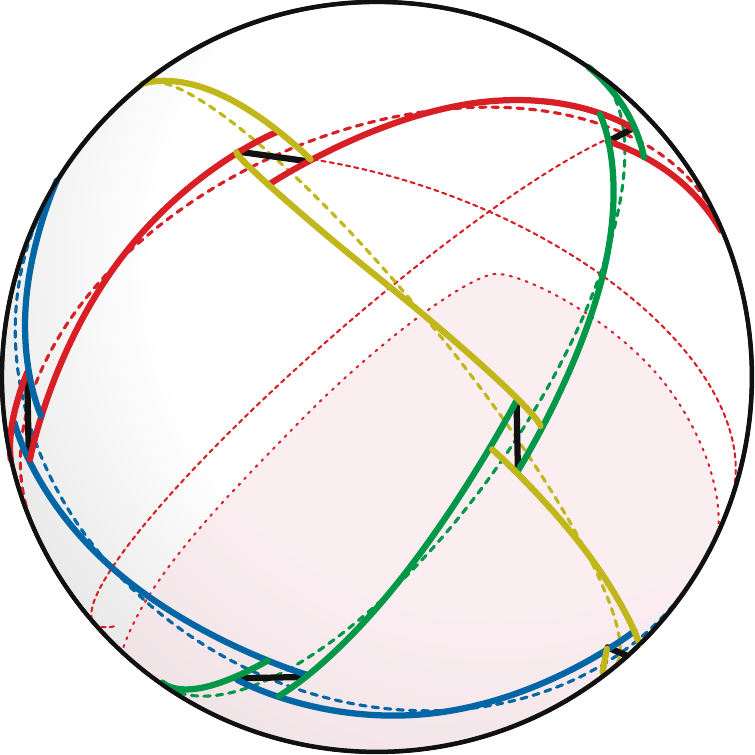}} \\
(a) & (b)
\end{tabular}
\caption{(a)~A single crossing gadget.  (b)~Constructing a triangulation with no shelling direction.  Each color of broken lines is an equatorial rotor; each short black line is the diagonal of a crossing gadget. The polar region of one (red) equatorial rotor is shaded.}
\label{F:xgadget}
\label{F:no-shell-plan}
\end{figure}

Now let $\gamma^*$ be a directed cycle in the dual graph $\Map^*$ through the faces of one modified equatorial rotor.  The corresponding polar region $P(\gamma^*)$ nearly fills a triangular region bounded by three bisector circles, each defined by an antipodal pair of crossing gadgets.  This polar region completely covers one triangular face and nearly half of three square faces of $Q$.  The reversal of~$\gamma^*$ defines an antipodally symmetric polar region.

Thus, the union of the polar regions defined by all four equatorial rotors covers the entire sphere except for a small “hole” near the center of each square face of $Q$, which we can make arbitrarily small by choosing $\e$ appropriately.  We cover these holes by adding a small rotor, reminiscent of the projection of a Schönhardt polyhedron, inside each square face of~$Q$; see Figure \ref{F:square-rotor}.  Finally, to complete the triangulation $\Map$, we arbitrarily triangulate the area between the rotors.  

\begin{figure}[htb]
\centering
\includegraphics[scale=0.4]{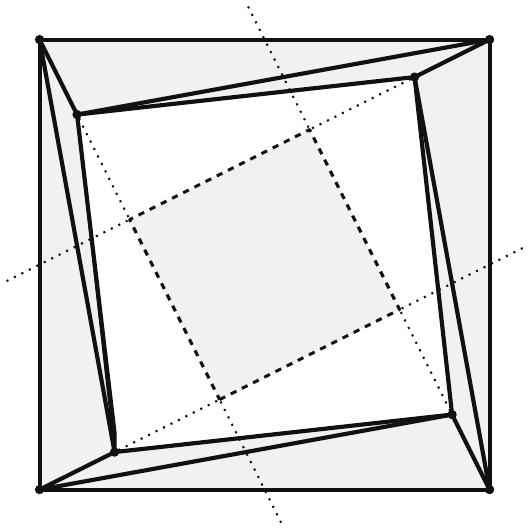}
\caption{A square rotor (solid lines) and its polar region (dashed lines)}
\label{F:square-rotor}
\end{figure}

\begin{theorem}
There is a shortest-path triangulation of the sphere with no longitudinal shelling direction.
\end{theorem}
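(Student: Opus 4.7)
The plan is to verify rigorously that the construction sketched in the preceding paragraphs produces a triangulation whose polar regions collectively cover the entire sphere, so that Lemma~\ref{L:shellable}(b) fails for every choice of north pole. Throughout, I will work with a sufficiently small angular parameter $\e > 0$ and carry out all estimates to leading order in $\e$.

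First, I would formally specify the equatorial rotor. Fix a great circle $\gamma$ and place $k = \Theta(1/\e)$ pairs of alternating “upper” and “lower” vertices, each at angular distance $\Theta(\e^2)$ from $\gamma$, with consecutive vertices separated by angular distance $\Theta(\e)$ along $\gamma$. Triangulating this belt using isosceles triangles whose short edges cross $\gamma$ and whose long edges run nearly parallel to $\gamma$ gives a ring of $2k$ faces; I would then check that choosing one of the two consistent orientations of the dual ring makes it a directed cycle $\gamma^*$ in $\Dn\Map(p)$ for every $p$ whose distance from $\gamma$ exceeds some $c\e$ with $c = O(1)$. The polar region $P(\gamma^*)$ is, by the lemma preceding the construction, the intersection of the $2k$ open hemispheres $H(\arc{f}{g})$ dual to the rotor edges. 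A direct calculation shows that each such hemisphere is bounded by a great circle at Hausdorff distance $O(\e)$ from $\gamma$, and their intersection is a convex spherical polygon whose boundary circles are the great circles through consecutive long edges of the rotor; this polygon approaches one of the two open hemispheres bounded by $\gamma$ as $\e \to 0$.

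Second, I would handle the crossing gadgets. For each vertex $v$ of the cuboctahedral arrangement $Q$, two equatorial rotors meet at angle $\arccos(1/3)$; the gadget in Figure~\ref{F:xgadget}(a) replaces a small neighborhood of $v$ by a rhombus of diameter $O(\e)$ together with three diagonal edges. I would verify two local properties: (i)~the two dual cycles of the crossed rotors survive as directed cycles through the gadget, once their orientations are chosen consistently with the global orientation of each equatorial ring; and (ii)~the great circles through the gadget's diagonal edges are at angular distance $O(\e)$ from the angle bisector of $\gamma$ and $\gamma'$. Consequence (ii) implies that the boundary of the polar region of each equatorial rotor, near $v$, is a bisector great circle; so the polar region of an entire equatorial rotor is a spherical polygon bounded by the four antipodal bisectors associated with its four crossing gadgets.

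Third, I would carry out the covering calculation. The four bisector great circles associated with any fixed equatorial rotor cut out, up to $O(\e)$ error, a specific convex spherical triangle $T$ bounded by three bisectors; this $T$ contains the interior of one triangular face of $Q$ and exactly half of each of the three square faces adjacent to it. Summing over the four equatorial rotors (using both orientations of each) and over antipodal images, I would check combinatorially that the union of these eight spherical triangles covers every triangular face of $Q$ completely and covers every square face of $Q$ except for a small central region whose diameter tends to $0$ with $\e$. Finally, I would insert a Schönhardt-style square rotor of size $\Theta(\e)$ into each square face of $Q$, oriented so that for any $p$ in the central hole its dual cycle appears in $\Dn\Map(p)$; Figure~\ref{F:square-rotor} shows that the polar region of such a rotor contains the hole for $\e$ sufficiently small. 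After arbitrarily triangulating the remaining regions between the rotors, every unit vector $p$ lies in the polar region of some rotor, so $\Dn\Map(p)$ contains a directed cycle and $p$ is not a shelling direction.

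The main obstacle is the covering calculation, since “polar region” is a nonlinear intersection of hemispheres and my leading-order estimates must be uniform over all four rotors and all vertices of $Q$. The delicate step is choosing $\e$ small enough that (a)~each equatorial rotor's polar region really is bounded by the bisector great circles predicted by the limiting picture (not pulled inward by accidental constraints from non-adjacent rotor edges), and (b)~the residual central holes inside the square faces of $Q$ are strictly contained in the polar region of the corresponding square rotor. Both reduce to explicit continuity arguments in $\e$, but they are the place where the sketch must become a proof.
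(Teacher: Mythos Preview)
Your approach is exactly the paper's: verify that the polar regions of the rotors in the construction cover $S^2$, then invoke Lemma~\ref{L:shellable}(b). The paper's written proof is a two-sentence appeal to ``by construction''; you are supplying the details it omits.

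There is, however, a genuine error in your account of the cuboctahedral geometry. Each of the four great circles in $Q$ crosses the other \emph{three}, so each equatorial rotor passes through \emph{six} vertices of $Q$---three antipodal pairs of crossing gadgets---not four. Correspondingly, the polar region of a modified equatorial rotor is bounded by \emph{three} bisector great circles, one per antipodal pair of gadgets; this is why the paper says the polar region ``nearly fills a triangular region bounded by three bisector circles, each defined by an antipodal pair of crossing gadgets.'' Your own text is internally inconsistent on this point: paragraph two says ``four antipodal bisectors associated with its four crossing gadgets,'' while paragraph three says ``four bisector great circles'' cut out ``a spherical triangle $T$ bounded by three bisectors.'' With four gadgets (two antipodal pairs) you would get only two bisector circles and a lune, not a triangle, and the covering of the triangular faces of $Q$ would fail. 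Once the count is corrected to six gadgets and three bisectors per rotor, the eight triangular polar regions and the six square rotors cover $S^2$ exactly as you describe, and the rest of your outline goes through.
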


\begin{proof}
By construction, every point $p\in S^2$ lies in the polar region of at least one rotor, and therefore in the polar region of at least one directed cycle in the dual graph $\Map^*$.  Thus, for every $p\in S^2$, the directed dual graph $\Dn\Map(p)$ is not acyclic.  The theorem follows immediately from Lemma \ref{L:shellable}.
\end{proof}

Nothing in our construction prevents our bad triangulation $\Map$ from being \emph{sinkable}.  In particular, just before we add the final rotors to cover the “holes” in the square faces of $Q$, the partial triangulation is still longitudinally shellable from any direction inside one of those holes.  After rotating and sinking this partial triangulation and then projecting to the plane $z=-1$, we can add the final rotors to the resulting planar embedding.  Moreover, experimental evidence  strongly suggests that every triangulation in which all edges have length less than $\pi/2$ is sinkable; see Conjecture~\ref{C:acute}.  Our bad triangulation $\Map$ satisfies this condition.

\section{Sinking}
\label{S:sink}

Finally, we present an exact characterization of sinkable triangulations.  Let $\InitMap$ be any shortest-path triangulation of the sphere.  Recall from Corollary~\ref{C:weak-sink} that $\InitMap$ is sinkable if and only if there is a weak triangulation $\SouthMap$, with all vertices below the equator, that is $\theta$-equivalent to $\InitMap$.

By scaling coordinate vectors if necessary, as described in Section \ref{SS:coords},  we can consider only weak triangulations $\SouthMap$ where each vertex has the same $x$- and $y$-coordinates as in $\Map$.  Let $(x_i, y_i, \InitZ_i)$ and $(x_i, y_i, \SouthZ_i)$ denote the coordinates of vertex $i$ in $\InitMap$ and $\SouthMap$, respectively.  Then $\SouthMap$ is $\theta$-equivalent to $\InitMap$ if and only if $\South{\vol}(i,j,k) \ge 0$ for every non-polar face $(i,j,k)$, where
\[
	\South{\vol}(i,j,k) \coloneq \det\begin{pmatrix}
		x_i & y_i & \SouthZ_i \\
		x_j & y_j & \SouthZ_j \\
		x_k & y_k & \SouthZ_k
	\end{pmatrix}
\]
We reiterate that because all $x$- and $y$-coordinates are fixed, the volume determinant $\vol'(i,j,k)$ is a linear function of the vector $\SouthZ$.

\begin{lemma}
\label{L:sink-LP}
A spherical triangulation $\InitMap$ is sinkable if and only if the following linear program is feasible:
\begin{equation}
	\def\arraystretch{1.2}
	\begin{array}{r@{\quad}r@{~}l@{\qquad}l}
	\text{maximize}
		& \sum_i \SouthZ_i \\
	\text{subject to}
		& \SouthZ_i & = -1 
			& \text{for every vertex $i$ of $\NorthF$} \\
		& \South{\vol}(i,j,k) & \ge 0
			& \text{for every non-polar face $(i,j,k)$}
	\end{array}
	\label{Eq:LP}
\end{equation}
\end{lemma}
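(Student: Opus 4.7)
The claim is a linearization of Lemma~\ref{L:weak-sink}, which reduces sinkability to the existence of a $\theta$-equivalent weak triangulation~$\SouthMap$ whose vertices all lie strictly below the equator. The setup preceding the statement already identifies $\theta$-equivalence with a shared $(x,y)$-coordinate vector and the non-inversion of each non-polar face of~$\SouthMap$ with the linear inequality $\South{\vol}(i, j, k) \ge 0$. The LP's equality $\SouthZ_i = -1$ on $\NorthF$ is a scale normalization exploiting the per-vertex positive homogeneous scaling freedom of signed-homogeneous coordinates, so the proof has two parts.

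For the forward direction, I start from a southern weak triangulation $\SouthMap$ witnessing sinkability and rescale each of the three north-face coordinate vectors of~$\InitMap$ and~$\SouthMap$ in parallel so that the $\SouthMap$-side has third coordinate~$-1$; this does not change either underlying sphere triangulation, and the volume inequalities on non-polar faces hold automatically because $\SouthMap$ is weak.

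For the reverse direction, I take an LP-feasible~$\SouthZ$ and form the drawing~$\SouthMap$ placing vertex $i$ at $(x_i, y_i, \SouthZ_i)$. The matched $(x, y)$-coordinates give $\theta$-equivalence, and the inequalities give non-inverted non-polar faces; linear interpolation $(1-t)\InitZ + t\SouthZ$ is a longitudinal morph from~$\InitMap$ to~$\SouthMap$ by the determinant-linearity argument used in Lemma~\ref{L:shell-sink}. The remaining task is to show $\SouthZ_i < 0$ for every~$i$, so that Lemma~\ref{L:weak-sink} delivers a morph into the open southern hemisphere. The key estimate, for every non-north-face vertex~$i$ with $\AboveF{i} = (i, j, k)$, comes from writing $(x_i, y_i) = \alpha(x_j, y_j) + \beta(x_k, y_k)$ with $\alpha, \beta > 0$ (available because $i$'s longitude lies strictly between those of~$j$ and~$k$) and column-reducing the determinant to
\[
    \South{\vol}(i, j, k) = (\SouthZ_i - \alpha \SouthZ_j - \beta \SouthZ_k)(x_j y_k - y_j x_k).
\]
The same identity in~$\InitMap$, with $\vol(i, j, k) > 0$ and $\InitZ_i$ strictly south of the arc $jk$ at $i$'s longitude (so $\InitZ_i - \alpha\InitZ_j - \beta\InitZ_k < 0$), pins $x_j y_k - y_j x_k < 0$. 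Substituting back gives $\SouthZ_i \le \alpha \SouthZ_j + \beta \SouthZ_k$. If some $\SouthZ_{i_0} \ge 0$, at least one parent of $i_0$ must also have $\SouthZ \ge 0$, and iterating the ``upward hop'' while using the strict total order on the $\InitZ$-latitudes of~$\InitMap$ as the termination measure lands the walk at a north-face vertex with $\SouthZ = -1$, a contradiction.

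The sign determination $x_j y_k - y_j x_k < 0$ is straightforward bookkeeping. The genuinely subtle point will be guaranteeing termination of the upward walk even when $\DnVee\InitMap$ itself contains cycles (as happens for non-shellable triangulations). I plan to use $\InitMap$'s own latitude order---a strict total order under genericity---as the inductive measure rather than the combinatorial graph $\DnVee\InitMap$, combining the $\AboveF{}$ constraint with the symmetric $\BelowF{}$ constraint (from the corresponding up-face) through a short case analysis to rule out the possibility that both parents of a given vertex have smaller~$\InitZ$.
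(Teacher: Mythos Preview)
Your forward direction has the right shape, but individually rescaling the three north-face vectors changes their $(x,y)$-coordinates, so you end up exhibiting a feasible point for a \emph{rescaled} instance of~\eqref{Eq:LP}, not the LP as stated (whose $(x_i,y_i)$ are fixed once and for all).  The paper avoids this by applying a single linear shear $(x,y,z)\mapsto(x,y,\,ax+by+cz)$ that fixes every $(x_i,y_i)$, sends the three north-face $z'$-values to~$-1$, and multiplies every $\vol'$ by the common factor~$c$.

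The reverse direction has a real gap.  Your proposed termination measure fails: both base vertices of $\AboveF{i}$ can lie at strictly smaller $\InitZ$-latitude than~$i$.  Concretely, put $j$ and $k$ at the same northern latitude $\ell>0$ but well-separated longitudes; the geodesic arc $jk$ bulges northward to a peak latitude $\ell'>\ell$, so any $i$ on the peak longitude at latitude strictly between $\ell$ and $\ell'$ is due south of the arc (hence $\AboveF{i}$ has base $jk$) yet satisfies $\InitZ_j=\InitZ_k<\InitZ_i$.  The constraint from $\BelowF{i}$ involves an entirely different pair of base vertices and says nothing about $j,k$, so no ``short case analysis'' rules this out.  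Once $\DnVee\InitMap$ has a cycle---exactly the non-shellable case where the lemma has content beyond Awartani--Henderson---your walk can loop forever.

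The paper replaces the walk with a one-line global observation.  The feasible $z'$ already defines a weak triangulation $\SouthMap$ that is $\theta$-equivalent to $\InitMap$ (this is the assertion immediately preceding the lemma), and in~$\SouthMap$ the north face still contains the north pole while its three vertices sit at $z'=-1$; hence the north face is everted and its complement is precisely the positive cone of the three north-face vectors.  Every other vertex lies in that complement, so $(x_i,y_i,z'_i)=\lambda_a v_a+\lambda_b v_b+\lambda_c v_c$ with all $\lambda_\cdot\ge 0$, forcing $z'_i=-(\lambda_a+\lambda_b+\lambda_c)<0$.  No induction or termination argument is needed.
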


\begin{proof}
Suppose $\SouthZ = (\SouthZ_1, \dots, \SouthZ_n)$ is the $z$-coordinate vector of a weak triangulation $\SouthMap$ that is $\theta$-equivalent to $\InitMap$, such that $\SouthZ_i < 0$ for all $i$.  Without loss of generality, assume the vertices of the north face of $\SouthMap$ are vertices $1$, $2$, and~$3$.  We can move these three vertices to the plane $z = -1$ by applying the linear transformation $(x,y,z) \mapsto (x, y, ax+by+cz)$, where $a$, $b$, and $c$ satisfy the linear system
\[
	\begin{pmatrix}
		x_1 & y_1 & z'_1 \\
		x_2 & y_2 & z'_2 \\
		x_3 & y_3 & z'_3 \\
	\end{pmatrix}
	\begin{pmatrix} a \\ b \\ c \end{pmatrix}
	=
	\begin{pmatrix} -1 \\ -1 \\ -1 \end{pmatrix}.
\]
This linear transformation has positive determinant.  Let $\NewSouthZ$ denote the transformed $z$-coordinate vector and $\NewSouthMap$ the corresponding weak triangulation.  Because no face of $\NewSouthMap$ is inverted, the coordinate vector of every other vertex $i$ is a positive linear combination of the north-face coordinate vectors; thus, we have $\NewSouthZ_i < 0$ for all~$i$.
\end{proof}

Lemma \ref{L:sink-LP} immediately implies an algorithm to compute a sinking longitudinal morph, or report correctly that none exists, in weakly polynomial time, provided the input $x$- and $y$-coordinates are rational \cite{gls-gaco-93}.  We obtain a simpler and faster algorithm by identifying the optimal basis of linear program \eqref{Eq:LP}, subject to a mild technical condition.

First we need a small extension of a result used in many planar morphing algorithms \cite{hk-prga-92,epln-sdahg-06,k-cdhgl-21,cgt-cdg23-96,kklss-cimpg-19}.  Recall that a polygon is \emph{weakly convex} if its interior is convex and each of its vertices has interior angle \emph{at most} $\pi$.  Let $T$ be any straight-line plane graph whose outer face is a simple $x$-monotone polygon and whose inner faces are triangles.  A weakly convex polygon~$W$ is \emph{compatible} with~$T$ if there is a homeomorphism from the boundary of~$T$ to the boundary of~$W$ that preserves $x$-coordinates.  A weak planar triangulation~$T'$ is \emph{$x$-equivalent} to~$T$ if it has the same underlying graph, every nondegenerate face of $T'$ has the same orientation as the corresponding face of~$T$, and corresponding vertices have equal $x$-coordinates. 

\begin{lemma}
\label{L:weak-hn}
Given a triangulation $T$ of an $x$-monotone polygon in the plane, and a weakly convex polygon~$W$ that is compatible with $T$, there is a weak planar triangulation $T'$ that is $x$-equivalent to~$T$ and whose outer face is $W$.
\end{lemma}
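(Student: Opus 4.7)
The plan is to prove the lemma by an incremental construction closely analogous to the Awartani--Henderson embedding from the proof of Theorem \ref{Th:AH-embedding}, with the role of longitudes played by vertical lines (fixed $x$-coordinates) and the role of the weakly convex polygon $W$ in the sphere proof played by the given polygon $W$ here.

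First I would orient each non-boundary edge of $T$ from its lower-$x$ endpoint to its higher-$x$ endpoint, breaking ties using the $x$-monotone structure of the outer face. Because inner faces of $T$ are triangles and the outer face is $x$-monotone, the resulting directed graph has a unique source (the leftmost vertex of the outer polygon) and a unique sink (the rightmost vertex). Exactly as in the sphere proof, this yields a canonical order on vertices and a corresponding order on the faces of $T$ in which each face is classified as an analogue of an ``up-face'' (apex on the right) or a ``down-face'' (apex on the left), depending on which vertex is extremal in the $x$-direction.

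Next I would process faces in this canonical order, placing each vertex on its prescribed vertical line as it is encountered. The invariant I would maintain is that after $k$ faces have been embedded, the drawn region is a weakly convex $x$-monotone polygon whose right boundary tracks the corresponding portion of $W$. When the next face is an up-face, all three of its vertices have already been placed on this envelope, and the face is drawn as the (possibly degenerate) triangle spanned by them; the apex then ceases to be a reflex vertex of the envelope. When the next face is a down-face, its base vertices are on the envelope but its apex is not, and I would place the apex at the unique intersection of its vertical line with the base segment, forcing the face to be a degenerate triangle and extending the envelope accordingly. The two outer-boundary vertices $v_1$ and $v_n$ are placed initially at their positions in $W$.

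The main obstacle will be verifying that this process terminates with the left and right envelopes coinciding precisely with the upper and lower chains of $W$, rather than an arbitrary weakly convex chain inside $W$. This reduces to checking that every vertex of $T$ that lies on the outer polygon is placed at the $y$-coordinate prescribed by $W$. For outer vertices, compatibility forces the $x$-coordinate, and the weak convexity of $W$ together with the degenerate-triangle rule for down-faces is exactly what allows the constructed envelope to pass through every vertex of $W$, including those with interior angle $\pi$. Once the correspondence between envelopes is established, the orientation requirements for $x$-equivalence follow exactly as in Awartani and Henderson's construction: up-faces are assigned positive or zero signed area by construction, and down-faces are made degenerate, so no face inverts.
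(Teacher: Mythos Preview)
Your plan has a structural gap that you flag but do not actually close.  In the Awartani--Henderson construction only the three north-face vertices are placed freely; every other position is \emph{forced} by the rule ``put each down-face apex on its base segment.''  In the present lemma, by contrast, \emph{every} boundary vertex of $T$ must land at a prescribed point of~$W$, and your rule leaves no freedom to achieve that: once $v_1$ (and $v_n$) are placed, all subsequent positions are determined, so intermediate boundary vertices will in general miss~$W$.  Your sentence invoking ``weak convexity of~$W$ together with the degenerate-triangle rule'' is an assertion, not an argument; already for a single triangle $abd$ with $x_a<x_b<x_d$ your process places $a$ and $d$ on~$W$ but gives no mechanism to place~$b$ at its $W$-position, and for a quadrilateral with one diagonal the two non-extreme boundary vertices are forced onto that diagonal regardless of~$W$.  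There is also a direction mix-up: in Awartani--Henderson the preserved quantity is longitude and the sweep is by \emph{latitude} (the coordinate being assigned), so the correct analogue here fixes~$x$ and sweeps in the $y$-direction, not left-to-right; relatedly, the apex of an up/down-face is the \emph{middle}-$x$ vertex, not the extreme one, which is why your ``up-face (apex on the right) \dots\ all three vertices already placed'' is internally inconsistent.

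The paper's proof avoids all of this by building the boundary constraint in from the start.  It cites existing results of Chrobak--Goodrich--Tamassia and of Kleist \emph{et al.}\ that produce positive dart weights $\lambda_{i\to j}$ making every interior $x$-coordinate a convex combination of its neighbours' $x$-coordinates; one then pins the boundary vertices to~$W$ and solves the resulting Floater/Tutte linear system for the interior $y$-coordinates.  For strictly convex~$W$ Floater's theorem yields a genuine straight-line embedding; for weakly convex~$W$ a limiting argument (take a continuous family $W(t)$ with $W(0)=W$ and $W(t)$ strictly convex for $t>0$, and use continuity of the linear-system solution) shows that no face can have negative signed area, so the limit is a weak triangulation.
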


\begin{proof}
When $W$ is a \emph{strictly} convex polygon (all interior angles are strictly less than $\pi$), the lemma follows immediately from  algorithms of Chrobak, Goodrich, and Tammassia \cite[Theorem 3.5]{cgt-cdg23-96} and Kleist, Klemz, Lubiw, Schlipf, Staals, and Strash~\cite[Lemma 11]{kklss-cimpg-19}.  Both of these algorithms first compute positive weights $\lambda_{\arc{i}{j}} > 0$ for each dart $\arc{i}{j}$ of $T$ so that the $x$-coordinate of each interior vertex $j$ is the weighted average of the $x$-coordinates of the neighbors of $j$.  For each interior vertex $j$, the incoming dart weights satisfy the constraints
\[
	x_j = \sum_i \lambda_{\arc{i}{j}} x_i
	\quad\text{and}\quad
	\sum_i \lambda_{\arc{i}{j}} = 1,
\]
where for notational simplicity we define $\lambda_{\arc{i}{j}} = 0$ if $ij$ is not an edge of $T$.  Chrobak \etal~compute the weights~$\lambda_{\arc{i}{j}}$ using flows \cite[Lemma 3.4]{cgt-cdg23-96}; Kleist \etal~use a simpler averaging computation at each vertex $i$~\cite[Lemma 11]{kklss-cimpg-19}.

Then, to compute the new embedding $T'$, both algorithms fix the boundary vertices to the corresponding vertices of $W$, and then compute the interior $y$-coordinates by solving the linear system
\begin{equation}
	y'_j = \sum_i \lambda_{\arc{i}{j}} y'_j \qquad\text{for every interior vertex $j$}
	\label{Eq:Tutte}
\end{equation}
Floater's extension \cite{f-ptsda-98} of Tutte's spring embedding theorem \cite{t-hdg-63} guarantees that the resulting $y$-coordinates define a proper straight-line triangulation $T'$.

A straightforward limiting argument implies that when $W$ is a \emph{weakly} convex polygon, the drawing $T'$ produced by either of these algorithms is a \emph{weak} triangulation.  Consider a continuous family $W(t)_{t\ge 0}$ of polygons, where $W(0)=W$ and $W(t)$ is strictly convex for all $t>0$.  For each $t\ge 0$, let $T'(t)$ denote the drawing computed by solving linear system~\eqref{Eq:Tutte} with all boundary vertices fixed to $W(t)$.  This linear system is always non-singular, so $T'(t)$ is well-defined and varies continuously with $t$.  In particular, the signed area of each face of $T'(t)$ varies continuously with $t$ and is positive for all $t>0$.  Thus, $T'(0)$ may contain degenerate faces, but no inverted faces.  We conclude that $T' = T'(0)$ is a weak triangulation $x$-equivalent to~$T$.
\end{proof}

We emphasize that our algorithms never \emph{compute} the weak triangulation $T'$; we only need to prove that it exists.

\begin{theorem}
\label{Th:sinksys}
If $\SouthZ = (\SouthZ_1, \dots, \SouthZ_n)$ is an optimal solution to linear program \eqref{Eq:LP}, then $z'$ is also a solution to the following $n\times n$ linear system:
\begin{equation}
	\def\arraystretch{1.2}
	\begin{array}{r@{~}l@{\qquad}l}
		z'_i & = -1 
			& \text{for every vertex $i$ of $\NorthF$} \\
		\vol'(i,j,k) & = 0
			& \text{for every down-face $(i,j,k)$}
	\end{array}
	\label{Eq:Eq}
\end{equation}
\end{theorem}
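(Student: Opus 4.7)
Plan: I would argue by contradiction. Suppose $\SouthZ$ is optimal for the linear program \eqref{Eq:LP} yet some down-face $f_0 = (i_0, j_0, k_0)$ with apex $i_0$ satisfies $\South{\vol}(f_0) > 0$. My goal is to construct a perturbation direction $\delta = (\delta_1, \dots, \delta_n)$ with $\delta_i = 0$ for every $i \in \NorthF$, with $\delta_i \ge 0$ for all $i$ and with at least one strict inequality, such that $\SouthZ + t\delta$ remains feasible for all sufficiently small $t > 0$.  The objective then increases by $t \sum_i \delta_i > 0$, contradicting optimality.

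Geometric setup in 2D: I would project to the plane $z = -1$ as in the proof of Theorem~\ref{Th:AH-embedding}, so that each vertex lies on a fixed ray from the origin and maximizing $\sum_i \SouthZ_i$ amounts to pushing each non-$\NorthF$ vertex as far from the origin along its ray as possible.  In this picture, a down-face constraint $\South{\vol}(i,j,k) \ge 0$ with apex $i$ requires $i$ to lie on the origin side of the line through $j$ and $k$, while an up-face constraint requires its apex to lie on the far side of its base.  Consequently, pushing a vertex outward along its ray only tightens the down-face constraint in which that vertex is the apex and strictly loosens the up-face constraint in which it is the apex; up-face constraints never bind when pushing outward.

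Building and terminating the propagation: Initialize $\delta_{i_0} > 0$ so that $i_0$ moves outward along its ray.  The perturbation may threaten to violate some already-tight down-face constraint $\South{\vol}(g) = 0$ with apex $a$ in which $i_0$ is a base vertex; in that case, extend $\delta$ by setting $\delta_a > 0$ with the unique positive coefficient that preserves $\South{\vol}(g) = 0$ to first order, and iterate.  The propagation follows edges of $\DnVee{\Map}$ restricted to currently tight down-faces.  The main obstacle is to show that this propagation terminates.  If the restriction of $\DnVee{\Map}$ to currently tight down-face edges contained a directed cycle, the corresponding system of tight equalities would admit a nonzero nonnegative direction $\delta'$ supported on that cycle, preserving every tight constraint and strictly increasing the objective $\sum_i \SouthZ_i$---immediately contradicting the optimality of $\SouthZ$.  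Hence the propagation is acyclic, terminates at $\NorthF$ vertices where $\delta$ is clamped to $0$, and yields a finite nonnegative $\delta$ with $\sum_i \delta_i > 0$, completing the contradiction.
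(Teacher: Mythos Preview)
Your propagation scheme rests on a sign error. You assert that ``up-face constraints never bind when pushing outward,'' but this fails when the pushed vertex is a \emph{base} vertex of the up-face. In your plane picture, the apex $a$ of an up-face lies on the far side of the base line $bc$ from the origin, with $a$'s ray angularly between the rays of $b$ and $c$; pushing $b$ outward rotates line $bc$ about $c$, moving its intersection with $a$'s ray \emph{outward} as well, so $a$ may fall to the origin side and the constraint tightens. By the same calculation, pushing a base vertex of a \emph{down}-face outward \emph{loosens} that constraint, so the step you describe---``threaten to violate some already-tight down-face constraint $g$ with apex $a$ in which $i_0$ is a base vertex''---never actually fires. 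The constraints that are genuinely endangered are tight \emph{up}-faces in which $i_0$ is a base vertex; repairing one of those by pushing its apex $a$ outward then tightens the down-face $\AboveF{a}$, which may itself already be tight. The resulting dependency structure is not $\DnVee{\Map}$. Separately, your cycle argument is unjustified: you give no reason why a directed cycle of tight down-face equalities admits a \emph{nonnegative} null direction (the product of the edge ratios around the cycle need not equal~$1$), nor why such a direction would respect the other tight constraints---in particular tight up-faces---incident to the cycle vertices.

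The paper proceeds quite differently. Rather than local propagation, it constructs a global perturbation: every vertex $i$ with $\AboveF{i}$ nondegenerate is rescaled to $z'_i/(1+\varepsilon)$, vertices with only $\BelowF{i}$ nondegenerate are held fixed, and the remaining ``trapped'' vertices---those lying inside a maximal edge-connected block of degenerate faces (a ``pre-bar'')---are placed by invoking a weak $x$-monotone convex-redrawing lemma (Lemma~\ref{L:weak-hn}) inside the thin spherical trapezoid that each bar becomes after the rescaling. That lemma is precisely the device that handles the chained tight up- and down-face constraints your local scheme does not resolve.
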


\def\YProject{\pi_y}
\def\Trapezoid{\tau}

\begin{proof}
Let $\SouthZ$ be any feasible solution for our linear program.  As in the proof of Lemma \ref{L:sink-LP}, the coordinate vector of every vertex $i$ is a positive linear combination of the north-face coordinate vectors, so $\SouthZ_i < 0$ for all $i$.  Let $\SouthMap$ be the southern weak triangulation defined by $\SouthZ$, and suppose that at least one down-face of $\SouthMap$ is nondegenerate.  We argue that we can construct another southern weak triangulation~$\NewSouthMap$ (that is, another feasible solution $\NewSouthZ$ to \eqref{Eq:LP}) by increasing some $z$-coordinates and leaving all other coordinates fixed, which implies that $\SouthZ$ is not an optimal solution to \eqref{Eq:LP}.

Call a vertex of $\SouthMap$ \emph{sober} if it is not incident to the north face or to any degenerate face.  If any vertex of $\SouthMap$ is sober, we can construct $\NewSouthMap$ by moving any sober vertex upward slightly and keeping all other vertices fixed.  So without loss of generality, we assume that $\SouthMap$ has no sober vertices.
We call a vertex $i$ \emph{upward-free} if $i$ is not a vertex of the north face and $\AboveF{i}$ is nondegenerate in $\SouthMap$, \emph{downward-free} if $\BelowF{i}$ is nondegenerate in $\SouthMap$, \emph{totally free} if it is both upward- and downward-free, and \emph{trapped} if $\AboveF{i}$ and $\BelowF{i}$ are both degenerate.

A \emph{bar} in $\SouthMap$ is a maximal circular arc that is covered by edges of $\SouthMap$ and has no totally free vertices in its interior.  Each bar is either a single edge or the union of one or more degenerate faces; in the latter case, the corresponding subset of faces in $\Map$ is edge-connected, and its union is a $\theta$-monotone disk.  Equivalently, a nontrivial bar is the image in $\SouthMap$ of a maximal edge-connected subset of faces of $\Map$ that are all degenerate in $\SouthMap$.  We call the subcomplex of~$\Map$ induced by these faces a \emph{pre-bar}.  The \emph{endpoints} of a pre-bar are the preimages of the endpoints of its bar.  Because no vertex of $\SouthMap$ is sober, every vertex not incident to the north face is contained in at least one bar.

Every upward-free vertex in $\SouthMap$ is the image of a vertex on the northern boundary of a pre-bar in $\Map$.  Every downward-free vertex in $\SouthMap$ is either incident to the north face or the image of a vertex on the southern boundary of a pre-bar in $\Map$.  Each endpoint of a bar can be upward-free, downward-free, or both.  Finally, each trapped vertex in $\SouthMap$ is the image of a vertex in the interior of a pre-bar in $\Map$, and thus lies on a unique bar in $\SouthMap$.  See Figure \ref{F:perturb-bar}(a) and (b).

\begin{figure}[htb]
\centering\footnotesize\sffamily
\includegraphics[scale=0.4]{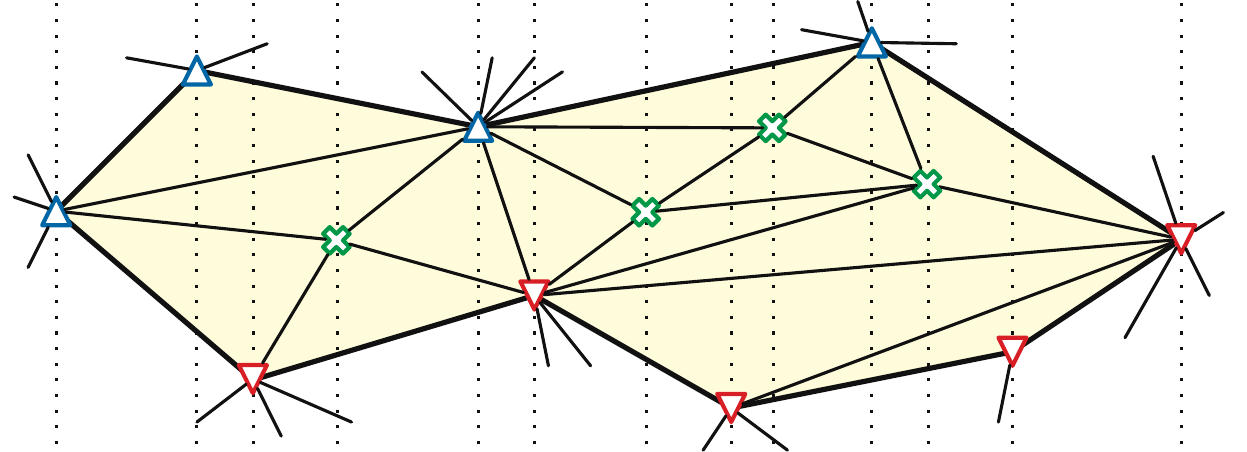}\\[-1ex](a)\\[3ex]
\includegraphics[page=2,scale=0.4]{Fig/perturb-bar}\\(b)\\[2ex]
\includegraphics[page=3,scale=0.4]{Fig/perturb-bar}\\[-1ex](c)
\caption{Planar projections of (a) a pre-bar in $\Map$, (b) the corresponding bar in $\SouthMap$, and (c) the corresponding perturbed bar in $\NewSouthMap$ (with curved edges to show degenerate facets).  Triangles indicate upward- and downward-free vertices; \textsf{X}'s indicate trapped vertices.  Dotted lines indicate longitudes.}
\label{F:perturb-bar}
\end{figure}

We construct a new southern weak triangulation $\NewSouthMap$ by defining a new coordinate vector $\NewSouthZ$, such that $\SouthZ_i \le \NewSouthZ_i < 0$ for each vertex~$i$.  Let $\e>0$ be any sufficiently small positive constant.  Defining new coordinates for the free vertices of $\SouthMap$ is straightforward:
\begin{itemize}
\item
For every vertex $i$ of the north face, we define $\NewSouthZ_i = \SouthZ_i = -1$.
\item
For every upward-free vertex $i$, we define $\NewSouthZ_i = \SouthZ_i/(1+\e) > \SouthZ_i$.
\item
For every downward-free vertex $i$ that is not also upward free, we define $\NewSouthZ_i = \SouthZ_i$.
\end{itemize}
Choosing sufficiently small $\e$ already guarantees that any face that is non-degenerate in~$\SouthMap$ is also non-degenerate in $\NewSouthMap$.  However, placing the trapped vertices to avoid inverting degenerate faces of $\SouthMap$ requires more care.

Let $\beta$ be any nontrivial bar in $\SouthMap$, and let~$\PreBar$ be the corresponding pre-bar in $\Map$.  We have already mapped the entire boundary of~$\PreBar$ to a spherical trapezoid~$\Trapezoid$ in $\NewSouthMap$.  Specifically, if $\Bar$ lies on the plane $z = ax+by$, then all upward-free vertices in $\Bar$ lie on the plane $z = (ax+by)/(1+\e)$ in~$\NewSouthMap$.  

Because $\Bar$ projects to a line segment in the plane $z = -1$, it lies inside an open hemisphere bounded by a \emph{vertical} plane through the origin.  Without loss of generality, suppose $\Bar$ lies in the hemisphere $y > 0$.  Let $\YProject$ denote the gnomonic projection $(x,y,z) \mapsto (x/y, z/y)$ from $S^2$ onto the plane $y=1$.  Then $\YProject(\Bar)$ is a non-vertical line segment, $\YProject(B)$ is a triangulation of an $x$-monotone polygon, and $\YProject(\Trapezoid)$ is a planar trapezoid whose lower boundary is a subset of $\YProject(\beta)$ and whose upper boundary lies on a line parallel to $\YProject(\beta)$.  Each longitude $\ell$ through a trapped vertex $i$ on $\beta$ defines a \emph{vertical} line $\YProject(\ell)$ that must contain the point $(x_i/y_i, \NewSouthZ_i/y_i)$.  See Figure \ref{F:perturb-bar}.

To summarize, $\YProject(\PreBar)$ is a triangulation of an $x$-monotone polygon, and $\YProject(\Trapezoid)$ is a weakly convex polygon (specifically, a trapezoid) that is compatible with $\YProject(\PreBar)$.  Thus, Lemma \ref{L:weak-hn} implies that there is a weak triangulation $\YProject(\NewSouth{T})$ that is $x$-equivalent to $\YProject(\PreBar)$ and whose outer face is $\YProject(\Trapezoid)$.  Pulling $\YProject(\NewSouth{T})$ back to the sphere gives us a weak triangulation $\NewSouth{T}$ of the spherical trapezoid $\Trapezoid$ that is $\theta$-equivalent to $B$.  Each interior vertex $i$ of $\PreBar$ maps to a vertex in the closed interior of~$\Trapezoid$, which implies $\SouthZ_i \le \NewSouthZ_i \le \SouthZ_i/(1+\e)$.

We assemble the overall weak triangulation $\NewSouthMap$ by applying this construction to every bar in $\SouthMap$.  (Processing each bar requires projecting to a different vertical tangent plane.)  Because $\NewSouthZ_i \ge \SouthZ_i$ for every vertex $i$, and $\NewSouthZ_i > \SouthZ_i$ for at least one vertex $i$, the original weak triangulation~$\SouthMap$ cannot be an optimal solution to our linear program \eqref{Eq:LP}.
\end{proof}

\begin{corollary}
Given a shortest-path triangulation $\Map$ of the sphere, we can either compute a longitudinal morph from $\Map$ to a southern triangulation, or report correctly that no such morph exists, in $O(n^{\omega/2})$ time, assuming linear system \eqref{Eq:Eq} is non-singular.
\end{corollary}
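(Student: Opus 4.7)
The plan is to identify the unique solution of the $n \times n$ linear system \eqref{Eq:Eq} with the (necessarily optimal) feasible point of LP \eqref{Eq:LP}, solve \eqref{Eq:Eq} by a planar linear-system solver, and then either extract the morph via Lemma~\ref{L:weak-sink} or declare $\Map$ unsinkable, based on a direct feasibility check.

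By Lemma~\ref{L:sink-LP}, $\Map$ is sinkable if and only if \eqref{Eq:LP} is feasible, and any LP-feasible vector $\SouthZ$ yields a southern weak triangulation that Lemma~\ref{L:weak-sink} converts to a longitudinal morph in $O(n)$ additional time. The LP is bounded above whenever feasible (the pinned values $\SouthZ_i = -1$ on $\NorthF$ propagate through the non-polar face inequalities to force $\SouthZ_i \le 0$ at every vertex), so the LP admits an optimum $\SouthZ^\ast$ whenever it is feasible, and Theorem~\ref{Th:sinksys} says that $\SouthZ^\ast$ then satisfies \eqref{Eq:Eq}. The system \eqref{Eq:Eq} consists of three pin equations plus one equation per down-face; since $\Map$ has exactly $n-3$ down-faces (one apex per vertex outside $\NorthF$), this is a square $n \times n$ system, and the non-singularity hypothesis makes its solution $\SouthZ^\ast$ unique. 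My algorithm is therefore to solve \eqref{Eq:Eq} for $\SouthZ^\ast$ and then in $O(n)$ time test every inequality $\South{\vol}(i,j,k) \ge 0$ against it. If all hold, $\SouthZ^\ast$ is LP-feasible and the morph comes from Lemma~\ref{L:weak-sink}; if any fails, then by uniqueness no LP-feasible point can exist (otherwise the LP optimum would coincide with the infeasible $\SouthZ^\ast$), so $\Map$ is reported unsinkable.

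For the running time I would invoke generalized nested dissection. Each row of the coefficient matrix of \eqref{Eq:Eq} has support on at most three variables --- a single vertex for a pin equation, the three vertices of its face for a down-face equation --- so the bipartite incidence graph between rows and columns embeds in the plane, inherited directly from $\Map$. The Lipton--Rose--Tarjan nested-dissection framework applied to this planar system, combined with fast matrix multiplication on the dense $O(\sqrt{n})$-sized Schur complements produced at each level of a planar separator hierarchy, solves such a system in $O(n^{\omega/2})$ arithmetic operations; adding the $O(n)$ costs of matrix construction, feasibility checking, and morph extraction yields the claimed $O(n^{\omega/2})$ total.

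The two most delicate steps I expect to write out carefully are the LP boundedness argument, needed so that Theorem~\ref{Th:sinksys} applies with an optimum that actually exists, and the invocation of planar nested dissection, which requires not merely sparsity but genuine planarity of the matrix's incidence pattern, together with the usual bookkeeping to realize the $n^{\omega/2}$ exponent via fast matrix multiplication on the dense Schur complements produced along the separator hierarchy.
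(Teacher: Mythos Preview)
Your proposal is correct and follows essentially the same route as the paper: solve the planar linear system \eqref{Eq:Eq} via nested dissection with fast matrix multiplication, then check LP-feasibility of the unique solution and invoke Theorem~\ref{Th:sinksys} and Lemma~\ref{L:weak-sink} accordingly. Your explicit justification of LP boundedness and your count of $n-3$ down-faces are details the paper leaves implicit, so your write-up is if anything more complete.
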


\begin{proof}
The support graph of linear system \eqref{Eq:Eq} is precisely the directed planar graph $\DnVee{\Map}$.  Thus, assuming the system is non-singular, it can be solved in $O(n^{\omega/2}) = O(n^{1.1864})$ time (in the real RAM model) via nested dissection and fast matrix multiplication \cite{lrt-gnd-79,ay-msnda-13}.  If the solution~$z'$ is a feasible point for linear program \eqref{Eq:LP}, the corresponding drawing $\SouthMap$ is a weak southern triangulation longitudinally equivalent to $\Map$.  If the system is non-singular, and the solution~$z'$ is not a feasible point for linear program \eqref{Eq:LP}, then by Theorem \ref{Th:sinksys}, the linear program is infeasible, so we can correctly report that $\Map$ is not sinkable.
\end{proof}

Theorem \ref{Th:sinksys} is a strict generalization of the Awartani--Henderson embedding described in Section~\ref{SS:AH-embedding}.  Recall from Lemma \ref{L:shellable} that $\Map$ is longitudinally shellable if and only if  its support graph $\DnVee\Map$ is acyclic.  Thus, we can permute the rows and columns of \eqref{Eq:Eq} to obtain an upper-triangular system, which we can then solve by back-substitution.  Each step of back-substitution assigns a $z$-coordinate to the apex of a single down-face, exactly mirroring one step of Awartani and Henderson's construction.

\section{Experimental Results}
\label{S:experiment}

We implemented a suite of algorithms to construct spherical triangulations, test shellability and sinkability, construct Awartani--Henderson embeddings, and visualize longitudinal morphs.  (This implementation was invaluable in identifying the objective function for our linear program \eqref{Eq:LP} and leading us to conjecture Theorem \ref{Th:sinksys}.)  In particular, to stress-test our algorithms, we implemented several randomized heuristics to construct “ugly” triangulations with long edges and skinny triangles, including generalizations of Schönhardt's polyhedron \cite{s-uzdt-28,s-pdp-48,s-pdp2-51} and Jessen's icosahedron \cite{j-oi-67,d-sb-71}, convex hulls of random points, equatorial rotors, edge flipping and other refinement operations, and more severely, inserting arbitrary shortest paths as new edges.

We tested several thousand random triangulations, with dozens to hundreds of vertices.  Without exception, every triangulation we generated had at least one \emph{longitudinally shellable} rotation.  Even in the most pathological families we generated, we could find a longitudinally shellable rotation for most triangulations by trying at most four random directions, and a sinkable rotation by trying at most three.\footnote{Our implementation uses the SciPy Python library~\cite{vgohr-sfasc-20}.  In a few extreme cases, SciPy's linear-system solver encountered numerical precision issues, so not all rotations could be classified as sinkable or unsinkable.}

We systematically evaluated two families of random “ugly” triangulations.  The first was generated by computing the convex hull of $100$ random points on the unit sphere, and then attempting to perform a large number of edge flips, each replacing one edge separating a pair of facets whose union is convex.  Specifically, for $10000$ iterations, we chose a random edge and performed a flip if the union of its incident faces is convex; then for an additional $10000$ iterations, we chose a random edge and performed a flip if the union of its incident faces is convex \emph{and} the new edge is longer than the old edge.  We generated $2231$ triangulations in this family and tested $5000$ random directions for each triangulation.  On average, $66.0\%$ of these random directions were shellable and $99.4\%$ were sinkable; in the \emph{worst} triangulation in this family, shown in Figure \ref{F:bad-graph-1}, $17.8\%$ of directions were shellable and $32.9\%$ were sinkable.  

\begin{figure}[htb]
\centering\footnotesize\sffamily
\begin{tabular}{c@{\qquad}c}
	\includegraphics[width=0.4\linewidth]{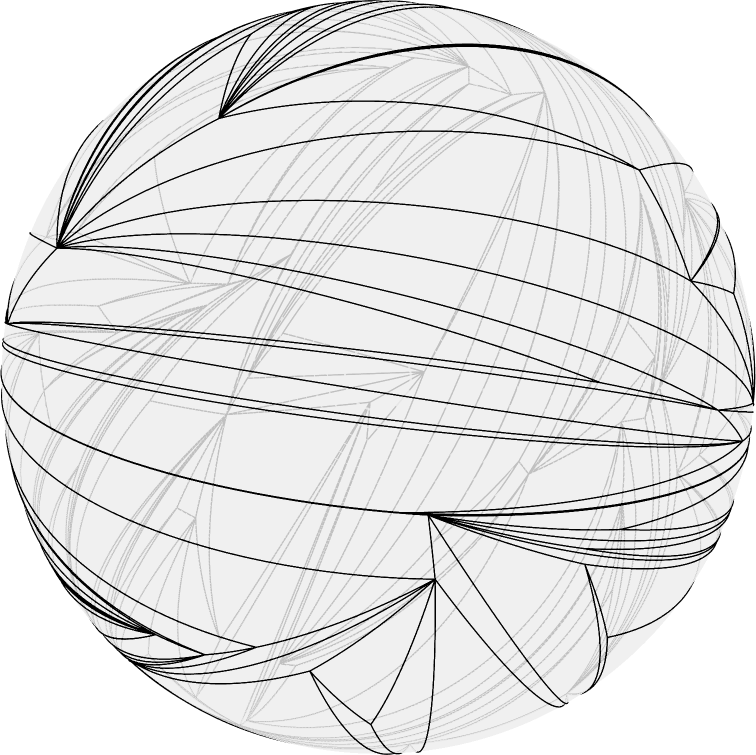} &
	\includegraphics[width=0.4\linewidth]{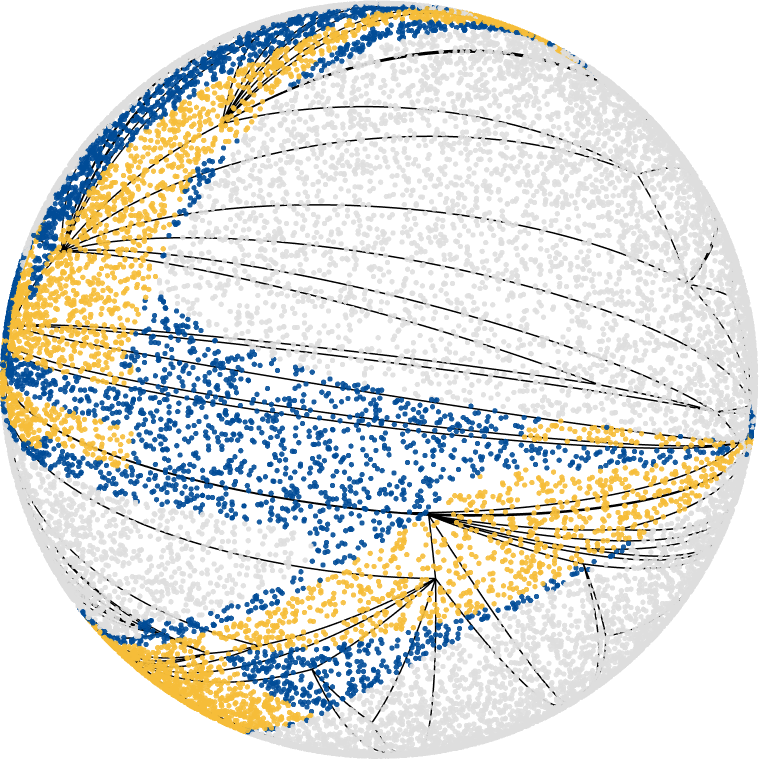} \\
	(a) & (b)
\end{tabular}
\caption{(a)~Our worst triangulation generated from convex hull of random points by randomly flipping edges.  (b)~Classifying random directions as shellable (yellow), not shellable but sinkable (blue), and not sinkable (light gray).}
\label{F:bad-graph-1}
\end{figure}

Our second family was generated by starting with a regular tetrahedron and then adding shortest paths between $100$ pairs of nearly antipodal points as edges.  In each iteration, we chose two random unit vectors $p$ and $r$, defined $q = -p + 0.01\,r$, inserted $p$ and $q$ as new vertices, removed all edges crossing the shortest path from $p$ to $q$, inserted the shortest path $pq$ as an edge, and finally triangulated the spherical polygons on either side of $pq$.  We generated $1346$ triangulations in this family and again tested $5000$ random directions for each triangulation.  On average, $32.6\%$ of these random directions were shellable and $48.3\%$ were sinkable; in the \emph{worst} example in this family, shown in Figure \ref{F:bad-graph-2}, $1.2\%$ of directions were shellable and $1.4\%$ were sinkable.

\begin{figure}[htb]
\centering\footnotesize\sffamily
\begin{tabular}{c@{\qquad}c}
	\includegraphics[width=0.4\linewidth]{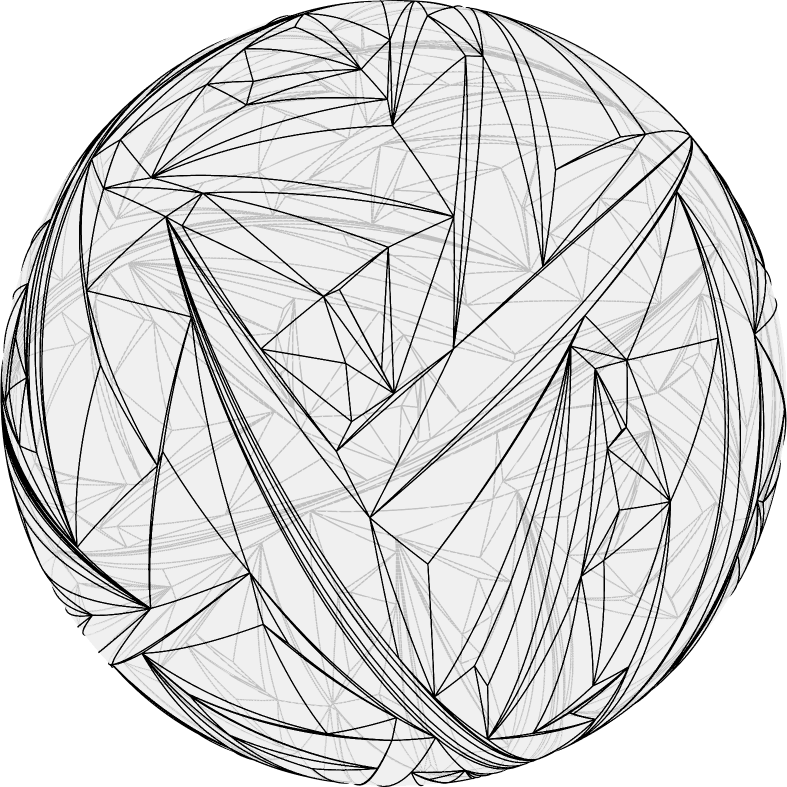} &
	\includegraphics[width=0.4\linewidth]{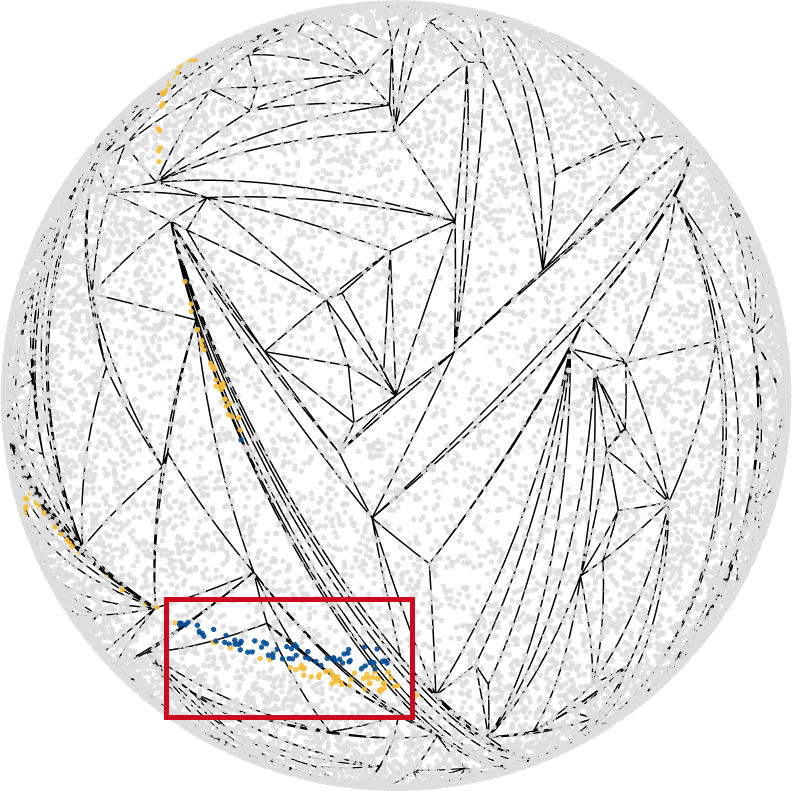} \\
	(a) & (b)
\end{tabular}
\caption{(a)~Our worst triangulation generated by forcing nearly antipodal edges.  (b)~Classifying random directions as shellable (yellow), not shellable but sinkable (blue), and not sinkable (light gray).  Most visible shellable and sinkable directions are inside the red box.}
\label{F:bad-graph-2}
\end{figure}

\section{Conjectures and Open Problems}
\label{S:outro}

Our results suggest several directions for further research.  The most obvious open problem is to prove or disprove Conjecture~\ref{C:sinkrot}: Every shortest-path triangulation of the sphere has a sinkable rotation.  More concretely, is there a polynomial-time algorithm that either finds a sinkable rotation of a given spherical triangulation or correctly reports that no such rotation exists?

A fast algorithm to find sinkable rotations would imply an efficient spherical morphing algorithm, but the resulting morphs have one undesirable feature: Even when we are morphing between full triangulations, most of the intermediate triangulations are not full.  In contrast, Cairns's edge-collapsing argument \cite{c-idgc2-44} actually yields morphs in which every intermediate triangulation is full.  We conjecture that techniques used to efficiently construct piecewise-linear planar morphs, which are based on Cairns's strategy, can be generalized to the sphere to yield \emph{full} morphs.  The most significant obstacle appears to be morphing a shortest-path triangulation with one missing edge, so that the unique quadrilateral face becomes convex.  This task is complicated by the fact that the initial quadrilateral face could have more than one reflex vertex.

Even much simpler questions about sinkability remain open.  Experiments with thousands of random triangulations, similar to those reported in Section \ref{S:experiment}, are consistent with the following conjectures:

\begin{conjecture}
\label{C:acute}
If every edge of a shortest-path triangulation $\Map$ of the sphere has length at most $\pi/2$, then $\Map$ (and therefore every rotation of $\Map$) is sinkable.
\end{conjecture}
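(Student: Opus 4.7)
The plan is to use Lemma~\ref{L:sink-LP} to reduce sinkability to feasibility of the linear program~\eqref{Eq:LP}, and then to construct a feasible point under the edge-length hypothesis. Writing $v_i = (x_i, y_i, z_i)$ for each vertex, the bound ``every edge has length at most $\pi/2$'' is equivalent to $\langle v_i, v_j\rangle \ge 0$ for every edge $ij$ of $\Map$. The parenthetical claim ``and therefore every rotation of $\Map$'' is automatic, since edge lengths are rotation-invariant; we need only prove that $\Map$ itself is sinkable.

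After rescaling so the three vertices of the north face lie in the plane $z = -1$, my first approach would be a Tutte/Floater-style construction. Seek positive weights $w_{ij}$ on the edges so that every interior vertex satisfies the three-dimensional averaging identity $v_i = \sum_{j\sim i} w_{ij}\, v_j$. Since the $(x_i, y_i)$ are already fixed, the first two coordinate equations become constraints on the weights alone, while the third yields a linear system whose unknowns are the desired $z$-coordinates $\SouthZ$, with boundary data $\SouthZ_i = -1$ on $\NorthF$. Floater's extension of Tutte's spring-embedding theorem would then imply that the resulting drawing, projected to $z = -1$, is a weak planar (equivalently weak spherical) triangulation---and hence a feasible point of~\eqref{Eq:LP}.

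Existence of positive weights $w_{ij}$ for each interior vertex is equivalent to $v_i$ lying in the open convex cone generated by its neighbors $\{v_j : j\sim i\}$. Locally this is automatic, because the faces incident to $v_i$ tile a spherical neighborhood of~$v_i$; only vertices incident to an everted face (if any) would require extra care.

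The main obstacle is passing from the pairwise edge-length hypothesis to the global consistency needed by Floater's theorem. The natural projected averaging equations in the plane $z = -1$ can behave badly because the projection $(x,y,z)\mapsto (x,y)$ is two-to-one on $S^2$, so neighbors of~$v_i$ might project almost anywhere relative to $(x_i, y_i)$. A second promising route is LP duality: any infeasibility certificate from Farkas' lemma is a non-negative combination of face-volume gradients summing to the boundary constraints, encoding a geometric obstruction that the hypothesis $\langle v_i, v_j\rangle \ge 0$ should rule out by a spherical-curvature argument reminiscent of the $\theta$-monotonicity proofs in Lemma~\ref{L:shellable}. Making either route rigorous appears to be the technical heart of the proof.
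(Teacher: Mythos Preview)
The statement you are trying to prove is labeled \emph{Conjecture} in the paper, and it appears in Section~\ref{S:outro} (``Conjectures and Open Problems'') alongside several other unproved claims. The paper offers no proof; it only reports that the conjecture is consistent with experimental evidence. So there is nothing to compare your proposal against, and any correct argument would be a new result, not a reproduction of the paper's.

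Your proposal is explicitly not a proof: you yourself identify the obstruction in the Tutte/Floater route (the orthogonal projection $(x,y,z)\mapsto(x,y)$ folds the sphere two-to-one onto the equatorial disk, so the projected drawing is not a planar triangulation with convex outer face, and Floater's theorem does not apply), and for the Farkas route you only assert that the hypothesis $\langle v_i,v_j\rangle\ge 0$ ``should'' kill any infeasibility certificate without giving a mechanism. One further gap in the Floater route: even granting positive weights with $v_i=\sum_j w_{ij}v_j$ in $\Real^3$, solving the resulting Laplacian for~$\SouthZ$ does not by itself force $\SouthZ_i<0$ or $\South{\vol}(i,j,k)\ge 0$; Floater's conclusion depends on the boundary being the outer face of a planar embedding, which is precisely what the two-to-one projection destroys. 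In short, both routes are reasonable heuristics but neither is close to a proof, and the paper does not claim otherwise.
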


\begin{conjecture}
\label{C:belt}
For any shortest-path triangulation~$\Map$, if there is a great circle that does not cross any edge of~$\Map$, then $\Map$ is sinkable.
\end{conjecture}

Either Conjecture \ref{C:acute} or Conjecture \ref{C:belt} would imply the following conjecture about the symmetry of sinkability, which is also consistent with our experimental observations.  A shortest-path triangulation is \EMPH{floatable} if it can be longitudinally morphed into the \emph{northern} hemisphere.

\begin{conjecture}
\label{C:symmetric}
A shortest-path triangulation is sinkable if and only if is it floatable.
\end{conjecture}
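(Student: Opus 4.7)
The plan is to derive Conjecture~\ref{C:symmetric} from Conjecture~\ref{C:belt}; by symmetry it suffices to prove that sinkable implies floatable. The argument rests on two elementary observations.

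First, both sinkability and floatability are invariants of the $\theta$-equivalence class: by Lemma~\ref{L:shell-sink} any two $\theta$-equivalent triangulations are connected by a longitudinal morph, and concatenating this morph with any sinking or floating morph of one triangulation produces a sinking or floating morph of the other. Second, assuming Conjecture~\ref{C:belt}, the belt hypothesis ``no edge crosses the equator'' implies \emph{floatability} as well. To see this, let $\rho$ be the $180^\circ$ rotation about the $x$-axis, $\rho(x,y,z) = (x,-y,-z)$. Then $\rho$ is an orientation-preserving isometry of $S^2$ that sends each longitude to a longitude (specifically, longitude $\theta$ to longitude $-\theta$) and swaps the two hemispheres. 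If no edge of a triangulation $\Map$ crosses the equator, the same is true of $\rho(\Map)$; Conjecture~\ref{C:belt} then produces a longitudinal morph $\sigma_t$ with $\sigma_0 = \rho(\Map)$ and $\sigma_1$ southern. The post-composition $\rho\circ\sigma_t$ is again a longitudinal morph, because $\rho$ carries longitudes to longitudes, and it deforms $\rho\circ\sigma_0 = \Map$ to the northern triangulation $\rho\circ\sigma_1$. Hence $\Map$ is floatable.

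Given these two ingredients, suppose $\Map$ is sinkable and let $\Map'$ be a $\theta$-equivalent southern triangulation reached by a longitudinal morph. Every edge of $\Map'$ is a shortest-path arc of length less than $\pi$ between two vertices strictly below the equator, so it lies entirely in the closed southern hemisphere and in particular does not cross the equator. Applying the rotation argument above to $\Map'$ shows that $\Map'$ is floatable, and the $\theta$-equivalence invariance then gives the same for $\Map$.

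The hard part is, of course, Conjecture~\ref{C:belt} itself, which remains open. A direct attack on Conjecture~\ref{C:symmetric} that avoids the belt conjecture would need to relate the feasibility of the sinking linear program from Lemma~\ref{L:sink-LP} to that of its floating analogue (with the south face pinned at $z = 1$), and these two linear programs are not obviously equivalent. The naive candidate transformation $(x,y,z)\mapsto(x,y,-z)$ reverses the orientation of every face, and no orientation-preserving isometry of the sphere both preserves every longitude and swaps the two hemispheres, so no rigid symmetry converts one LP into the other.
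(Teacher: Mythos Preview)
The statement is a \emph{conjecture}; the paper does not prove it. The paper only asserts, without argument, that Conjecture~\ref{C:belt} would imply Conjecture~\ref{C:symmetric}. Your proposal supplies exactly that missing implication and does so correctly: the $\theta$-equivalence invariance via Lemma~\ref{L:shell-sink} is straightforward, the rotation $\rho(x,y,z)=(x,-y,-z)$ is an orientation-preserving isometry that carries longitudes to longitudes and hemispheres to hemispheres, and the claim that a shortest-path edge between two points with $z<0$ stays in the open southern hemisphere is valid (the great circle through the endpoints meets the equator in two antipodal points, and both endpoints lie on the southern semicircular arc between them, so the minor arc does too).

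So your write-up is a correct and fully fleshed-out version of the paper's one-sentence remark, and you are appropriately explicit that this is conditional on Conjecture~\ref{C:belt} and therefore not an unconditional proof. There is no alternative ``paper's own proof'' to compare against.
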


The corresponding claim about longitudinal shellability follows directly from Lemma~\ref{L:shellable}; for any unit vector $p$, the graph $\Dn\Map(p)$ is the reversal of $\Dn\Map(-p)$, and the reversal of a directed acyclic graph is a directed acyclic graph.

Two somewhat more technical conjectures concern the linear system \eqref{Eq:Eq} introduced in Theorem \ref{Th:sinksys}.  The only triangulations we have found where this linear system is singular lie on the boundary between sinkable and non-sinkable triangulations.  For example, the linear system associated with the critical Schönhardt triangulation $\bar{S}_{\pi/6}$ is singular, but for all sufficiently small $\e>0$, the linear systems for the sinkable triangulation $\bar{S}_{\pi/6-\e}$ and the unsinkable triangulation~$\bar{S}_{\pi/6+\e}$ are both non-singular.

\begin{conjecture}
If the linear system \eqref{Eq:Eq} defined by a shortest-path triangulation $\Map$ is singular, then $\Map$ is not sinkable. 
\end{conjecture}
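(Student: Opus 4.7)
My plan is to prove the contrapositive: assume $\Map$ is sinkable and show that the matrix $M$ encoding the linear system \eqref{Eq:Eq} must be non-singular. By Lemma \ref{L:sink-LP}, sinkability of $\Map$ implies feasibility of the linear program \eqref{Eq:LP}; I would first verify that this LP is also bounded, by propagating the down-face inequalities $\vol'(i,j,k)\ge 0$ outward from the three fixed north-face values $z'_i=-1$ to obtain an explicit upper bound on every $z'_i$. Hence the LP attains its maximum at some optimal point $z^*$, and Theorem \ref{Th:sinksys} gives $Mz^* = b$, where $b$ is the right-hand side of \eqref{Eq:Eq}. In particular $b$ lies in the column space of $M$, so the remaining task is to rule out a non-trivial kernel.

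Suppose for contradiction that $u \in \ker M$ with $u \ne 0$. The identity rows of $M$ force $u_i = 0$ on the three north-face vertices, and the entire line $z^* + tu$, with $t$ ranging over all reals, satisfies $Mz = b$. Let $I$ denote the closed set of real $t$ for which $z^* + tu$ is LP-feasible; only the up-face inequalities $\vol'(i,j,k)\ge 0$ can determine its endpoints. If $\mathbf{1}^\top u \ne 0$, one direction along $u$ strictly improves the objective $\sum_i z'_i$, so optimality of $z^*$ forces the corresponding endpoint of $I$ to be at $t=0$, which in turn forces some up-face $(p,q,r)$ to be tight at $z^*$ with a strictly negative derivative of $\vol'(p,q,r)$ along $u$. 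If $\mathbf{1}^\top u = 0$, then $I$ is a flat of LP-optimal solutions.

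The main obstacle is to convert either algebraic situation into a geometric contradiction with the existence of a southern weak triangulation. My strategy is to reinterpret $u$ through the bar decomposition from the proof of Theorem \ref{Th:sinksys}: at $z^*$ every non-north-face vertex lies on at least one bar (a maximal arc of degenerate down-faces), and $u$ should be realizable as a continuous flex of the bars that preserves every down-face degeneracy. Working bar by bar and applying Lemma \ref{L:weak-hn}, I would propagate this flex as an actual perturbation of the southern weak triangulation, then follow the flow along $u$ or $-u$ until either an up-face collapses and inverts---showing that the original $z^*$ cannot have been optimal in the case $\mathbf{1}^\top u \ne 0$---or some vertex is pushed up to $z'_i = 0$, contradicting the strict sinkability of $\Map$. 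The deep difficulty is precisely this realization step: singularity of $M$ is an algebraic condition insensitive to signs, whereas sinkability is a strict geometric condition, so bridging them requires a careful accounting of how cycles in $\DnVee\Map$ (which is the support of $M$ and so drives its algebraic singularity) correspond to bars in the optimum. The empirical observation that singular systems occur precisely at critical triangulations like $\bar{S}_{\pi/6}$, on the boundary of the sinkable region, strongly supports this picture, but I would expect a clean proof to require new geometric machinery beyond what the present paper develops.
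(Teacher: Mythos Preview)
The statement you are attempting to prove is listed in the paper as a \emph{conjecture} (in Section~\ref{S:outro}), not a theorem; the paper offers no proof and explicitly presents it as open. So there is no ``paper's own proof'' to compare your proposal against, and your closing sentence---that a clean proof would require machinery beyond what the paper develops---is exactly the authors' position.

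That said, your outline has a concrete gap well before the point where you concede difficulty. In the case $\mathbf{1}^\top u \ne 0$, you correctly deduce that some up-face $(p,q,r)$ must be tight at $z^*$ with strictly negative directional derivative along $u$. But this is \emph{not} a contradiction: Theorem~\ref{Th:sinksys} guarantees that every down-face is degenerate at the optimum, yet nothing prevents some up-faces from being degenerate there as well. A weak triangulation may have degenerate up-faces. Your plan to ``follow the flow until an up-face inverts'' cannot work in the improving direction, because the up-face constraint blocks you immediately at $t=0$; and in the non-improving direction you are moving \emph{away} from optimality, which yields no contradiction. In the case $\mathbf{1}^\top u = 0$ the situation is similar: the line $z^* + tu$ may be cut off by up-face constraints in both directions, leaving you with a bounded segment of optimal solutions and no contradiction in sight. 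The appeal to the bar decomposition and Lemma~\ref{L:weak-hn} does not help here, because those tools produce feasible perturbations, whereas what you need is to rule out an \emph{algebraic} kernel vector that may not correspond to any geometric motion at all.

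A smaller issue: your boundedness argument by ``propagating the down-face inequalities outward'' presumes you can reach every vertex from the north face along $\DnVee{\Map}$, but when $\Map$ is not longitudinally shellable, $\DnVee{\Map}$ contains cycles and such propagation need not terminate. A cleaner route, implicit in the proof of Lemma~\ref{L:sink-LP}, is that every feasible $z'$ defines a southern weak triangulation in which each vertex is a positive homogeneous combination of the north-face vertices, forcing $z'_i<0$ for all $i$ and hence bounding the objective by $0$.
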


\begin{conjecture}
Linear system \eqref{Eq:Eq} is \emph{generically} non-singular; either an arbitrarily small random rotation of~$\Map$ or an arbitrarily small random perturbation of its $x$- and $y$-coordinates yields a nonsingular system \eqref{Eq:Eq} with probability $1$.
\end{conjecture}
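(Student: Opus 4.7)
The plan is to reduce the conjecture to the algebraic statement that the determinant $P := \det M$ of the coefficient matrix of \eqref{Eq:Eq} is not identically zero as a polynomial in the vertex coordinates $\{x_i, y_i\}_{i=1}^n$. For a generic triangulation $\Map$, a sufficiently small random rotation or $(x,y)$-perturbation preserves all combinatorial structure of $\Map$---the identity of the north face, the set of down-faces, and each down-face's apex---and hence preserves the sparsity pattern of $M$ and the formal polynomial form of each of its entries. Thus $P$ is well defined as a polynomial of total degree $2(n-3)$ in the $2n$ coordinate variables. Once $P \not\equiv 0$ is established, the conjecture follows from the standard facts that the zero set of a nonzero polynomial has Lebesgue measure zero, and that $P$ restricted to any real-analytic subfamily (such as the $SO(3)$-orbit of $\Map$ near the identity) either vanishes identically or only on a proper analytic subvariety of measure zero.

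To analyze $P$, I~would expand via the Leibniz formula $P = \sum_\sigma \operatorname{sgn}(\sigma)\prod_k M_{k,\sigma(k)}$. Row $k$ of $M$ is either the unit vector $e_k$ (when $k \in \NorthF$, which forces $\sigma(k) = k$), or the row coming from the down-face $\AboveF{k} = (i_k, j_k, k)$, whose only nonzero entries lie in columns $\{i_k, j_k, k\}$ with diagonal value $M_{k,k} = x_{i_k} y_{j_k} - x_{j_k} y_{i_k}$. Therefore the only admissible permutations $\sigma$ fix every north-face vertex and send each remaining vertex $k$ to one of $\{i_k, j_k, k\}$; by injectivity of $\sigma$, its nonfixed points decompose into vertex-disjoint directed cycles of $\DnVee{\Map}$ restricted to non-north-face vertices. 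The identity permutation contributes the obviously nonzero polynomial
\[
I ~:=~ \prod_{k \notin \NorthF}\bigl(x_{i_k}\,y_{j_k} - x_{j_k}\,y_{i_k}\bigr),
\]
which already shows $P \not\equiv 0$ whenever $\DnVee{\Map}$ is acyclic---that is, whenever $\Map$ is longitudinally shellable (Lemma \ref{L:shellable}).

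The hard case is when $\DnVee{\Map}$ contains directed cycles, because then cycle-indexed permutations contribute additional terms that could in principle conspire to cancel $I$. The key structural asymmetry to exploit is that the diagonal factor $M_{k,k}$ involves only the $(x,y)$-variables of the \emph{two parents} of $k$ in $\DnVee{\Map}$, whereas any off-diagonal factor $M_{k,\sigma(k)}$ for $\sigma(k) \in \{i_k, j_k\}$ introduces the variables $x_k$ and $y_k$ of the apex itself. My plan is to fix a generic linear order on the $2n$ coordinate variables and identify a leading monomial $\mu$ of $I$ whose per-variable exponents match the out-degree pattern of $\DnVee{\Map}$, then verify that any cycle contribution strictly alters these exponents: for every vertex $k$ promoted from fixed point to cycle participant, the factor $M_{k,\sigma(k)}$ picks up an $x_k$ or $y_k$ that was absent from $M_{k,k}$. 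This should isolate a monomial of $I$ that appears in no other permutation's product, forcing $P \ne 0$.

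I~expect this monomial-exponent analysis to be the crux of the argument: ruling out cancellation requires a careful combinatorial coupling between the chosen variable order and the directed cycle structure of $\DnVee{\Map}$, particularly when its strongly connected components are large. A successful execution would additionally yield structural information about $P$---potentially expressing it as a product indexed by the strongly connected components of $\DnVee{\Map}$---and thereby extend the upper-triangular decomposition underlying the Awartani--Henderson construction beyond the acyclic regime.
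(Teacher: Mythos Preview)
The statement you are attempting to prove is explicitly listed in the paper as an \emph{open conjecture} (in the final section on open problems); the paper offers no proof and provides only experimental evidence.  So there is no ``paper's own proof'' to compare against---you are attacking an open problem.

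Your reduction is sound: the entries of $M$ are fixed polynomials in the $(x_i,y_i)$ once the combinatorial data (north face, apex of each down-face) are fixed, small perturbations or rotations preserve that data, and hence it suffices to show $P=\det M\not\equiv 0$.  The acyclic case is indeed immediate, since then only the identity permutation survives and $I\ne 0$.

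The gap is in the cyclic case, and it is sharper than you suggest.  Your proposed invariant---the exponent of the variables $\{x_k,y_k\}$---does \emph{not} separate the identity contribution from cycle contributions.  A short computation shows that for \emph{every} admissible permutation $\sigma$, the total degree of $\prod_k M_{k,\sigma(k)}$ in the pair $\{x_v,y_v\}$ equals the out-degree of $v$ in $\DnVee\Map$, independent of $\sigma$.  Concretely, the factor $M_{k,\sigma(k)}$ involves exactly the two vertices in $\{i_k,j_k,k\}\setminus\{\sigma(k)\}$; when $\sigma(k)\ne k$ you gain one occurrence of $k$'s variables in row $k$, but you simultaneously lose one occurrence of $v=\sigma(k)$'s variables in row $k$, and since $\sigma$ bijects the non-north vertices these gains and losses cancel vertex by vertex.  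So the sentence ``the factor $M_{k,\sigma(k)}$ picks up an $x_k$ or $y_k$ that was absent from $M_{k,k}$'' is true locally but does not yield a global exponent discrepancy.  Any monomial-isolation argument must therefore exploit the $x$--versus--$y$ split within each vertex's variables, or some other finer structure, and it is not at all clear how to do this uniformly over all directed-cycle configurations of $\DnVee\Map$.  Your proposal correctly flags this step as the crux, but as written it does not contain the idea needed to close it.
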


Ho \cite{h-chps2-73,h-chps1-73} proved that the space of all embeddings of a maximal planar graph, with a fixed outer triangular face, is topologically trivial.  Ho's result was generalized to planar triangulations with convex outer faces by Bloch, Connelly, and Henderson \cite{bch-sslhc-84} and Bloch \cite{b-scsle-85}; more recent proofs have been given by Cerf \cite{c-abcht-19} and Luo~\cite{l-sgts-22}.  Connelly, Henderson, Ho, and Starbird~\cite{chhs-prlhe-83} conjectured that every isotopy class of geodesic triangulations on \emph{any} surface $S$ with constant curvature is homotopy-equivalent to the group $\emph{Isom}_0(S)$ of isometries of~$S$ that are homotopic to the identity.  Their conjecture was recently proved both for the flat torus \cite{lwz-dsgtf-21,el-ptmme-23} and for arbitrary negative-curvature surfaces \cite{lwz-dsgtg-23}; all of these proofs rely on nontrivial extensions of Tutte and Floater's planar barycentric embedding theorem~\cite{c-crgtd-91,ggt-domam-06,hs-seshm-15,lwz-dsgtg-23}.  The only closed surface for which this conjecture remains open is the sphere!  Cairns actually announced a proof of this conjecture for the sphere in 1941 \cite{c-svgcs-41} but later retracted it \cite{c-idgc2-44,c-dprc-44}; Awartani and Henderson also posed this conjecture as a motivation for their work \cite{ah-sgts-87}.

Finally, essentially nothing is known about morphing the more general class of \emph{long geodesic embeddings}, whose edges are great-circular arcs that may be longer than a semicircle.  Even representing such embeddings is nontrivial; in particular, vertex coordinates and a rotation system do not necessarily specify a unique long geodesic embedding.  Using ad-hoc arguments, we have been able to prove that any two long geodesic embeddings of $K_4$ are connected by a continuous family of such embeddings.  Do such morphs exist for all planar graphs?  Can they be constructed efficiently?

\paragraph*{Acknowledgements.}  The authors thank Eli Kujawa for early helpful discussions, the anonymous reviewers of the SOCG version of this paper for valuable comments and suggestions, and Danny Halperin for sharing a copy of his unpublished manuscript \cite{h-esptu-08}.

\bibliographystyle{jeffe}
\bibliography{topology,compgeom,optimization}

\end{document}